\documentclass[a4paper,11pt,english]{article}
%To instruct arXiv for using pdfLatex (Must be in the first 5 lines to take effect -- no ps will be generated for download.)
\pdfoutput=1

\usepackage{ifdraft}
\ifdraft{\newcommand{\authnote}[3]{{\color{#3} {\bf  #1:} #2}}}{\newcommand{\authnote}[3]{}}

\newcommand{\gnote}[1]{\authnote{ Andr\'{a}s}{#1}{blue}}

\newif\ifcount
%\countfalse
\ifdraft{\counttrue}

\usepackage[utf8]{inputenc}
\usepackage[T1]{fontenc}
\usepackage[left=1in,top=1in,right=1in,bottom=1in]{geometry} % better than fullpage
\usepackage[english]{babel}
\usepackage{verbatim}

\usepackage{tikz}
\usepackage{amssymb}
\usepackage{mathtools}
\usepackage{bbm}
\usepackage{upgreek}
\usepackage{mleftright}\mleftright
\usepackage{multirow}
% for restating a theorem in the same formulation and with the same number:
\usepackage{amsthm} 
\theoremstyle{plain}
\usepackage{thmtools} 
\usepackage{thm-restate}

\usepackage{algorithm}
\usepackage{algcompatible}
\usepackage{float}
\usepackage{enumitem}
\makeatletter
\def\namedlabel#1#2{\begingroup
	#2%
	\def\@currentlabel{#2}%
	\phantomsection\label{#1}\endgroup
}

\usepackage{caption}
\usepackage{subcaption}

\usepackage{titling}
\setlength{\droptitle}{-5mm}  

% Load package at the end in order to avoid the redefinition of the footnote command
\usepackage[final=true,colorlinks = true,allcolors = {blue},]{hyperref}

%% our own new commands
\newcommand{\eps}{\varepsilon}

\newcommand{\ketbra}[2]{|#1\rangle\! \langle #2|}

\newcommand{\nrm}[1]{\left\lVert #1 \right\rVert}
\newcommand{\bigO}[1]{\mathcal{O}\left( #1 \right)}
\newcommand{\bigObig}[1]{\mathcal{O}\big( #1 \big)}
\newcommand{\bigOt}[1]{\widetilde{\mathcal{O}}\left( #1 \right)}

\newcommand{\diag}[1]{\mathrm{diag}\left( #1 \right)}

\newcommand{\vertiii}[1]{{\left\vert\kern-0.25ex\left\vert\kern-0.25ex\left\vert #1 
		\right\vert\kern-0.25ex\right\vert\kern-0.25ex\right\vert}}

%special symbols
\newcommand{\C}{\mathbb{C}}
\newcommand{\N}{\mathbb{N}}

\newcommand{\PM}{\mathcal{P}}

\newcommand{\pvp}{\vec{p}{\kern 0.45mm}'}
\let\oldnabla\nabla
\renewcommand{\nabla}{\oldnabla\!}

\DeclarePairedDelimiter\bra{\langle}{\rvert}
\DeclarePairedDelimiter\ket{\lvert}{\rangle}
\DeclarePairedDelimiterX\braket[2]{\langle}{\rangle}{#1 \delimsize\vert #2}
\newcommand{\underflow}[2]{\underset{\kern-60mm \overbrace{#1} \kern-60mm}{#2}}

\def\Pr{\mathrm{Pr}}

\long\def\ignore#1{}

\newtheorem{theorem}{Theorem}
\newtheorem{corollary}[theorem]{Corollary}
\newtheorem{lemma}[theorem]{Lemma}

\newtheorem*{claim*}{Claim}

\newtheorem{conj}[theorem]{Conjecture}

\usepackage{tikz,pgfplots}

% TikZ libraries `calc` needed now to tweak bracket.
\usetikzlibrary{backgrounds,fit,decorations.pathreplacing,calc}

%opening
\title{$\kern-2mm$Quadratic speedup for finding marked vertices by quantum walks$\kern-2mm$}
\author{
	Andris Ambainis\thanks{Faculty of Computing, University of Latvia. Supported by Latvian Council of
Science grant lzp-2018/1-0173 and QuantERA project QuantAlgo 680-91-034.}
	\and
	András Gilyén\thanks{QuSoft, CWI and University of Amsterdam, the Netherlands. Supported by ERC Consolidator Grant QPROGRESS and partially supported by QuantERA project QuantAlgo 680-91-034.}%\textsuperscript{\kern1.4mm,\textasteriskcentered}
	\and
	Stacey Jeffery\thanks{QuSoft and CWI, the Netherlands. Supported by an NWO Veni Innovational Research Grant under project number 639.021.752 and an NWO WISE Grant.}
	\and
	Martins Kokainis\thanks{Faculty of Computing, University of Latvia. Supported by QuantERA project QuantAlgo 680-91-034.}
}
\date{\today\vspace{-5mm}}

%%%% User commands for Section 2 / 3
\newcommand\lr[1]{\left( #1 \right)}
\newcommand\lrv[1]{\left|  #1 \right|}
\newcommand\lrb[1]{\left\lbrace #1 \right\rbrace}
\newcommand{\smallO}[1]{ {o}\left( #1 \right)}
% For Section 3:
\renewcommand{\check}{\mathtt{Check}}
\newcommand{\setup}{\mathtt{Setup}}
\newcommand{\update}{\mathtt{Update}}
\newcommand{\checkingcost}{\mathsf{C}}
\newcommand{\setupcost}{\mathsf{S}}
\newcommand{\updatecost}{\mathsf{U}}
\newcommand{\Reg}{\mathsf{R}}
\newcommand{\Hi}{\mathcal{H}}
\newcommand{\barO}{\bar{0}}
\newcommand{\psuccess} {p_{\textrm{success}}}
\theoremstyle{definition}
\newtheorem{exmp}{Example}[section]
\pgfplotsset{compat=1.13}
%%%% End of user commands for Section 2 / 3

\begin{document}
	
\maketitle 
%TC:break Part: Abstract

\begin{abstract}
A quantum walk algorithm can detect the presence of a marked vertex on a graph quadratically faster than the corresponding random walk algorithm (Szegedy, FOCS 2004). However, quantum algorithms that actually find a marked element quadratically faster than a classical random walk were only known for the special case when the marked set consists of just a single vertex, or in the case of some specific graphs. We present a new quantum algorithm for finding a marked vertex in any graph, with any set of marked vertices, that is (up to a log factor) quadratically faster than the corresponding classical random walk. 
\end{abstract}

\section{Introduction}
	
As shown by Szegedy \cite{szegedy2004QMarkovChainSearch}, quantum walks provide a quadratic speedup over classical random walks for search tasks. If a classical random walk hits a marked element in an expected number of $HT$ steps, called the \emph{hitting time}, then the quantum walk runs in time $\bigObig{\sqrt{HT}}$. However, this speedup comes with a caveat: the quantum walk does not necessarily \emph{find} a marked element, but it can \emph{detect} a deviation from the starting state caused by marked elements. This issue has been well known since Szegedy's work in 2004, yet it has eluded all attempts to solve~it. 

Several generalizations of Szegedy's framework have been proposed but they only solve this issue in restricted cases. Tulsi \cite{tulsi2008FasterQW2DGrid} showed how to solve it for the random walk on an $N\times N$ grid with exactly one marked element. Here, the classical hitting time is $HT=\bigO{N^2 \log N}$. Szegedy's algorithm detects the presence of a marked element in $\bigObig{\sqrt{HT}}=\bigObig{N \sqrt{\log N}}$ steps. Measuring the final state of Szegedy's algorithm, however, only gives the marked element with probability $\Theta(1/\log N)$. Tulsi showed how to improve this to $\Theta(1)$, with the running time remaining $\bigObig{N \sqrt{\log N}}$. 
Magniez, Nayak, Richter and Santha \cite{magniez2012HittingTimesQWvsRW} then extended this to the random walk on any vertex transitive graph with exactly one marked element. Meanwhile Magniez, Nayak, Roland and Santha~\cite{magniez2006SearchQuantumWalk} presented an alternative extension of Szegedy's work, giving a quantum algorithm for finding a marked vertex that runs in a number of steps $\bigObig{\sqrt{1/(\delta\eps)}}$, where $\delta$ is the eigenvalue gap of (the Markov chain corresponding to) the walk and $\eps$ is the probability that a vertex is initially marked. This can be as small as $\bigObig{\sqrt{HT}}$ in certain cases, but significantly larger in others.

Later, Krovi, Magniez, Ozols and Roland \cite{krovi2010QWalkFindMarkedAnyGraph} proposed a new algorithm (based on a new notion of interpolated quantum walk) that achieves a quadratic advantage for finding a marked element for a random walk on any graph $G$ with exactly one marked element. The same result was achieved by Dohotaru and H{\o}yer \cite{dohotaru2017controlledQAmp}, using a different method. 

In the general case (with multiple marked elements), the algorithm of Krovi et al.\ finds a marked element, but takes time $\bigObig{\sqrt{HT^+}}$ where $HT^+$ is the {\em extended hitting time} of the walk. $HT^+$ is a new quantity obtained by modifying the expression for $HT$ in terms of eigenvalues and eigenvectors of the walk. If there is only one marked element, then $HT^+=HT$ and this yields the quadratic advantage for the quantum walk. However, $HT^+$ may be significantly larger than $HT$ when there are multiple marked elements,\footnote{The first version of the paper by Krovi et al.~\cite{krovi2010QWalkFindMarkedAnyGraph} claimed $HT^+=HT$ for any number of marked elements but this turned out to be false, as corrected by the authors in later versions.} as we show in Section~\ref{sec:2}.

Lastly, for a two-dimensional grid, a quadratic advantage for any set of marked elements was achieved by H{\o}yer and Komeili \cite{hoyer2017GridQWMultipleMarked} using a divide-and-conquer approach. However, their approach is specific to the two-dimensional grid and does not seem to generalize even to grids in higher dimensions.

In this paper, we finally resolve the problem of finding a marked element quadratically faster (up to a log factor) compared to the classical random walk, on any graph, for any number and any arrangement of marked elements.

First in Section~\ref{sec:2} we show that the gap between $HT^+$ and $HT$ can indeed be very large. We construct an arrangement of marked elements on an $N\times N$ grid for which $HT^+=\Omega(N^2)$ but $HT=\bigO{f(N)}$ where $f$ grows to infinity arbitrarily slowly. This shows that the algorithm of Krovi et al.\ can be severely suboptimal when there are multiple marked elements. The reason for this is that their algorithm actually solves a harder problem: it samples from the stationary distribution  restricted to marked vertices (which is the uniform distribution in case of the grid). Hence, their algorithm may be slow in cases when sampling from this distribution is substantially more difficult than just finding some marked element.

We then present two new algorithms in Section~\ref{sec:quantum-walk-alg}: a simpler algorithm, which we conjecture to find a marked element in time $\bigObig{\sqrt{HT}}$, for an arbitrary arrangement of marked elements (Conjecture \ref{conj:1}) and a more complicated algorithm for which we prove that it always finds a marked element in time $\widetilde{\mathcal{O}}\big(\sqrt{HT}\big)$ (Theorem~\ref{thm:main}). Both algorithms are based on the idea of interpolated walks, but use it differently from \cite{krovi2010QWalkFindMarkedAnyGraph}. 

The first algorithm, just runs the interpolated walk for $\bigObig{\sqrt{HT}}$ steps (instead of using eigenvalue estimation to produce an eigenstate of the walk, as in \cite{krovi2010QWalkFindMarkedAnyGraph}). Based on numerical experiments, we conjecture that, for any arrangement of marked vertices, there is a choice of the interpolation parameter and a choice of running time $t=\bigObig{\sqrt{HT}}$ which results in the walk producing a marked vertex with probability $\Omega(1)$. This conjecture holds for all the examples with $HT^+\gg HT$ that we could find, which we illustrate through some numerical experiments.

The second algorithm, combines the interpolated walk with the recently invented {\em quantum fast-forwarding} technique of Apers and Sarlette \cite{apers2018QFastForwardMarkovChains}. Quantum fast-forwarding is a primitive that allows one to replace $t$ steps of a classical random walk with $\bigO{\sqrt{t}}$ steps of a quantum walk, in a certain sense. A caveat is that quantum fast-forwarding may only produce the final state with a very small success probability. However, in our application, it succeeds with probability~$\widetilde{\Omega}(1)$. This is shown by an insightful argument that interprets the success probability of quantum fast-forwarding in terms of the classical random walk. Namely, it corresponds to the probability that the classical random walk, started in a random unmarked vertex, visits a marked vertex after $t$ steps, but returns to an unmarked vertex after $t$ additional steps. This probability can be tuned to be $\widetilde{\Omega}(1)$ by adjusting the interpolation parameter of the walk. 
	
\section{Preliminaries}

\subsection{Markov chains and random walks}

For a random variable $Z$ and probability distribution $\rho$, we will use $Z\sim \rho$ to indicate that $Z$ is distributed according to $\rho$.

A sequence of random variables $Y=(Y_i)_{i=0}^{\infty}$ is a Markov chain if for all $i>0$,
$$\Pr(Y_i=y_i|Y_0=y_0,\dots,Y_{i-1}=y_{i-1})=\Pr(Y_i=y_i|Y_{i-1}=y_{i-1}).$$ 
A (time-independent) Markov chain on a discrete state space $X$ with $|X|=n$ is specified by an $n\times n$ row-stochastic matrix $\PM$, whose $xy$-entry $\PM_{xy}$ denotes the probability that the Markov chain makes a transition from state $x\in X$ to the state $y\in X$ in one step. For a distribution $\rho$ on $X$, we say that $Y$ is a Markov chain evolving according to $\PM$ starting from $\rho$ if $Y_0\sim \rho$, and for all $i>0$ and $x,y\in X$, $\Pr(Y_i=y|Y_{i-1}=x) = \PM_{xy}$. We will left-multiply with probability (row) vectors to follow the common conventions in the literature for Markov chains, so if $Y_0\sim \rho$, then $Y_i\sim \rho\PM^i$, for any $i\geq 0$.

We say that $\PM$ is \emph{ergodic} if for a large enough $t\in \N$ all elements of $\PM^t$ are non-zero. For an ergodic $\PM$ there exists a unique stationary distribution $\uppi$ such that $\uppi \PM = \uppi$, and we define the \emph{time-reversed} Markov chain as $\PM^*:=\diag{\uppi}^{-1}\cdot\PM^T\cdot\diag{\uppi}$. We say that $\PM$ is \emph{reversible} if it is ergodic and $\PM^*=\PM$. Note that reversibility can be equivalently expressed by the \emph{detailed-balance} equations:
\begin{equation}\label{eq:balance}
\forall x,y \in X \colon \uppi_x \PM_{xy}=\uppi_y \PM_{yx},
\end{equation}
intuitively meaning that in the stationary distribution for each pair of states the probability of a transition between the states in both directions is that same. Moreover, it is easy to see that if $\PM$ is reversible then so is $\PM^t$ for every $t\in\N$.

For an ergodic Markov chain $\PM$, we define the discriminant matrix $D$ such that its $xy$-entry is $\sqrt{\PM_{xy}\PM^*_{yx}}$. It is easy to see that
\begin{equation}\label{eq:discriminant}
D=\diag{\uppi}^{\frac12}\cdot\PM\cdot\diag{\uppi}^{-\frac12}.
\end{equation}
This form has several important consequences. First of all the spectra of $\PM$ and $D$ coincide, and moreover, the vector $\sqrt{\uppi}$, that we get from $\uppi$ by taking the square root element-wise, is a left eigenvector of $D$ with eigenvalue $1$. Also from the definition $D_{xy}=\sqrt{\PM_{xy}\PM^*_{yx}}$ it follows that for reversible Markov chains, $D$ is a symmetric matrix, and therefore its singular values and eigenvalues coincide up to sign.

Reversible Markov chains are equivalent to random walks on weighted graphs; for a survey on the topic see Lovász~\cite{lovasz1993RandomWalksSurvey}. They have been used to design search algorithms in various contexts. Specifically, if $\PM$ is a random walk on a state space $X$, and $M\subset X$ is a set of \emph{marked} vertices, then a randomized algorithm that begins in any vertex $x\in X$ and repeatedly makes a step of the walk, while checking whether the current state is marked, will eventually find some $x\in M$ (assuming $M$ is non-empty). When the algorithm starts in the stationary distribution of $\PM$, the expected number of steps needed before a marked vertex is reached is called the \emph{hitting time}, and is denoted $HT=HT(\PM,M)$.  Let $Z$ be the smallest number such that $Y_Z\in M$, where $Y$ is a Markov chain evolving according to $\PM$ starting from $\uppi$, then $HT(\PM,M)=\mathbb{E}(Z)$. Moreover, by Markov's inequality, for any positive real number $c$ we have $\Pr(Z>cHT(\PM,M))\leq \frac{1}{c}$. 

Thus, for any reversible Markov chain $\PM$ on $X$, and $M\subset X$, if $\mathsf{C}$ is the complexity of checking whether $x\in M$ (for an arbitrary $x\in X$), $\sf U$ is the cost of taking one step of the walk $\PM$, and $\sf S$ is the cost of sampling according to the stationary distribution, then there is a randomized algorithm that finds a marked vertex with high probability in complexity $\bigO{{\sf S}+HT({\sf U}+{\sf C})}$. In the next subsection, we will consider quantum analogues of this procedure.  

For simplicity in the rest of the paper we will work with reversible time-independent Markov chains, unless otherwise stated.

\subsection{Interpolated walks and quantum walk search algorithms}

\paragraph{Interpolated walks.} Some previous quantum walk algorithms build on the notion of \emph{interpolated walk}. Intuitively speaking such a walk works as follows: first it checks whether the current node is marked. It the node is \emph{unmarked}, then it performs a normal step of the walk; but if it is \emph{marked}, then it performs a normal walk step only with probability $1-s$, and with probability $s$ it stays at the current marked node. 

Let us fix some reversible Markov chain $\PM$ and marked set $M\subset X$. 
We first define the \emph{absorbing} walk operator $\PM'$ as the modified Markov chain that, once it hits the set of marked vertices $M$, stays where it is.
If we arrange the states of $X$ so that the unmarked states $U := X \setminus M$ come first, matrices $\PM$ and $\PM'$ have the following block structure:
\begin{align*}
\PM :=\left(\begin{array}{cc} \PM_{UU} & \PM_{UM} \\ \PM_{MU} & \PM_{MM} \end{array}\right), & &
\PM' :=\left(\begin{array}{cc} \PM_{UU} & \PM_{UM} \\ 0 & I \end{array}\right).
\end{align*}
We define the \emph{interpolated walk} operator, for $s\in [0,1)$, as:
\begin{equation}\label{eq:interpolChainDef}
\PM(s):=(1 - s)\PM + s\PM',
\end{equation}
\emph{staying} at a marked vertex with probability $s$.
We denote the corresponding discriminant matrix by $D(s)$. Let $\Pi_M$ be the projector onto marked vertices and let $\Pi_U:=I-\Pi_M$ be the projector onto unmarked vertices. Then we define $\uppi_U:=\uppi \Pi_U$ and $\uppi_M:=\uppi \Pi_M$ as the row vectors that are obtained by restricting $\uppi$ to sets $U$ and  $M$, respectively. We denote the probability that an element is marked in the stationary distribution by $p_M:=\sum_{x\in M}\uppi_x$. Then $\uppi':=\uppi_M/p_M$ is a stationary distribution of $\PM'$.\footnote{In fact, any distribution with support only on marked states is stationary for $\PM'$.} In analogy to the definition of $\PM(s)$ in Eq.~\eqref{eq:interpolChainDef}, let $\uppi(s)$ be a convex combination of $\uppi$ and $\uppi'$, appropriately normalized:
\begin{equation}\label{eq:interpolDistDef}
\uppi(s):=\frac{(1 - s)\uppi + s\uppi'}{(1 - s) + sp_M}
=\frac{1}{1 - s(1-p_M)}((1 - s)\uppi_U + \uppi_M).
\end{equation}
Krovi et al.~\cite{krovi2010QWalkFindMarkedAnyGraph} showed that for any $s\in [0,1)$, $\PM(s)$ is a reversible ergodic Markov chain with unique stationary distribution $\uppi(s)$.

\paragraph{Quantum walk operator.} For a (reversible) Markov chain $\PM$, let $V(\PM)$ be a unitary such that\footnote{Note that here we swapped the role of the two registers compared to some previous works, in order to make the resemblance with block-encodings~\cite{chakraborty2018BlockMatrixPowers,gilyen2018QSingValTransf} more apparent, see Section~\ref{subsec:fast-forwarding} for more details.}
$$\forall x\in X\colon
V(\PM)\ket{\barO}\ket{x}=\sum_{y\in X}\sqrt{\PM_{xy}}\ket{y,x},$$
where $\ket{\barO}$ is some fixed reference state. 
The action of $V(\PM)$ is analogous to taking one step of the random walk $\PM$ in superposition.
Let $\textsc{Shift}$ be defined by the action
$\ket{x,y}\mapsto\ket{y,x},$
for all $x,y\in X$, and let $\textsc{Ref}= (2\ketbra{\barO}{\barO}-I)\otimes I$. The corresponding \emph{quantum walk operator} is
$$W(\PM):=V^\dagger\!(\PM)\, \textsc{Shift}\, V(\PM)\, \textsc{Ref}.$$
Note that $\bra{\barO}\bra{x}W(\PM)\ket{\barO}\ket{y}=\sqrt{\PM_{xy}\PM_{yx}}=D_{xy}$.

\paragraph{Extended hitting time.} For any $s\in [0,1)$, suppose $D(s)$ has eigenvalue decomposition $\sum_{k=1}^n\lambda_k(s)\ket{v_k(s)}\bra{v_k(s)}$, with $\lambda_n(s)=1$, so $\lambda_k(s)<1$ for all $k<n$. Then we can define 
$$HT(s)=\frac{1}{1-p_M}\sum_{k=1}^{n-1}\frac{|\braket{v_k(s)}{\sqrt{\uppi_U}}|^2}{1-\lambda_k(s)},
\quad\mbox{and}\quad
HT^+(\PM,M):=\lim_{s\rightarrow 1}HT(s),$$
where $\ket{\sqrt{\uppi_U}}=\sum_{x\in U}\sqrt{\uppi_x}\ket{x}$. We call $HT^+$ the \emph{extended hitting time}.
To put this definition into context, note that one can prove $HT(\PM,M)=\frac{1}{1-p_M}\sum_{k=1}^{n-|M|}\frac{|\braket{v_k'}{\sqrt{\uppi_U}}|^2}{1-\lambda_k'}$, where $\lambda_k'$ ranges over the ($\neq 1$) eigenvalues of $D(1)$ and $\ket{v_k'}$ are the corresponding eigenvectors. For a proof see, e.g., \cite[Proposition 9]{krovi2010QWalkFindMarkedAnyGraph}.

\paragraph{Quantum walk search algorithms.}
We introduce the following black-box operations:
\begin{itemize}
	\item $\check(M)$: checks if a given vertex is marked
	by mapping   $\ket{x} \ket{b}$ to $\ket{x} \ket{b}$ if $x \notin M$ and $\ket{x} \ket{b \oplus 1}$ if $x \in M$, where $\ket{x}$ is the vertex register and $b \in \lrb{0,1}$;
	\item $\setup(\PM )$: construct the superposition $\ket{\sqrt \uppi}  = \sum_{x \in X} \sqrt{\uppi_x} \ket{x}$;
	\item $\update(\PM)$: perform one update step. More precisely implement (separately, controlled versions of\footnote{This is mostly needed for implementing interpolated versions of the quantum walk.}) $\textsc{Shift}$, $\textsc{Ref}$, and $V(\PM)^{\pm1}$.
\end{itemize}
Each of these operations has a corresponding associated implementation cost, which we denote by $\checkingcost$, $\setupcost$, and $\updatecost$, respectively. 

For implementing the interpolated quantum walk we define a modified version of the update operator, which is a direct quantum analogue to the interpolated classical update: if the current vertex is marked flip a coin and do noting when the result is ``heads'', otherwise proceed as usually. Accordingly the modified quantum update operator $V(\PM,s)$ for all $x\in U$ acts as $I\otimes V(\PM)$ on the initial state $\ket{0}\ket{\barO}\ket{x}$, and for $x\in M$ acts as $\ket{0}\ket{\barO}\ket{x}\mapsto \sqrt{1-s}\ket{0}V(\PM)\ket{\barO}\ket{x} + \sqrt{s}\ket{1}\ket{\barO}\ket{x}$. We define the interpolated quantum walk operator as 
\begin{equation}
W(s):=V^\dagger\!(\PM,s) \,\textsc{Shift}'\, V(\PM,s)\, \textsc{Ref}',
\end{equation}
where $\textsc{Shift}':=\ketbra{0}{0}\otimes\textsc{Shift}+\ketbra{1}{1}\otimes I$ and $\textsc{Ref}':=(2\ketbra{0}{0}\otimes\ketbra{\barO}{\barO}-I)\otimes I$. It is easy to see that
\begin{equation}\label{eq:blockWalk}
\bra{0}\bra{\barO}\bra{x}W(s)\ket{0}\ket{\barO}\ket{y}=D_{xy}(s).
\end{equation}
Note that $W(s)$ can be implemented\footnote{We note that \cite[Appendix B.2]{krovi2010QWalkFindMarkedAnyGraph} also describes a way to implement the interpolated quantum walk operator with similar complexity but additionally require (query) access to the diagonal entries of $\PM$.} for any $s \in [0,1)$ in cost of order $\checkingcost+\updatecost$, the following way. First check whether $x\in X$ is marked, and if it is, then apply the map $\ket{0}\mapsto \sqrt{1-s}\ket{0} + \sqrt{s}\ket{1}$ to the first qubit. Controlled by the first qubit's state being $\ket{0}$ apply $V(\PM)$ to the last two registers.

While a classical random walk can find a marked vertex in complexity\footnote{We note that in the classical case, $\sf S$ can be replaced with the cost of \emph{classically} sampling from $\uppi$, and $\sf U$ with the cost of classically sampling a neighbour of the current vertex. These classical sampling operations may be cheaper than $\setup$ and $\update$, but in practice, they are often the same.} $\bigO{{\sf S}+HT({\sf U}+{\sf C})}$, Krovi et al.~\cite{krovi2010QWalkFindMarkedAnyGraph} showed that using the the quantum walk $W(s)$ one can find a marked vertex in complexity $\bigObig{{\sf S}+\sqrt{HT^+}({\sf U}+{\sf C})}$. In Section~\ref{sec:2}, we show that $HT^+$ may be much larger than $HT$, but then in Section~\ref{sec:fast-forwarding}, we show that in fact, a quantum algorithm can find a marked vertex in complexity $\bigOt{{\sf S}+\sqrt{HT}({\sf U}+{\sf C})}$, see Theorem~\ref{thm:main}. (From now on for simplicity we will just write $\ket{\barO}$ instead of $\ket{0}\ket{\barO}$ when we work with interpolated quantum walks $W(s)$.)

\section{Counterexample with \texorpdfstring{$HT^+\gg HT$}{HT+ >> HT}}\label{sec:2}

A torus is a graph containing $ n = N^2 $ vertices organized in $ N $ rows and $ N $ columns; there is a vertex $ (x_1,x_2) $ for all $ x_1,x_2\in \{0,1,\ldots, N-1\} $. A vertex $ (x_1,x_2) $ has four neighbours, $ (x_1+1,x_2) $, $ (x_1-1,x_2) $, $ (x_1,x_2+1) $ and $ (x_1,x_2-1) $, where the addition is modulo $ N $.
To prevent the graph from being bipartite, we add a self-loop at each vertex,  so that at any vertex the random walker moves to any of  the four neighbours with probability 0.2 and  stays at the same vertex  also with probability 0.2.

We start by observing that the extended hitting time $ HT^+ $ in the case of a torus can be lower bounded as follows. 
\begin{restatable}{lemma}{htpLB}\label{lem:htpLB}
	Let $ M  \subset  \lrb{0,1,\ldots,N-1}^2  $ be a set of marked   vertices of the $ N\times N $ torus.  Let $ m= \vert M \vert$, $ u = N^2 - m  $ and $ \omega =  \exp(2\pi \mathrm{i} / N) $. Then
	\vskip-3mm
	\begin{equation}\label{eq:L100}
	HT^+   \geq  \frac{5}{4} \frac{N^2}{m^2 u} \frac{\bigg|\vcenter{\hbox{$\sum^{\rotatebox[origin=c]{90}{\kern1.2mm}}\limits_{(x_1,x_2) \in M}  \omega^{x_1}$}}\bigg|^2}{\sin^2 \frac{\pi}{N}}.
	\end{equation}
\end{restatable}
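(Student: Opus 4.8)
The plan is to bound $HT^+=\lim_{s\to 1}HT(s)$ from below by applying Cauchy--Schwarz to the spectral formula for $HT(s)$, using an $s$-dependent test vector that is allowed to \emph{diverge} on the marked set as $s\to 1$. This divergence is the crucial point: the ``extra'' mass of $HT^+$ over $HT$ comes from eigenvalues of $D(s)$ that creep up to $1$ as $s\to 1$, and only a test vector growing like $(1-s)^{-1/2}$ on $M$ can resonate with them; a fixed test vector would merely reproduce the far weaker estimate $HT^+\ge HT/(1-p_M)$.

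Concretely, first I would record the variational inequality obtained by expanding $\ket{\sqrt{\uppi_U}}$ and any $\ket w\perp\ket{v_n(s)}$ in the eigenbasis of $D(s)$ and using Cauchy--Schwarz with weights $1-\lambda_k(s)$: for every $s<1$,
\[
HT(s)\ \ge\ \frac{1}{1-p_M}\cdot\frac{|\braket{w}{\sqrt{\uppi_U}}|^2}{\bra{w}(I-D(s))\ket{w}}.
\]
On the torus $\uppi$ is uniform, so by \eqref{eq:discriminant} the discriminant is $D=\PM$, and from \eqref{eq:interpolChainDef} and \eqref{eq:interpolDistDef} one computes the block form
\[
I-D(s)=\begin{pmatrix} I-\PM_{UU} & -\sqrt{1-s}\,\PM_{UM} \\ -\sqrt{1-s}\,\PM_{MU} & (1-s)(I-\PM_{MM})\end{pmatrix}.
\]
Now take $\ket{w(s)}:=\ket{w_U}+(1-s)^{-1/2}\ket{w_M}$, where $\ket{w_U}$ is supported on $U$, $\ket{w_M}$ on $M$, and $\sum_{x\in M}(w_M)_x=0$, and set $\ket\psi:=\ket{w_U}+\ket{w_M}$. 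The orthogonality condition forces $\braket{v_n(s)}{w(s)}=\mathcal{O}(\sqrt{1-s})$, so replacing $\ket{w(s)}$ by its projection onto $\ket{v_n(s)}^{\perp}$ changes the numerator above only by $\mathcal{O}(1-s)$; moreover a one-line block computation shows the denominator $\bra{w(s)}(I-D(s))\ket{w(s)}$ is in fact independent of $s$ and equals $\bra{\psi}(I-\PM)\ket{\psi}$, while the numerator tends to $|\braket{\psi}{\sqrt{\uppi_U}}|^2$ (only the $U$-part of $\psi$ enters this inner product). Letting $s\to 1$ therefore yields, for \emph{every} $\ket\psi$ with $\sum_{x\in M}\psi_x=0$,
\[
HT^+\ \ge\ \frac{1}{1-p_M}\cdot\frac{|\braket{\psi}{\sqrt{\uppi_U}}|^2}{\bra{\psi}(I-\PM)\ket{\psi}}.
\]

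It then remains to pick a good $\ket\psi$. The torus walk is diagonalized by Fourier vectors; the ones I need are $\ket{\mathbf 1}=\sum_x\ket x$ (eigenvalue $1$, so $\ket{\sqrt\uppi}=\frac1N\ket{\mathbf 1}$) and $\ket{\varphi}:=\frac1N\sum_{(x_1,x_2)}\omega^{x_1}\ket{x_1,x_2}$, which is orthogonal to $\ket{\mathbf 1}$ and has eigenvalue $\lambda$ with $1-\lambda=\frac25(1-\cos\frac{2\pi}{N})=\frac45\sin^2\frac\pi N$. I would take $\ket\psi:=\ket\varphi-\tfrac{S}{Nm}\ket{\mathbf 1}$ with $S:=\sum_{(x_1,x_2)\in M}\omega^{x_1}$, the coefficient $S/(Nm)$ being exactly what makes $\sum_{x\in M}\psi_x=0$. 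Since $(I-\PM)\ket{\mathbf 1}=0$ this gives $\bra{\psi}(I-\PM)\ket{\psi}=(1-\lambda)\braket{\psi}{\varphi}=1-\lambda$, and a short computation using $\sum_{x\in U}\omega^{x_1}=-S$ and $u+m=N^2$ gives $\braket{\psi}{\sqrt{\uppi_U}}=-\overline{S}/m$, hence $|\braket{\psi}{\sqrt{\uppi_U}}|^2=|S|^2/m^2$. Substituting $1-p_M=u/N^2$ and $1-\lambda=\frac45\sin^2\frac\pi N$ into the displayed bound produces exactly \eqref{eq:L100}.

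The step I expect to be the main obstacle is the passage to the limit in the second paragraph: one must argue carefully that test vectors diverging like $(1-s)^{-1/2}$ on $M$ are admissible and genuinely pick up the near-$1$ eigenvalues of $D(s)$ — it may be cleanest to first identify $HT^+$ as $HT/(1-p_M)$ plus a non-negative ``resonant'' correction (coming precisely from those eigenvalues) and then check that our single test direction captures part of that correction. Everything after that — the $s$-independence of the denominator and the Fourier computation — is routine.
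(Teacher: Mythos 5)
Your proposal is correct, but it takes a genuinely different route from the paper. The paper's proof (Appendix~\ref{app:A}) invokes \cite[Theorem~17]{krovi2010QWalkFindMarkedAnyGraph} to get the exact identity $HT^+ = p_M^{-2}HT(0)$, diagonalizes the torus walk in the Fourier (BCCB) eigenbasis, writes $HT(0)$ as an explicit sum over all nontrivial modes $(j,k)$, and then simply drops every nonnegative term except $(j,k)=(1,0)$. You instead derive a self-contained variational lower bound on $HT^+$ directly from its definition as $\lim_{s\to 1}HT(s)$, via Cauchy--Schwarz with an $s$-rescaled test vector $\ket{w(s)}=\ket{w_U}+(1-s)^{-1/2}\ket{w_M}$, and then apply it with the single Fourier mode. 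Your approach has the advantage of not depending on the Krovi et al.\ identity and of producing a clean variational principle for $HT^+$ valid on any reversible chain (with the constraint $\sum_{x\in M}\psi_x=0$); the paper's approach has the advantage of an exact spectral formula from which one could extract sharper bounds by keeping more modes. Both recover the same constant $\tfrac{5}{4}$ because both ultimately isolate the $(1,0)$ mode.

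One remark: the obstacle you flag in your last paragraph (the passage to the limit) is actually unproblematic under your parameterization. Since $\ket{\sqrt{\uppi_U}}$ is supported on $U$, the numerator $\braket{w(s)}{\sqrt{\uppi_U}}$ is \emph{exactly} $\braket{\psi}{\sqrt{\uppi_U}}$, independent of $s$; the denominator $\bra{w(s)}(I-D(s))\ket{w(s)}=\bra{\psi}(I-\PM)\ket{\psi}$ is exactly $s$-independent by the block computation; and projecting onto $\ket{v_n(s)}^{\perp}$ leaves the quadratic form untouched (since $(I-D(s))\ket{v_n(s)}=0$) and perturbs the numerator only by $\mathcal{O}(1-s)$, because both $\braket{v_n(s)}{w(s)}$ and $\braket{v_n(s)}{\sqrt{\uppi_U}}$ are $\mathcal{O}(\sqrt{1-s})$. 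So there is no need for the suggested detour through a ``resonant correction'' decomposition of $HT^+$; the limit is a one-liner.
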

\noindent The proof is deferred to Appendix \ref{app:A}.

Next we describe an example of a marked set whose extended hitting time  can be much larger than the hitting time.
\begin{lemma}\label{lem:htpGapHT}
	Suppose  that positive integers $ d_1, k_1, d, N $  satisfy the following requirements:   
	\begin{enumerate}[label=(C\arabic*),itemsep=0mm]
		\item \label{eq:S2C1} $ k_1 d_1 = \smallO{N} $;  
		\item \label{eq:S2C2} $ N = o(k_1 d) $;
		\item \label{eq:S2C3} $ d^2 \log d = \smallO{ N^2} $; 
		\item \label{eq:S2C4} $ d_1 $  is a divisor of $ d  $ and $ d  $ is a divisor of $ N $.
	\end{enumerate}
	Define a marked set $ M $ on the $ N \times N $ torus as $ M_1 \cup M_2 $, where
	\[ 
	M_1 = \lrb{ (j_1 d_1,  j_2 d_1)  \ \vline\    0 \leq  j_1, j_2 \leq k_1-1  }, 
	\quad
	M_2 = \lrb{ (j_1 d,  j_2 d)  \ \vline\    0 \leq  j_1, j_2 < N/d }.
	\]
	Then  the extended and classical hitting times for the set $ M $ satisfy
	\[  
	HT^+ = \Omega(N^2)
	\quad\text{and}\quad
	HT  = \bigO{d^2\log d}  = \smallO{HT^+},
	 \]
	 respectively.
\end{lemma}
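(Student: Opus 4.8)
The plan is to establish the two bounds separately, treating the $HT^+$ lower bound via Lemma~\ref{lem:htpLB} and the $HT$ upper bound by a direct combinatorial/probabilistic estimate on the torus random walk.

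For the lower bound $HT^+ = \Omega(N^2)$, I would apply Lemma~\ref{lem:htpLB} with a suitable choice: the dominant contribution should come from the sub-lattice $M_2 = \{(j_1 d, j_2 d)\}$. Restricting attention to this piece (or bounding the full exponential sum from below by isolating its contribution), the inner sum $\sum_{(x_1,x_2)\in M} \omega^{x_1}$ over $M_2$ is a product of a geometric sum in $j_1$ and the count $N/d$ of values of $j_2$; since $\omega^{d}$ is a primitive $(N/d)$-th root of unity, $\sum_{j_1=0}^{N/d-1}\omega^{j_1 d}=0$, so $M_2$ alone contributes nothing and one must instead exploit the interplay between $M_1$ and $M_2$, or pick the free phase more cleverly. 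Concretely, I expect the right move is to observe that $\omega^{x_1}$ summed over $M_1$ (a $k_1\times k_1$ grid with spacing $d_1$) gives $k_1 \cdot \frac{1-\omega^{k_1 d_1}}{1-\omega^{d_1}}$, which by condition~\ref{eq:S2C1} ($k_1 d_1 = o(N)$) has $\omega^{k_1 d_1}\approx 1$ with $|1-\omega^{k_1 d_1}| \approx 2\pi k_1 d_1/N$ and $|1-\omega^{d_1}|\approx 2\pi d_1/N$, so the ratio is $\approx k_1$, giving $|\sum_{M_1}\omega^{x_1}| = \Theta(k_1^2)$. Meanwhile $M_2$ contributes $0$ to this particular sum (wrong frequency), so $|\sum_M \omega^{x_1}| = \Theta(k_1^2)$. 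Plugging into \eqref{eq:L100} with $m = |M| = \Theta(k_1^2)$ (since $|M_2| = (N/d)^2 = o(k_1^2)$ by \ref{eq:S2C2}, as $N/d = o(k_1)$), $u = \Theta(N^2)$, and $\sin^2(\pi/N) = \Theta(1/N^2)$, we get $HT^+ \gtrsim \frac{N^2}{m^2 u}\cdot \frac{m^2}{1/N^2} \cdot N^2$-type scaling; I would carefully track constants to land on $HT^+ = \Omega(N^2)$. I should double-check whether it is $M_1$ or $M_2$ whose density governs $m$ — the point of $M_2$ is presumably to make $HT$ \emph{small} while $M_1$ makes $HT^+$ \emph{large}.

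For the upper bound $HT = \bigO{d^2\log d}$, the idea is that $M_2$ alone already makes the walk hit quickly: $M_2$ is a $(N/d)\times(N/d)$ evenly spaced sub-lattice with spacing $d$, so starting from the stationary (uniform) distribution, the walker is within a $d\times d$ block containing a marked vertex, and a lazy random walk on the torus covers (or at least hits a fixed target within) a $d\times d$ region in time $\bigO{d^2 \log d}$ — this is the standard hitting/cover-time bound for the 2D torus walk, where the $\log d$ factor is the well-known logarithmic correction for the two-dimensional walk. Formally, I would fold the $N\times N$ torus onto a $d\times d$ torus via the quotient by the spacing-$d$ sub-lattice; the projected walk is again a lazy walk on the $d\times d$ torus, $M_2$ projects to the single vertex $(0,0)$, and hitting $M$ on the big torus is implied by hitting $(0,0)$ on the small torus. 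The hitting time of a fixed vertex from stationarity on the $d\times d$ lazy torus walk is $\Theta(d^2\log d)$, and condition~\ref{eq:S2C4} ($d \mid N$) is exactly what makes this quotient well-defined. Since $HT$ only decreases when we enlarge $M$ (adding $M_1$), this gives $HT = \bigO{d^2\log d}$. Finally, $d^2\log d = \smallO{N^2} = \smallO{HT^+}$ by condition~\ref{eq:S2C3} combined with the already-established $HT^+ = \Omega(N^2)$.

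The main obstacle I anticipate is the $HT^+$ lower bound: getting a clean $\Omega(N^2)$ out of \eqref{eq:L100} requires correctly identifying which frequency $\omega^{x_1}$ to use (here the natural choice $\omega = e^{2\pi i/N}$ is fixed, but in the proof of Lemma~\ref{lem:htpLB} there may be freedom to choose which Fourier mode to test, and one wants the mode that $M_1$ resonates with but $M_2$ does not), and then carefully estimating the resulting incomplete geometric sum and verifying that $m = |M|$ is dominated by $|M_1| = k_1^2$ rather than $|M_2|$. Conditions \ref{eq:S2C1} and \ref{eq:S2C2} are precisely tuned for these two estimates, so the structure should work out, but the constant-tracking (to get $\Omega(N^2)$ rather than, say, $\Omega(N^2/\mathrm{polylog})$) is the delicate part. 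The $HT$ bound, by contrast, should be routine given standard 2D-torus hitting-time estimates and the quotient trick.
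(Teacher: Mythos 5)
Your overall approach coincides with the paper's: for $HT^+$, apply Lemma~\ref{lem:htpLB}, estimate the exponential sum $\sum_{x\in M}\omega^{x_1}$ by showing $M_1$ contributes $\Theta(k_1^2)$ while $M_2$ contributes $0$ (since $d\mid N$), and for $HT$, reduce to $M_2$ alone, fold the $N\times N$ torus onto a $d\times d$ torus, and quote the standard $\bigO{d^2\log d}$ hitting-time bound for a single target. Your concern about which Fourier mode to test is moot: Lemma~\ref{lem:htpLB} already isolates the $(j,k)=(1,0)$ mode (that is what the $\omega^{x_1}$ in \eqref{eq:L100} is), and this is indeed the mode that $M_1$ resonates with and $M_2$ annihilates. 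The $HT$ half of your argument matches the paper essentially line for line (monotonicity under shrinking the marked set, quotient to the $d\times d$ torus, cite the $\Theta(d^2\log d)$ bound).

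There is one concrete gap in your $HT^+$ argument. You pass from ``$|\sum_{M_1}\omega^{x_1}|=\Theta(k_1^2)$ and $\sum_{M_2}\omega^{x_1}=0$'' directly to ``$|\sum_{M}\omega^{x_1}|=\Theta(k_1^2)$,'' which tacitly treats $M=M_1\cup M_2$ as a disjoint union. But $M_1\cap M_2\neq\emptyset$ (precisely because $d_1\mid d$), so the correct identity is $\sum_M = \sum_{M_1}+\sum_{M_2}-\sum_{M_1\cap M_2}$. The paper shows $M_1\cap M_2=\{(j_1 d, j_2 d):0\le j_1,j_2<k\}$ with $k=\lceil k_1 d_1/d\rceil$, evaluates that sum as another incomplete geometric series, and bounds its magnitude by $\Theta(k^2)=\Theta(k_1^2 d_1^2/d^2)=o(k_1^2)$, using that $d_1=o(d)$ (which follows from \ref{eq:S2C1} and \ref{eq:S2C2}); only then does the reverse triangle inequality deliver $|\rho|=\Omega(k_1^2)$. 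Your conclusion survives because the overlap term is indeed lower order, but your proof as written skips the inclusion--exclusion step that justifies this, and that step uses conditions \ref{eq:S2C1}--\ref{eq:S2C2} in a way your sketch does not surface.
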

In Figure \ref{fig:L1}   an illustration with $ d_1=1 $, $ k_1=15 $, $ d=6 $ and $ N=36 $ is depicted, with different colors for   $ M_1 \setminus M_2 $, $ M_2 \setminus M_1 $ and $ M_1 \cap M_2 $.

An example of   parameters satisfying \ref{eq:S2C1}-\ref{eq:S2C4}  is $ d_1  = 1 $, $ k_1 = a\,  2^{a^2}$,  $ d =  a^2 $ and $ N=a^2\, 2^{a^2} $, for an  integer $ a>1 $. For such parameters  Lemma~\ref{lem:htpGapHT} implies bounds $ HT = \bigO {\log^2 N \log \log N} $ and $ HT^+ = \Omega(N^2) $, thus there is a  $ {\widetilde{\Omega}\left( N^2 \right)} $ gap between  the extended hitting time $ HT^+ $ and  the classical hitting time $ HT $.

\begin{proof}[Proof of Lemma \ref{lem:htpGapHT}]
	
	Notice that the sets $ M_2 $ and $ M_1 $ overlap, since $ d_1 | d $ by \ref{eq:S2C4}.  The set $  M  $ consists of $ k_1 ^2 $ vertices forming a small, dense subgrid $ M_1 $, and the remaining marked vertices of $ M_2 $ forming a sparser subgrid in the rest of the torus.  

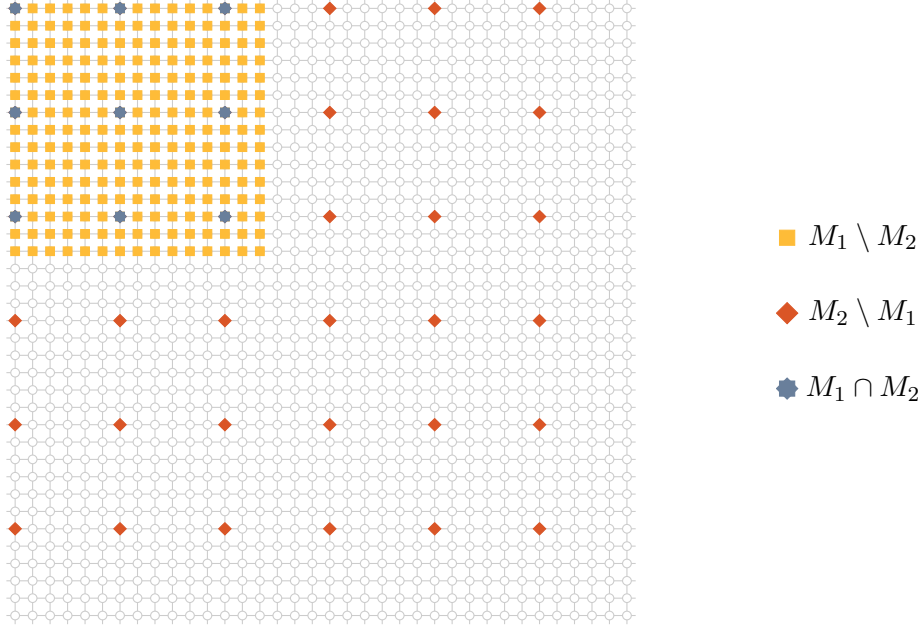
\begin{figure}[ht]
	\centering

	\definecolor{light-gray}{gray}{.8}
	\definecolor{reddish}{RGB}{218,85,38}
	\definecolor{yellowish}{RGB}{254,188,56}
	\definecolor{blueish}{RGB}{105,127,155}

	%%% First version of the grid drawing: %%%
	%
	%\begin{tikzpicture}[scale=.15]
	%\filldraw[color=yellowish] (0,35) rectangle (14,21);
	%
	%\foreach \x in {0,6,...,35}{
	%	\foreach \y in {0,6,...,35}{
	%		\filldraw[color=reddish] (\x,35-\y) rectangle (\x+1,34-\y);
	%	}
	%}
	%
	%\foreach \x in {0,6,...,12}{
	%	\foreach \y in {0,6,...,12}{
	%		\filldraw[color=blueish] (\x,35-\y) rectangle (\x+1,34-\y);
	%	}
	%}
	%
	%\foreach \x in {0, 1, ..., 35} \draw[color=gray](\x,0)--(\x,35);
	%\foreach \x in {0, 1, ..., 35} \draw[color=gray](0,\x)--(35,\x);
	%
	%\end{tikzpicture}

	\begin{tikzpicture}
	\node at (0,0) {
	\begin{tikzpicture}[scale=.23]
	
	\foreach \x in {0, 1, ..., 35} \draw[color=light-gray](\x,-.5)--(\x,35.5);
	\foreach \y in {0, 1, ..., 35} \draw[color=light-gray](-.5,\y)--(35.5,\y);
	
	\foreach \x in {0,1,...,35}{
		\foreach \y in {0,1,...,35}{
			\filldraw[color=light-gray,fill=white] (\x,\y) circle (.25);
		}
	}
	
	\foreach \x in {0,1,...,14}{
		\foreach \y in {21,22,...,35}{
			\filldraw[color=yellowish] (\x-.25,\y-.25) rectangle (\x+.25,\y+.25);;
		}
	}
	%\filldraw[color=green] (0,35) rectangle (14,21);
	
	\foreach \x in {0,6,...,35}{
		\foreach \y in {5,11,...,35}{
			\filldraw[color=reddish] (\x,\y+.354) -- (\x+.354,\y) -- (\x,\y-.354) -- (\x-.354,\y) -- (\x,\y+.354);%(\x,\y) circle (.25);
		}
	}
	
	\foreach \x in {0,6,12}{
		\foreach \y in {23,29,35}{
			\filldraw[color=blueish] (\x,\y+.354) -- (\x+.354,\y) -- (\x,\y-.354) -- (\x-.354,\y) -- (\x,\y+.354);
			\filldraw[color=blueish] (\x-.25,\y-.25) rectangle (\x+.25,\y+.25); %(\x,\y) circle (.25);
		}
	}
	\end{tikzpicture}};
	\node at (7,0){\begin{tikzpicture}
	\filldraw[color=yellowish] (-.1,1.9) rectangle (.1,2.1);
	\filldraw[color=reddish] (-.14,1)--(0,1.14)--(.14,1)--(0,.86)--(-.14,1); 
	\filldraw[color=blueish] (-.1,-.1) rectangle (.1,.1);
	\filldraw[color=blueish] (-.14,0)--(0,.14)--(.14,0)--(0,-.14)--(-.14,0);
	\node at (1,2) {$M_1\setminus M_2$};
	\node at (1,1) {$M_2\setminus M_1$};
	\node at (1,0) {$M_1\cap M_2$};
	\end{tikzpicture}};
	\end{tikzpicture}

	\caption{Illustration of the marked set with $ d_1=1 $, $ k_1=15 $, $ d=6 $ and $ N=36 $.}
	\label{fig:L1}
\end{figure}
	
	Since $ m = \lrv M \leq \lrv {M_1} + \lrv {M_2}  = k_1 ^2  + (N/d)^2 $, the constraint \ref{eq:S2C2} implies $ m = \bigO{k_1^2} $;  from \ref{eq:S2C1}  we conclude   $ m = o(N^2) $ and $ N^2 / u = \Theta(1) $. Moreover,  $ \sin^2 \frac{\pi}{N} \sim  \frac{\pi^2}{N^2} = \Theta(N^{-2}) $, thus \eqref{eq:L100} gives 
	\begin{equation}\label{eq:S2e01}
	HT^+  = \Omega  \lr{N^ 2  \lrv \rho ^2/m^2  },
	%\quad\text{where}\quad
	% \rho = \sum\limits_{(x_1,x_2) \in M}  \omega^{x_1} .
	\end{equation}
	where $\rho$ is defined by
	\[ 
	\rho  =
	\sum_{x\in M}  \omega^{x_1}  = 
	\sum_{x\in M_1}  \omega^{x_1}  + \sum_{x\in M_2}  \omega^{x_1}  - \sum_{x\in M_1 \cap M_2}  \omega^{x_1}. 
	\]
	The first summand on the RHS is
	\[ 
	\sum_{x\in M_1}  \omega^{x_1}  = k_1  \sum_{j=0}^{k_1-1} \omega^{j d_1} = k_1 \frac{\omega^{k_1 d_1}-1}{\omega^{d_1}-1},
	\]
	while the second summand  is a multiple of $ \sum_{j=0}^{N/d-1} \omega^{j d} =   \lr{\omega^{N}-1}/\lr{\omega^{d}-1} =0 $ because $ d|N $ by \ref{eq:S2C4}. Therefore
	\[ 
	\rho = k_1 \frac{\omega^{k_1 d_1}-1}{\omega^{d_1}-1} - \sum_{x\in M_1 \cap M_2}  \omega^{x_1} .
	\]
	It is easy to see that $ M_1 \cap M_2=  \lrb{ (j_1 d,  j_2 d)  \ \vline\    0 \leq  j_1, j_2 <k } $, where $ k= \lceil k_1 d_1/d  \rceil $, and similar arguments as previously yield
	\[ 
	\rho = k_1 \frac{\omega^{k_1 d_1}-1}{\omega^{d_1}-1} -  k \frac{\omega^{ k d}-1}{\omega^{d}-1}.
	\]
	By the reverse triangle inequality,
	\begin{equation}\label{eq:S2e02}
	\lrv{\rho} \geq  k_1  \frac{\lrv{\omega^{k_1 d_1}-1  }}{\lrv{\omega^{d_1}-1}}  - k   \frac{\lrv{\omega^{kd }-1  }}{\lrv{\omega^{d}-1}}
	=
	\frac{k_1  \sin \frac{\pi k_1 d_1}{N}}{\sin \frac{\pi  d_1}{N}}
	-
	\frac{k  \sin \frac{\pi k d}{N}}{\sin \frac{\pi  d}{N}}.
	\end{equation}
	From \ref{eq:S2C1} and  \ref{eq:S2C3}   we obtain $ kd \leq k_1d_1+d = \smallO{N} $, therefore $ \frac{k_1d_1}{N} = \smallO 1$, $ \frac{kd}{N} = \smallO 1$ and 
	\[ 
	\sin \frac{\pi k_1 d_1}{N} = \Theta \lr{ \frac{k_1 d_1}{N} }, \
	\sin \frac{\pi k d}{N} = \Theta \lr{ \frac{kd}{N} }, \
	\sin \frac{\pi  d_1}{N} = \Theta \lr{\frac{d_1}{N}},\ 
	\sin \frac{\pi  d}{N} = \Theta \lr{\frac{d}{N}}.
	\] 
	Consequently,
	\[ 
	\frac{k_1  \sin \frac{\pi k_1 d_1}{N}}{\sin \frac{\pi  d_1}{N}} = \Theta(k_1^2), \quad
	\frac{k  \sin \frac{\pi k d}{N}}{\sin \frac{\pi  d}{N}} = \Theta(k^2)  =  \Theta\lr{k_1^2 \frac{d_1^2}{d^2} } = \smallO{k_1^2};
	\]
	here the last bound follows from $ d_1=\smallO{d} $,  which  is implied by \ref{eq:S2C1} and \ref{eq:S2C2}.
	
	Now \eqref{eq:S2e02} gives $ \lrv \rho  = \Omega(k_1^2) $.  Combining this with \eqref{eq:S2e01} and the previously  obtained bound  $ m= \bigO{k_1^2} $, we conclude that the  extended hitting time satisfies
	\begin{equation}\label{eq:S2e03}
	HT^+ = \Omega\lr{ \frac{N^ 2  \lrv \rho ^2  }{m^2}}  = \Omega \lr{ \frac{N^2 k_1^4}{k_1^4} } = \Omega(N^2).
	\end{equation}
	
	Next we   bound $ HT  $.  Notice that by the linearity of expectation $ HT = \sum_{x\in U} \frac{\pi_x}{p_U} HT_x (M)$, where  $ p_U := \sum_{x\in U} \pi_x $ and $ HT_x (M)$ is the expected number of steps for the random walker to reach $ M $ for the first time, starting from a vertex $ x $. It follows that $ HT \leq \max_{x \in U}HT_x (M)$. For any fixed $ x\in U $,  $ HT_x (M)$ cannot decrease when reducing the marked set (i.e., when some marked vertices are removed from $ M $ and added to  the unmarked set $ U $), hence we have $ \max_{x \in U} HT_x(M) \leq  \max_{x \notin M_2} HT_x(M_2) $.
	
	Therefore it suffices to show that $   HT_x(M_2)   = \bigO{d^2\log d}  $  when only the subgrid $ M_2 $ is marked and $ x  $ is any vertex not belonging to $ M_2 $. 
	However,  the classical random walk with the marked set $   M_2 $  is equivalent to the random walk in the $ d\times d $ torus with a single marked element (by identifying each vertex $ (x_1,x_2) $ with the unique vertex $ (x_1^{(0)},x_2^{(0)})  $ satisfying $ x_1 \equiv x_1^{(0)}   \pmod d  $, $ x_2 \equiv x_2^{(0)} \pmod d $ and  $ x_1^{(0)}, x_2^{(0)} \in \lrb{0,1,\ldots, d-1}  $).
	Since, in the case of a $ d\times d $  torus with a single marked element, all hitting times $ HT_y $ (with $ y $ being a non-marked vertex) are  of order $ \bigO{d^2 \log d} $ \cite[Eq.~10.29]{levin2017MarkovChainsMixingTimes}, the desired bound $ HT_x(M_2)   = \bigO{d^2\log d}  $ follows.  Hence, returning to the marked set $ M $,   the classical hitting time is $ HT =  \bigO{d^2 \log d} = \smallO{N^2} $ by~\ref{eq:S2C3}, and we conclude that  $ HT = \smallO{HT^+} $.
\end{proof}

An intuitive explanation for this result is that the algorithm of Krovi et al.~\cite{krovi2010QWalkFindMarkedAnyGraph} actually solves a more difficult problem: it generates the uniform superposition over $\ket{x}, x\in M$ (with the starting state being the uniform superposition over all vertices $\ket{x}$). Almost all of marked vertices are, however, concentrated in $M_1$ which is a small part of the grid. A typical component of the starting state is at a distance $\Omega(N)$ from $M_1$. Therefore, any algorithm that generates the uniform superposition over $\ket{x}, x\in M$ from this starting state must take $\Omega(N)$ steps, even though the classical hitting $HT$ time is much smaller. 

The running time $O(\sqrt{HT^+})=O(N \sqrt{\log N})$ achieved by the algorithm of \cite{krovi2010QWalkFindMarkedAnyGraph}
is quite close to the $\Omega(N)$ lower bound. So, in our example, this algorithm is close to being optimal for generating the uniform superposition of marked vertices but is very far from being optimal for the task of 
simply finding a marked vertex.

\section{Quantum Walk Algorithm}\label{sec:quantum-walk-alg}
As in \cite{krovi2010QWalkFindMarkedAnyGraph}, we introduce the $  n$-dimensional Hilbert space $\Hi $  with basis states $ \ket x $ identified with the vertices of the graph. The algorithm uses two registers $\Reg_1$, $\Reg_2$ with underlying state space $\Hi$ for each of them,  initialized to some reference state $ \ket{\barO} $.

Additionally an ancilla register $\Reg_3$ initialized to $\ket{0} \in \C^2$ will be attached to check if the current vertex is marked.

\subsection{Algorithm with known \texorpdfstring{$s$ and $t$}{s and t}}
Now we describe a quantum walk algorithm with a fixed interpolation parameter  $ s\in [0,1) $ and a predetermined  number of   quantum walk steps $ t \in \N $.

\begin{algorithm}[H]
	\textbf{Search}($ \PM $, $ M $, $ s $, $ t $) %: aim
	
	\begin{enumerate}
		\item Prepare the state $\ket{\barO}\ket{\sqrt \uppi}$ with $ \setup(\PM) $.
		\item Apply $ t $ times  the operator $ W(s) $  on $ \Reg_1 \Reg_2 $.
		\item Attach $ \Reg_3 $, apply $ \check(M) $ on $ \Reg_2 \Reg_3 $, measure $ \Reg_3 $.
		\item If $ \Reg_3=1 $, measure $ \Reg_2 $ in the vertex basis, output the outcome. Otherwise, output \texttt{No marked vertex found}.
	\end{enumerate}
	
	\caption{Quantum walk algorithm}\label{alg:alg1}
\end{algorithm}

It is obvious  that  the   complexity  of the algorithm is of the order $\setupcost + t \cdot (\updatecost + \checkingcost)$. 
We conjecture that (under the   assumption that the probability to draw a marked vertex from the stationary distribution is at most 0.5) there always exists an interpolation parameter $s$ such that Algorithm~\ref{alg:alg1} finds a marked vertex with high probability in $ t =\bigObig{\sqrt{HT}} $ steps:
\begin{conj}
\label{conj:1}
	Let  $  \PM$ be a reversible, ergodic Markov chain with stationary distribution $\uppi$;  suppose that   $ M $ is a set of marked states which  satisfies $p_M= \sum_{x\in M} \uppi_x < 0.5 $. Then
	there exists a  value $ s \in [0,1) $ and a positive  integer $ t = \bigObig{\sqrt{HT}} $ such that
	Algorithm \ref{alg:alg1} succeeds with probability $ \Omega(1) $.
	%	$ q_t(s) = \Omega(1) $, where $ q_t(s) $ is defined by \eqref{eq:S3e01}.
\end{conj}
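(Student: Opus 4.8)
The plan is to pick the interpolation parameter $s$ so that $\uppi(s)$ places constant mass on $M$ while still overlapping $\uppi$ substantially, run $W(s)$ for a number of steps $t$ drawn uniformly from $\{1,\dots,T\}$ with $T=\bigObig{\sqrt{HT(s)}}$, and show by a dephasing argument that Algorithm~\ref{alg:alg1} then outputs a marked vertex with probability $\Omega(1)$. The one place where the bound $\sqrt{HT}$ --- rather than $\sqrt{HT(s)}$ or $\sqrt{HT^+}$ --- has to be extracted, and the reason this is stated only as a conjecture, is a comparison of $HT(s)$ with $HT$ at the chosen $s$.

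Set $q:=1-p_M$ and, for $s\in[0,1)$, $c(s):=\braket{\sqrt\uppi}{\sqrt{\uppi(s)}}$ and $p_M(s):=\sum_{x\in M}\uppi(s)_x$. From \eqref{eq:interpolDistDef} one gets $c(s)=\frac{\sqrt{1-s}\,q+p_M}{\sqrt{1-sq}}$ and $p_M(s)=\frac{p_M}{1-sq}$, and an elementary optimization shows that $s\mapsto c(s)^2p_M(s)$ attains its maximum on $[0,1)$ at $s^\star:=1-\frac{p_M}{(1+\sqrt{p_M})^2}$, with value $c(s^\star)^2p_M(s^\star)=\tfrac14(1+\sqrt{p_M})^2\geq\tfrac14$; write $c:=c(s^\star)\leq1$. (The hypothesis $p_M<\tfrac12$ only excludes the trivial regime in which sampling $\uppi$ already finds a marked vertex, and may conceivably also enter the comparison below.) Since $\PM(s^\star)$ is ergodic with stationary distribution $\uppi(s^\star)$, the state $\ket{\barO}\ket{\sqrt{\uppi(s^\star)}}$ is fixed by $W(s^\star)$; and by \eqref{eq:blockWalk} and the standard (Szegedy) spectral correspondence between $W(s^\star)$ and $D(s^\star)$, the $W(s^\star)$-invariant subspace generated by $\{\ket{\barO}\ket{x}\}_{x\in X}$ splits, for each eigenvalue $\lambda_k(s^\star)<1$ of $D(s^\star)$, into a two-dimensional block on which $W(s^\star)$ has eigenphases $\pm\arccos\lambda_k(s^\star)\in(0,\pi]$.

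Write $\ket{\barO}\ket{\sqrt\uppi}=c\ket{\barO}\ket{\sqrt{\uppi(s^\star)}}+\ket{b}$ with $\ket{b}\perp\ket{\barO}\ket{\sqrt{\uppi(s^\star)}}$; expanding $\ket{\barO}\ket{\sqrt\uppi}=\sum_k\braket{v_k(s^\star)}{\sqrt\uppi}\ket{\barO}\ket{v_k(s^\star)}$ and peeling off the $k=n$ term shows that $\ket{b}$ is supported on the blocks with $\lambda_k(s^\star)<1$, so in the eigenbasis $\ket{b}=\sum_m b_m\ket{u_m}$ of $W(s^\star)$ all eigenphases $\phi_m$ are nonzero. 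Because $\ket{\barO}\ket{\sqrt{\uppi(s^\star)}}$ is $W(s^\star)$-invariant, Algorithm~\ref{alg:alg1} succeeds at step $t$ with probability $\nrm{\Pi_M\bigl(c\ket{\barO}\ket{\sqrt{\uppi(s^\star)}}+W(s^\star)^t\ket{b}\bigr)}^2$, where $\Pi_M$ projects the vertex register onto $M$; averaging over $t\in\{1,\dots,T\}$ and discarding the nonnegative term $\tfrac1T\sum_{t=1}^T\nrm{\Pi_M W(s^\star)^t\ket{b}}^2$ leaves
\[
\frac1T\sum_{t=1}^T\Pr[\text{success at step }t]\ \geq\ c^2p_M(s^\star)\ -\ 2c\left|\frac1T\sum_{t=1}^T\bra{\barO}\bra{\sqrt{\uppi(s^\star)}}\Pi_M\,W(s^\star)^t\ket{b}\right|.
\]
Using $\left|\tfrac1T\sum_{t=1}^T e^{\mathrm{i}\phi_m t}\right|\leq\frac1{T\,|\sin(\phi_m/2)|}$, Cauchy--Schwarz, and $\sum_m\bigl|\bra{\barO}\bra{\sqrt{\uppi(s^\star)}}\Pi_M\ket{u_m}\bigr|^2\leq\nrm{\Pi_M\ket{\sqrt{\uppi(s^\star)}}}^2=p_M(s^\star)\leq1$, the cross term is at most $\frac2T\bigl(\sum_m|b_m|^2/\sin^2(\phi_m/2)\bigr)^{1/2}$. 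Now $\sin^2(\phi_m/2)=\tfrac12(1-\lambda_k(s^\star))$ in the $\lambda_k(s^\star)$-block, and writing $\sqrt{\uppi_M}=\sqrt{1-s^\star q}\,\sqrt{\uppi(s^\star)}-\sqrt{1-s^\star}\,\sqrt{\uppi_U}$ together with $\braket{v_k(s^\star)}{\sqrt{\uppi(s^\star)}}=0$ for $k<n$ gives $\braket{v_k(s^\star)}{\sqrt\uppi}=(1-\sqrt{1-s^\star})\braket{v_k(s^\star)}{\sqrt{\uppi_U}}$, so by the spectral correspondence
\[
\sum_m\frac{|b_m|^2}{\sin^2(\phi_m/2)}=2\sum_{k<n}\frac{\bigl|\braket{v_k(s^\star)}{\sqrt\uppi}\bigr|^2}{1-\lambda_k(s^\star)}=2(1-\sqrt{1-s^\star})^2 q\, HT(s^\star)\leq2\,HT(s^\star).
\]
Taking $T=\lceil C\sqrt{HT(s^\star)}\,\rceil$ for a sufficiently large absolute constant $C$ makes the cross term at most $\tfrac18$, whence the time-averaged success probability is $\geq\tfrac14-\tfrac18=\tfrac18$; hence there is $t\in\{1,\dots,T\}$ with which Algorithm~\ref{alg:alg1} succeeds with probability $\geq\tfrac18$, at cost $\bigObig{\setupcost+T(\updatecost+\checkingcost)}$.

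What remains is to show $HT(s^\star)=\bigO{HT}$, which would give $T=\bigObig{\sqrt{HT}}$ and complete the argument; this is the main obstacle. Because $s^\star=1-\Theta(p_M)$ is very close to $1$ when $p_M$ is small, and $HT^+=\lim_{s\to1}HT(s)$ can exceed $HT$ by a factor $\widetilde{\Omega}(N^2)$ (Section~\ref{sec:2}), one must show that the blow-up of $HT(s)$ only sets in for $s>1-\smallO{p_M}$. A perhaps more robust route is to additionally \emph{lower} bound the discarded term $\tfrac1T\sum_t\nrm{\Pi_M W(s)^t\ket{b}}^2$, which would permit using a smaller $s$ with $HT(s)=\bigO{HT}$ provably, at the price of controlling off-diagonal interference terms of the form $1/\bigl(T\,|\sin((\phi_m-\phi_{m'})/2)|\bigr)$. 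Either way, the crux is to bound a spectral quantity of the \emph{interpolated} walk at a near-absorbing parameter by the classical hitting time of the \emph{original} walk --- a relation our numerical experiments support but which we do not know how to prove in general.
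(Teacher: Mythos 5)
The statement you are trying to prove is \emph{Conjecture}~\ref{conj:1}, and the paper does not prove it. The paper offers only numerical evidence (Figures~\ref{fig:S3.2-1}--\ref{fig:S3.2-2}), and instead proves the weaker Theorem~\ref{thm:main} --- a $\sqrt{\log HT \log\log HT}$ factor away --- via a different algorithm (Algorithm~\ref{alg:alg2}, based on fast-forwarding). So there is no ``paper's own proof'' to compare against, and the fact that your proposal ends with an acknowledged gap is exactly as it should be.

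That said, your partial argument is correct and its reduction is clean. I checked the computation: with $u:=\sqrt{1-s}$ and $q:=1-p_M$, you have $c(s)^2 p_M(s)=\frac{p_M(uq+p_M)^2}{(p_M+u^2q)^2}$, whose maximizer is $u^\star=\frac{\sqrt{p_M}}{1+\sqrt{p_M}}$ giving the stated $s^\star=1-\frac{p_M}{(1+\sqrt{p_M})^2}$ and value $\frac14(1+\sqrt{p_M})^2$. The identity $\sqrt{\uppi}=(1-\sqrt{1-s})\sqrt{\uppi_U}+\sqrt{1-sq}\,\sqrt{\uppi(s)}$ follows directly from Eq.~\eqref{eq:interpolDistDef}, so indeed $\braket{v_k(s^\star)}{\sqrt{\uppi}}=(1-\sqrt{1-s^\star})\braket{v_k(s^\star)}{\sqrt{\uppi_U}}$ for $k<n$, the Szegedy $\pm$-block decomposition gives $\sum_m|b_m|^2/\sin^2(\phi_m/2)=2(1-\sqrt{1-s^\star})^2 qHT(s^\star)\le 2HT(s^\star)$, and the geometric-sum/Cauchy--Schwarz chain goes through. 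Your conclusion --- a time-averaged success probability $\ge\frac18$ for $T=\Theta\bigl(\sqrt{HT(s^\star)}\bigr)$ --- is a valid, fully rigorous statement.

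The genuine gap is the one you name: you must show $HT(s^\star)=\bigO{HT}$, and this is precisely the content of the conjecture. Because $s^\star=1-\Theta(p_M)$ approaches $1$ as $p_M\to 0$, Section~\ref{sec:2} is a direct warning that $HT(s)$ near $s=1$ can be polynomially larger than $HT$; the counterexample does not immediately falsify $HT(s^\star)=\bigO{HT}$ (it concerns the limit $s\to 1$, not this specific $s^\star$), but it shows such a bound cannot be deduced from any generic monotonicity of $HT(s)$ and must exploit the specific choice of $s^\star$. Worth noting: the paper's proven Theorem~\ref{thm:main} sidesteps exactly this obstacle, not by pinning down one good $s^\star$ but by averaging over a $\Theta(\log T)$-size set of interpolation parameters $S$ (Corollary~\ref{cor:AMRMS}), at the price of the $\log$ factors in the final complexity. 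This is essentially the ``more robust route'' your last paragraph gestures at, realized in a different form (a combinatorial rescaling argument on the classical trajectory rather than a spectral dephasing bound). If you want to push your spectral approach further, the natural next step is also a union over a small set of candidate $s$ values, combined with a pigeonhole argument that for at least one of them the spectral sum $\sum_{k<n}|\braket{v_k(s)}{\sqrt\uppi}|^2/(1-\lambda_k(s))$ is $\bigO{HT}$; whether one can take that set to be a single point is exactly what remains open.
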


The success probability  can be lower-bounded by a quantity expressible in terms of the discriminant matrix $ D(s) $.
Let $ \psuccess = \nrm{ (I \otimes \Pi_M) W^t(s)  \ket{\barO} \ket{\sqrt \uppi} }^2 $ be the probability of obtaining a marked vertex in the last step of Algorithm \ref{alg:alg1}. Then it  can be lower-bounded by
\begin{equation}\label{eq:S3e01}
\nrm{ (I \otimes \Pi_M)  \Pi_0 W^t(s)  \ket{\barO} \ket{\sqrt \uppi} }^2
=: q_t(s),
\end{equation}
where $\Pi_0 :=  \ketbra{\barO}{\barO} \otimes I $.   The following lemma\footnote{For a generalization of this claim see \cite[Lemma 9 \& Theorem 17]{gilyen2018QSingValTransf}.} implies that  $q_t(s)=  \nrm{ \Pi_M D_t(s)  \ket{\sqrt \uppi} }^2$, where $D_t(s)=T_t(D(s))$ for $T_t$ the Chebyshev polynomial of the first kind of degree $t$.
\begin{lemma}\label{lem:L2}
	The quantum walk operator $ W^t(s) $, when restricted to $\ket{\barO}$ in the first register, acts as the $t$-th Chebyshev polynomial of the first kind applied to the discriminant matrix $D(s)$, i.e.,
	\[ 
	\Pi_0  W^t(s)  \Pi_0 =  \ketbra{\barO}{\barO} \otimes  D_t(s) ,
	 \]
	 where $ D_t(s) = T_t(  D(s) ) $ and $ T_t $ is the  Chebyshev  polynomial of the first kind of degree $ t $, applied (in the matrix function sense) to the matrix $ D(s) $. Equivalently, $ D_t(s) $ can be defined via the recurrence relations
	 \begin{align}
	 &D_0(s)  = I, \quad  D_1(s) = D(s) ,   \label{eq:L201} \\
	 & D_{t+1}(s)  = 2 D_{t}(s) \cdot D(s) - D_{t-1} (s), \quad t \in \N . \label{eq:L202}
	 \end{align}
\end{lemma}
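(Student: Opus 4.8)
The plan is to establish the identity $\Pi_0 W^t(s)\Pi_0 = \ketbra{\barO}{\barO}\otimes D_t(s)$ by first showing $\Pi_0 W(s)\Pi_0 = \ketbra{\barO}{\barO}\otimes D(s)$ (which is essentially Eq.~\eqref{eq:blockWalk}), and then arguing that the ``off-$\barO$'' parts of $W(s)$ contribute to higher powers in exactly the way the Chebyshev recurrence prescribes. The cleanest route is to use the standard Jordan/two-reflection structure of quantum walk operators: $W(s)$ is a product of two reflections, $\textsc{Ref}'$ and $V^\dagger(\PM,s)\,\textsc{Shift}'\,V(\PM,s)$, and on the subspace generated by the relevant vectors it acts as a direct sum of $2\times 2$ rotations. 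In each invariant $2$-dimensional block, if $D(s)$ has singular value $\cos\theta$, then $W(s)$ acts as rotation by $2\theta$, and the $\barO$-component evolves like $\cos(t\theta) = T_t(\cos\theta)$, which is precisely $D_t(s)$ on that block.

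Concretely, I would proceed as follows. First, define $A := V(\PM,s)^\dagger \textsc{Shift}' V(\PM,s)$ and note both $A$ and $\textsc{Ref}'$ are reflections (Hermitian unitaries squaring to the identity), so $W(s) = A\,\textsc{Ref}'$ and $W(s)^{-1} = \textsc{Ref}'\,A = W(s)^\dagger$ after conjugating appropriately; more usefully, $W(s) + W(s)^{-1} = A\textsc{Ref}' + \textsc{Ref}'A$. Second, compute $\Pi_0 A \Pi_0$: since $\textsc{Ref}'$ fixes $\ket{\barO}$ vectors (acts as $+I$ on the $\ket{0}\ket{\barO}$ part) while $A\Pi_0$ involves $\textsc{Shift}'$ sandwiched by the $V$'s, one gets exactly $\Pi_0 A \Pi_0 = \ketbra{\barO}{\barO}\otimes D(s)$ from Eq.~\eqref{eq:blockWalk} (noting $\Pi_0 W(s)\Pi_0 = \Pi_0 A\,\textsc{Ref}'\,\Pi_0 = \Pi_0 A \Pi_0$ since $\textsc{Ref}'$ acts trivially on the image of $\Pi_0$). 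Third — the key algebraic step — I would prove the operator recurrence $\Pi_0 W^{t+1}(s)\Pi_0 = 2\,\Pi_0 W^t(s)\Pi_0\cdot(\ketbra{\barO}{\barO}\otimes D(s)) - \Pi_0 W^{t-1}(s)\Pi_0$ by inserting the identity $W^{t+1} + W^{t-1} = W^{t-1}(W^2 + I) = W^{t-1}\cdot W\cdot(W + W^{-1})$ and using that $W(s) + W(s)^{-1}$, when flanked by $\Pi_0$ on both sides with a $W^{t-1}$ in between, collapses via $\Pi_0(W + W^{-1})\Pi_0 = 2\Pi_0 A\Pi_0 - 2\Pi_0\textsc{Ref}'\Pi_0 + \ldots$ — more carefully, one shows $\Pi_0 W(s)^k W(s)\Pi_0 + \Pi_0 W(s)^k W(s)^{-1}\Pi_0 = 2\,\Pi_0 W(s)^k\Pi_0\,(\ketbra{\barO}{\barO}\otimes D(s))$ by using that $W^{-1}\Pi_0 = \textsc{Ref}'A\Pi_0$ and $\textsc{Ref}'$ commutes with $\Pi_0$. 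Comparing with Eqs.~\eqref{eq:L201}--\eqref{eq:L202} and using the base cases $\Pi_0 W^0\Pi_0 = \Pi_0 = \ketbra{\barO}{\barO}\otimes I = \ketbra{\barO}{\barO}\otimes D_0(s)$ and $\Pi_0 W^1\Pi_0 = \ketbra{\barO}{\barO}\otimes D(s)$, induction on $t$ closes the argument; that $T_t$ satisfies the same recurrence with the same initial data is the classical defining property of Chebyshev polynomials.

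**The main obstacle** I anticipate is making the middle step — the collapse of $\Pi_0(W^{t-1}(W + W^{-1}))\Pi_0$ — fully rigorous rather than merely plausible. The subtlety is that $W^{t-1}\Pi_0$ does not stay in the image of $\Pi_0$, so one cannot naively write $\Pi_0 W^{t-1}\cdot W\cdot\Pi_0$; instead one must carefully track how $\textsc{Ref}'$ acts as $+I$ on $\mathrm{im}(\Pi_0)$ but as $-I$ on its complement within the ancilla-$\barO$ register, and verify that the cross terms vanish. The clean way to handle this is to observe $\textsc{Ref}' = 2\Pi_0 - I$, hence $W(s)^{-1} = \textsc{Ref}' W(s)^\dagger\textsc{Ref}'$... — actually the most robust approach is to note $W = A(2\Pi_0 - I)$ so $W\Pi_0 = A\Pi_0$ and $\Pi_0 W = \Pi_0 A(2\Pi_0 - I)$, then $\Pi_0 W^{t-1} W^2 \Pi_0 = \Pi_0 W^{t-1} A(2\Pi_0 - I) A\Pi_0$ and use $A\Pi_0 A = $ (a computation via $\textsc{Shift}'$) together with $\Pi_0 A\Pi_0 = \ketbra{\barO}{\barO}\otimes D(s)$ to extract the $D(s)$ factor. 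If this direct manipulation gets unwieldy, the fallback is the Jordan-subspace decomposition: restrict to $\mathrm{span}\{A\Pi_0\ket{\barO}\ket{\psi}, \Pi_0\ket{\barO}\ket{\psi}\}$, diagonalize $D(s)$ to reduce to scalars $\cos\theta$, verify the $2\times 2$ rotation picture, and read off $\cos(t\theta)$. Either way the computation is routine once the bookkeeping is set up; I'd lead with the recurrence-based proof since it most directly matches the stated recurrence \eqref{eq:L201}--\eqref{eq:L202}.
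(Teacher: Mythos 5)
Your proposal lands on exactly the paper's argument: write $W(s)=A\,\textsc{Ref}'$ with $A:=V^{\dagger}(\PM,s)\,\textsc{Shift}'\,V(\PM,s)$ a reflection ($A^2=I$), use $W(s)\Pi_0=A\Pi_0$ and $\Pi_0 A\Pi_0=\ketbra{\barO}{\barO}\otimes D(s)$, expand $W^{t+1}\Pi_0=W^{t-1}A(2\Pi_0-I)A\Pi_0$ and insert $\Pi_0=\Pi_0^2$ to obtain $\Pi_0 W^{t+1}(s)\Pi_0 = 2\,\Pi_0 W^t(s)\Pi_0\cdot(\ketbra{\barO}{\barO}\otimes D(s)) - \Pi_0 W^{t-1}(s)\Pi_0$, then close by induction and the Chebyshev recurrence. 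The detour through $W+W^{-1}$ and the Jordan-block fallback you mention are not needed; the direct expansion you settle on in the final paragraph is precisely the paper's computation.
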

\begin{proof}
	Recall that $ W(s)  = \widetilde  W(s) \cdot (2\Pi_0 - I \otimes I) $ where $ \widetilde   W(s) =V^\dagger\!(\PM, s)\, \textsc{Shift}'  \,V(\PM,s)$. Moreover, the idempotence of $ \Pi_0 $ gives  
	\begin{equation}\label{eq:L203}
	W(s) \Pi_0  =  \widetilde  W(s) \cdot (2\Pi_0 - I \otimes I) \Pi_0 = \widetilde  W(s) \Pi_0.
	\end{equation}

	For the proof by induction on $ t $, notice that the claim trivially holds for $ t=0 $.  When $ t=1 $, the statement (due to \eqref{eq:L203}) is equivalent to Eq.~\eqref{eq:blockWalk}. Suppose that the claim has been proven for all nonnegative integers up to $ t $ inclusive, $ t\geq 1 $, and consider $ \Pi_0  W^{t+1}(s)  \Pi_0 $.
	We have
	\begin{align*}
	 \Pi_0  W^{t+1}(s)  \Pi_0 
	&  =  \Pi_0  W^{t-1}(s) \cdot \lr{ \widetilde W(s) \cdot (2\Pi_0 - I \otimes I)  }  W(s)\Pi_0  \\
	&  =    2  \Pi_0  W^{t-1}(s)  \widetilde W(s)  \Pi_0   W(s)\Pi_0  - \Pi_0  W^{t-1}(s)  \widetilde W(s)      W(s)\Pi_0 \\
	 & =     2  \Pi_0  W^{t-1}(s)    W(s)  \Pi_0   W(s)\Pi_0  - \Pi_0  W^{t-1}(s)  \widetilde W(s)     \widetilde W(s)\Pi_0  \tag{by Eq.~\eqref{eq:L203}}\\
	 & = 2  \Pi_0  W^t(s)  \Pi_0 \cdot   \Pi_0   W(s)\Pi_0  - \Pi_0  W^{t-1}(s) \Pi_0. \tag{since $\widetilde W^2(s) = I$ and $ \Pi_0^2 = \Pi_0 $}
	\end{align*}
	%where the last equality uses $ \widetilde W^2(s) = I $ and the idempotence of $ \Pi_0 $. 
	By the  inductive hypothesis, the obtained quantity equals $ \ketbra{\barO}{\barO} \otimes \lr{ 2 D_{t}(s) \cdot D_1(s) -  D_{t-1}(s)}  $. We conclude  that indeed $  \Pi_0  W^{t+1}(s)  \Pi_0 =\ketbra{\barO}{\barO} \otimes D_{t+1} (s) $, where $ D_{t+1}(s) $ is defined by the recurrence relations \eqref{eq:L201}-\eqref{eq:L202}. It remains to recognize that these recurrence relations define the Chebyshev polynomials of the first kind. 
\end{proof}

In the following we describe some examples illustrating  the dependence of $ q_t(s) $ on the interpolation parameter $ s $.

\begin{exmp} \label{exmp:3.1}
	Consider the example described in Section \ref{sec:2}, with parameter $ a=3 $ (i.e., $ d_1=1 $, $ k_1=1536 $, $ d=9 $,  and $ N=4608 $). It can be calculated that the classical hitting time of the marked set is $ HT =162.98\ldots  $, whereas the extended hitting time is  $HT^+ = 1.01\ldots \cdot 10^{7} $    (the lower bound in Lemma \ref{lem:htpGapHT}  gives $ HT^+ \geq  1.69 \cdot 10^6  $, by \eqref{eq:L100} and \eqref{eq:S2e02}).
	
	In Figure \ref{fig:S3.2-1}, we plot  the lower bound (\ref{eq:S3e01}) on the success probability of Algorithm~\ref{alg:alg1}. As we will also see in  Section~\ref{sec:fast-forwarding}, it is natural to replace the interpolation parameter $ s\in[0,1) $ with $ r = 1/(1-s)  \in [1,\infty)$. (The parameter $r$ is also equal to the expected number of steps until the interpolated walk makes a transition according to the original random walk at a marked vertex.) 
	
	\begin{figure}[ht]
		\centering
		\includegraphics[draft=false,scale=1]{./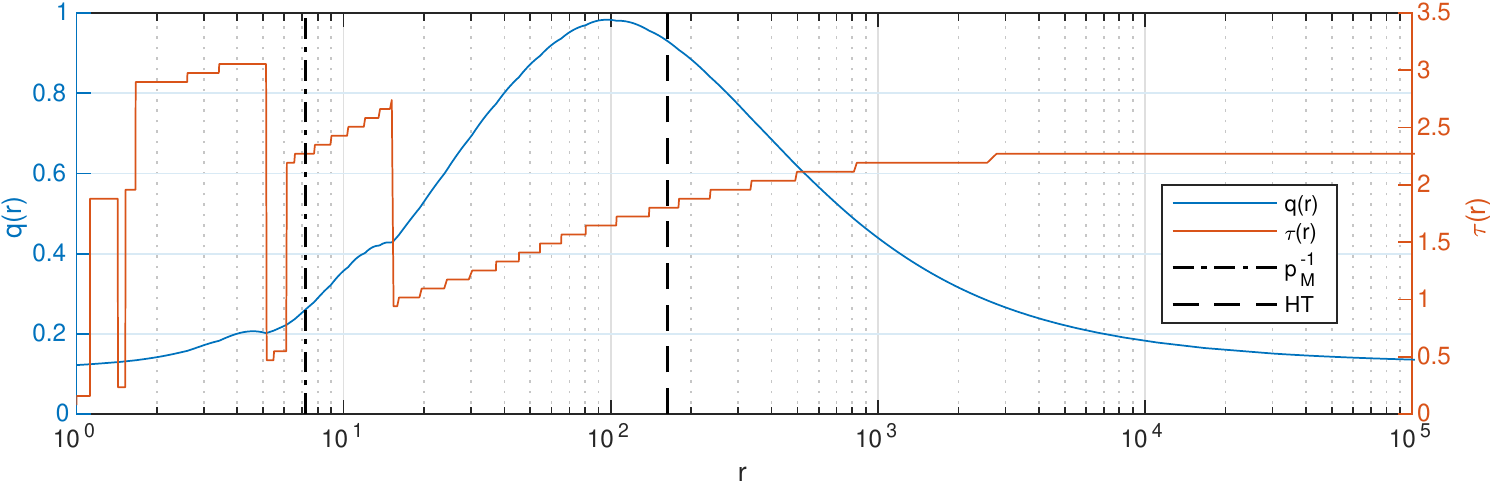}
		\caption{Bounds on Algorithm \ref{alg:alg1} in Example~\ref{exmp:3.1}. The horizontal axis $(r)$ represents the interpolation parameter $s=1-\frac{1}{r}$;
		$\tau(r)$ denotes the best choice of time $t$ and $q(r)$ denotes the best lower bound on the success probability of Algorithm~\ref{alg:alg1}, as described below.}
		\label{fig:S3.2-1}
	\end{figure}
	
	Figure \ref{fig:S3.2-1} shows two quantities (as functions of $r$):
	\begin{itemize}
	    \item 
	    the maximum of the bound (\ref{eq:S3e01}) over $t \leq   \lceil  3 \sqrt{HT}\rceil$,  denoted $ q(r)$ (units on the left axis);
	    \item
	    the minimal value of $ t $ which achieves $  q(r) $, 
	    denoted by $ \tau(r)\! := \min  \lrb{t \geq 0  \ \vline\  q_t(1\!-\!\frac1r)\! =\! q(r) } $  (with units on the right axis; represented in $\sqrt{HT}$ units).
	\end{itemize}

	Furthermore, we indicate parameter values $ r _1= \frac{1-p_M}{p_M} $ (which corresponds to the value of $s$ used in \cite{krovi2010QWalkFindMarkedAnyGraph} for their $ \Theta \big(\sqrt{HT ^+}\big)  $-time algorithm) and $ r_2=HT $ (a plausible upper bound on the optimal $r$) by vertical dash-dotted and dashed lines, respectively.
	
    From Figure~\ref{fig:S3.2-1} it can be noticed that  the optimal value is $ r =96.61\ldots  \approx d^2 $ and it allows Algorithm \ref{alg:alg1} to find a marked vertex in  $t=21 \approx 1.65 \sqrt {HT}$ steps with probability exceeding 0.98.
    This value is substantially bigger than the value $r_1 \approx 7.191$ 
    corresponding to the algorithm of \cite{krovi2010QWalkFindMarkedAnyGraph}.
\end{exmp}

\begin{figure}[ht]
	\centering
	\vskip-2mm
	\includegraphics[draft=false,scale=1]{./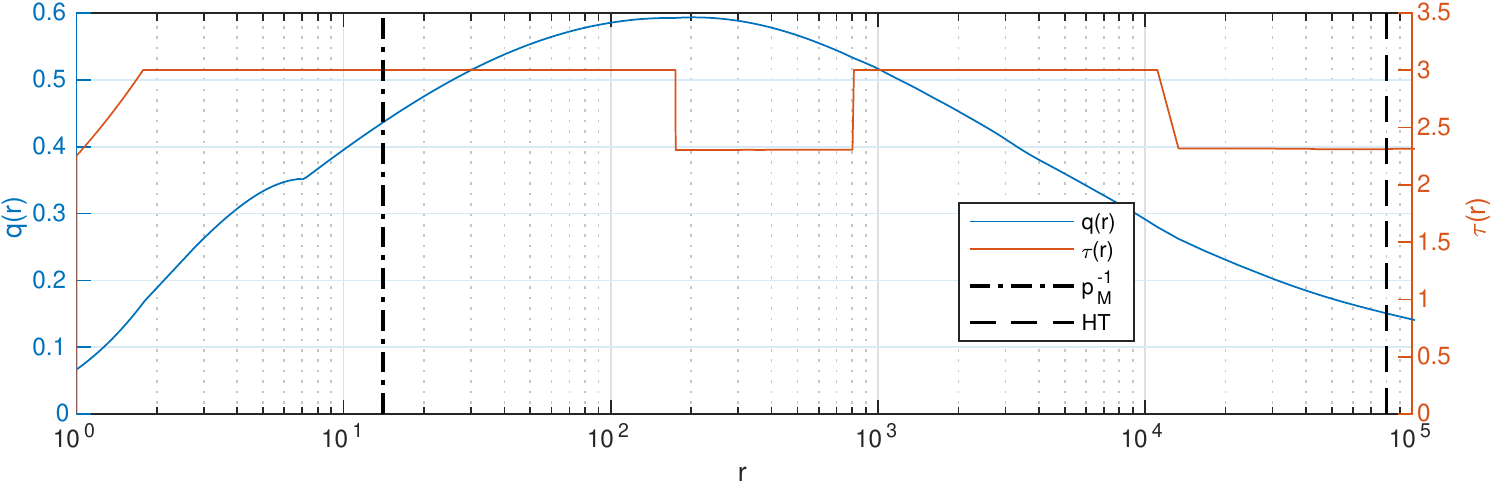}
	\caption{Some properties of the family of interpolated quantum walks in Example~\ref{exmp:3.2}. For notation and explanation of the plotted quantities see Figure~\ref{fig:S3.2-1}.}		
	\label{fig:S3.2-2}
\end{figure}
\begin{exmp} \label{exmp:3.2}
	Let $ G_k $ be the graph   consisting of a single central node $ x_0 $ and  $ k $ paths of length $ k^2 $;   all  paths  have a common endpoint $ x_0 $ and the remaining vertices are distinct	(i.e., $ G_k $ is a modified version of the  star graph with $ k $ rays of length $ k^2 $). In each vertex the random walker stays in the same vertex with probability 0.5 and with probability 0.5 moves to a neighbour vertex (in case of several neighbours, the probability 0.5   splits evenly among them to move to a particular neighbour). Let $ M $ be one of the $ k  $ paths, not including the central node.
	
	When $ k=15 $, the classical hitting time is $ HT =  80090.95\ldots$, whereas the extended hitting time is $ HT^+ =  1016848.98\ldots $.
	As previously, we change variables $ r = 1/(1-s) $ and plot	$ q(r)$ and	$\tau(r)$ on the left and right axis of Figure \ref{fig:S3.2-2}, respectively. 
	Again, values $ r _1= \frac{1-p_M}{p_M} $ and $ r_2=HT $ are indicated by vertical  lines.
	As indicated by Figure \ref{fig:S3.2-2},  at $r\approx k^2  $  Algorithm \ref{alg:alg1} finds a marked vertex with probability at least $ 0.59$ in less than $ 2.31 \sqrt{HT} $ steps.
\end{exmp}

\section{Fast-forwarding Algorithm}\label{sec:fast-forwarding}

In this section, we prove our main theorem, which is the following.

\begin{theorem}\label{thm:main}
Let $\PM$ be any reversible Markov chain on a finite state space $X$, and let $M\subset X$ be a marked set. There is a quantum algorithm that outputs a vertex $x$ from $M$ with bounded error in complexity
$$\bigO{\mathsf{S}\sqrt{\log(HT)}+\sqrt{HT}({\sf U}+{\sf C})\sqrt{\log(HT)\log\log(HT)}},$$
where $HT$ is a known upper bound on $HT(\PM,M)$, $\sf S$ is the cost of the $\mathtt{Setup}(\PM)$ operation, $\sf U$ is the cost of the $\mathtt{Update}(\PM)$ operation, and $\sf C$ is the cost of the $\mathtt{Check}(M)$ operation. 
\end{theorem}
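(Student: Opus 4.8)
The plan is to combine the interpolated quantum walk $W(s)$ with quantum fast-forwarding so that the expensive eigenvalue-estimation step of Krovi et al.\ is replaced by a short application of Chebyshev polynomials of $D(s)$. Concretely, by Lemma~\ref{lem:L2}, running $W(s)$ for $t$ steps on $\ket{\barO}\ket{\sqrt\uppi}$ and projecting onto $\ket{\barO}$ in the first register realizes $\ketbra{\barO}{\barO}\otimes T_t(D(s))$, and quantum fast-forwarding (Apers--Sarlette \cite{apers2018QFastForwardMarkovChains}) lets us approximate $D(s)^{2\tau}\ket{\sqrt\uppi}$ (up to normalization) using only $\bigO{\sqrt{\tau\log(1/\epsilon)}}$ applications of $W(s)$, since $D(s)^{2\tau}$ is well-approximated on the relevant spectral range by a low-degree combination of Chebyshev polynomials. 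The point of using $D(s)^{2\tau}$ rather than a single power of $D(s)$ is the ``returning'' interpretation: $\bra{\sqrt{\uppi_U}}D(s)^{2\tau}\ket{\sqrt{\uppi_U}}$ equals a classical probability that the interpolated random walk, started from a random unmarked vertex, is unmarked again after $2\tau$ steps; and $1-\bra{\sqrt{\uppi_U}}D(s)^{2\tau}\ket{\sqrt{\uppi_U}}$ — the weight that has leaked onto $M$ — is exactly the overlap with the marked subspace that the final $\check(M)$ measurement detects.

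First I would set up the interpolation parameter. Writing $r=1/(1-s)$, one shows (using the eigen-expression for $HT(s)$ and its behaviour as $s\to 1$, together with the bound $HT(s)\le HT^+$ and a matching lower bound $HT(s)\gtrsim r$ for $r\lesssim HT^+$) that there is a choice $r=\Theta(HT)$ — which is \emph{known}, since $HT$ is given — for which the interpolated walk started from $\ket{\sqrt{\uppi_U}}$ has leaked a constant fraction of its weight onto $M$ after $t^\star=\Theta(HT)$ classical steps, i.e.\ $\bra{\sqrt{\uppi_U}}D(s)^{t^\star}\ket{\sqrt{\uppi_U}}\le 1-\Omega(1)$, while still $\bra{\sqrt{\uppi_U}}D(s)^{t^\star/2}\ket{\sqrt{\uppi_U}}=\Omega(1)$. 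This is the content promised in the introduction: the success probability of fast-forwarding, read classically, is the probability of visiting $M$ by time $\tau$ but being unmarked at time $2\tau$, and tuning $r$ makes this $\widetilde\Omega(1)$. The key identity here is that, because $\ket{\sqrt{\uppi(s)_U}}$ (suitably normalized) is the part of the stationary state of $D(s)$ supported on $U$ and the unique eigenvalue-$1$ eigenvector of $D(s)$ is $\ket{\sqrt{\uppi(s)}}$, the quantity $\nrm{\Pi_M D(s)^{\tau}\ket{\sqrt{\uppi_U}}}^2$ is monotonically controlled by how $\tau$ compares to the ``hitting scale'' $r$.

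Next I would assemble the algorithm and its cost. Run $\setup(\PM)$ to get $\ket{\barO}\ket{\sqrt\uppi}$ (cost $\sf S$); note $\ket{\sqrt\uppi}=\sqrt{p_U}\ket{\sqrt{\uppi_U}}+\sqrt{p_M}\ket{\sqrt{\uppi_M}}$, and if $p_M\ge\Omega(1)$ a single $\check(M)$ measurement already finds a marked vertex, so assume $p_M$ small and work with $\ket{\sqrt{\uppi_U}}$. Apply quantum fast-forwarding for the polynomial approximating $x\mapsto x^{t^\star}$ on $[-1,1-\Omega(1/r)]$ to degree $d=\bigO{\sqrt{t^\star\log(1/\epsilon)}}=\bigO{\sqrt{HT\log(1/\epsilon)}}$; implement it via $d$ controlled applications of $W(s)$ using $\update(\PM)$ and $\check(M)$ (cost $d({\sf U}+{\sf C})$ by the $\checkingcost+\updatecost$ implementation of $W(s)$ recalled after Eq.~\eqref{eq:blockWalk}), post-select on the fast-forwarding ancilla succeeding — which happens with probability $\Theta(\nrm{D(s)^{t^\star}\ket{\sqrt{\uppi_U}}}^2/\text{(poly normalization)})=\widetilde\Omega(1)$ after the tuning — and then measure $\Reg_3$ via $\check(M)$: conditioned on fast-forwarding success, the state is proportional to $D(s)^{t^\star}\ket{\sqrt{\uppi_U}}$ plus a small error, whose weight on $M$ is $\Omega(1)$, so measuring $\Reg_2$ yields $x\in M$ with constant probability. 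Amplitude amplification over the fast-forwarding post-selection contributes the extra $\sqrt{\log HT}$ and $\sqrt{\log\log HT}$ factors (the $\log\log$ coming from boosting the error $\epsilon$ of the polynomial approximation to $1/\poly(HT)$ and from $\bigO{\log\log HT}$ rounds of amplification/repetition), giving the stated complexity $\bigO{{\sf S}\sqrt{\log HT}+\sqrt{HT}({\sf U}+{\sf C})\sqrt{\log HT\log\log HT}}$.

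\textbf{The hard part} will be the classical/spectral analysis in the second paragraph: proving that there is a \emph{known} (in terms of the given upper bound $HT$) value of $r=\Theta(HT)$ for which simultaneously (i) enough stationary-unmarked weight has leaked onto $M$ after $\Theta(HT)$ steps of $D(s)$ to make the final $\check(M)$ succeed with constant probability, and (ii) the fast-forwarding post-selection amplitude $\nrm{D(s)^{t^\star}\ket{\sqrt{\uppi_U}}}$ is not too small, so that only $\widetilde O(1)$ amplitude amplification is needed. Relating $\nrm{\Pi_M D(s)^\tau\ket{\sqrt{\uppi_U}}}^2$ to the classical "visit-then-return" probability, and showing this is $\widetilde\Omega(1)$ uniformly over all Markov chains once $r$ is tuned to $HT$, is the crux; the quantum-algorithmic wrapper (fast-forwarding + amplification + error bookkeeping) is comparatively routine given \cite{apers2018QFastForwardMarkovChains} and Lemma~\ref{lem:L2}.
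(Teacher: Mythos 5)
Your high-level framing (interpolated walk $+$ fast-forwarding, with the ``visit-then-return'' interpretation of the post-selected amplitude) matches the paper, but the central claim on which you hang the analysis --- that a single \emph{known} interpolation parameter $r=\Theta(HT)$ makes both the leakage onto $M$ and the post-selection amplitude $\widetilde\Omega(1)$ --- is not established and appears to be false in general. Example~\ref{exmp:3.2} of this paper gives a chain with $HT\approx 8\cdot 10^4$ for which the good choice of the interpolation parameter is $r\approx k^2=225\ll HT$; the ``right'' $r$ depends on the typical dwell time of a classical trajectory near $M$ before escaping, and the given upper bound $HT$ does not determine it. Your ``hard part'' paragraph acknowledges this is the crux but sketches a spectral-monotonicity heuristic that does not control walks with marked vertices at multiple scales, which is exactly the regime that separates $HT$ and $HT^+$ in Section~\ref{sec:2}.

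The paper sidesteps this by \emph{not} fixing $(s,t)$: Algorithm~\ref{alg:alg2} superposes over $t\in\{1,\dots,T\}$ and over $s\in S$ with $|S|=\bigO{\log T}$ (i.e.\ $r$ ranging over powers of $2$), and the heart of the proof is a combinatorial argument (Lemma~\ref{lem:combinatorial}, then Corollaries~\ref{cor:UMU} and~\ref{cor:AMRMS}) showing that \emph{averaged over this superposition} the squared amplitude on $M$ is $\Omega(1/\log T)$. The key insight is trajectory-wise, not chain-wise: for every sample path there exists \emph{some} power-of-two $r$ that simultaneously fills $[0,3T]$ with marked boxes and keeps enough unmarked boxes in $[6T,12T]$ after $r$-rescaling, and the geometric-random-variable concentration (Lemma~\ref{lemma:geomSum}) makes this robust to the actual interpolated dynamics. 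Amplitude amplification over $\bigO{\sqrt{\log T}}$ rounds then lifts $\Omega(1/\log T)$ to $\Omega(1)$, and the residual $\sqrt{\log\log T}$ comes from taking fast-forwarding precision $\eps=\bigO{1/\log T}$ (not from pushing $\eps$ to $1/\poly(HT)$ as you suggest). To repair your proposal you would need to either prove that a single known $r$ suffices for all reversible chains --- which the paper's own examples argue against --- or adopt the superposition-over-a-logarithmic-net strategy together with the combinatorial averaging argument.
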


We remark that if no upper bound on $HT(\PM,M)$ is known, then we can apply the exponential search algorithm of Boyer, Brassard, H{\o}yer and Tapp \cite{boyer1998TightBoundsOnQuantumSearching}, where we simply run the algorithm from Theorem~\ref{thm:main} with exponentially increasing guesses of an upper bound $HT$. 
This leads to the following corollary.
\begin{corollary}
Let $\PM$ be any reversible Markov chain on a finite state space $X$, and let $M\subset X$ be a marked set. There is a quantum algorithm that outputs a vertex $x$ from $M$ with bounded error in expected complexity
$$\bigO{\mathsf{S}\log^{1.5}(HT)+\sqrt{HT}({\sf U}+{\sf C})\sqrt{\log(HT)\log\log(HT)}},$$
where $HT=HT(\PM,M)$, $\sf S$ is the cost of the $\mathtt{Setup}(\PM)$ operation, $\sf U$ is the cost of the $\mathtt{Update}(\PM)$ operation, and $\sf C$ is the cost of the $\mathtt{Check}(M)$ operation. 
\end{corollary}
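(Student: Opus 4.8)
The plan is to use the bounded-error algorithm of Theorem~\ref{thm:main} as a black-box subroutine $\mathcal{A}(g)$, run with its parameter ``$HT$'' set to a guess $g$, and to feed it a geometrically increasing sequence of guesses $g = g_0, 2g_0, 4g_0,\ldots$, where $g_0 = 4$ is a fixed constant large enough that the logarithmic factors in the cost bound of Theorem~\ref{thm:main} are well defined. After the $i$-th invocation $\mathcal{A}(2^ig_0)$ returns a vertex $x$, I apply $\check(M)$ to $x$ (cost $\mathsf{C}$); if $x\in M$ I output $x$ and stop, otherwise I move on to guess $2^{i+1}g_0$. By Theorem~\ref{thm:main}, whenever $2^ig_0 \geq HT(\PM,M)$ the call $\mathcal{A}(2^ig_0)$ returns a marked vertex with probability at least $2/3$ (any constant bounded-error probability suffices, and can be amplified, but since we verify the output it is not even strictly needed), while a non-marked output is always rejected by the $\check$ step; hence the procedure halts only with a genuinely marked vertex. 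Throughout we assume $M\neq\emptyset$ (otherwise $HT$ is infinite and the statement is vacuous) and $HT := HT(\PM,M) \geq g_0$ (if $HT < g_0$ the first guess already succeeds).

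Let $J := \lceil \log_2(HT/g_0)\rceil$, so that $2^ig_0 \geq HT$ precisely for $i\geq J$, and split the run into an \emph{underestimating phase} ($i<J$) and a \emph{sufficient phase} ($i\geq J$). The cost of the $i$-th invocation, including its $\check$, is
\[
c_i \;=\; \bigO{\mathsf{S}\sqrt{\log(2^ig_0)} \;+\; \sqrt{2^ig_0}\,(\mathsf{U}+\mathsf{C})\sqrt{\log(2^ig_0)\,\log\log(2^ig_0)} \;+\; \mathsf{C}}.
\]
In the underestimating phase each guess is used exactly once before the sufficient phase is reached, so these invocations contribute at most $\sum_{i=0}^{J-1}c_i$. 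Here the $\mathsf{S}$-terms sum to $\mathsf{S}\cdot\Theta\!\bigl(\sum_{i=1}^{J}\sqrt i\bigr) = \Theta(\mathsf{S}\,J^{3/2}) = \bigO{\mathsf{S}\log^{1.5}(HT)}$; the bare $\mathsf{C}$-terms sum to $\bigO{\mathsf{C}\log(HT)}$; and since $\log(2^ig_0)$ and $\log\log(2^ig_0)$ are increasing in $i$, the $(\mathsf{U}+\mathsf{C})$-terms are bounded by $\sqrt{\log(2^Jg_0)\log\log(2^Jg_0)}\,(\mathsf{U}+\mathsf{C})\sum_{i=0}^{J-1}\sqrt{2^ig_0} = \bigO{\sqrt{HT}\,(\mathsf{U}+\mathsf{C})\sqrt{\log(HT)\log\log(HT)}}$, using $2^Jg_0 = \Theta(HT)$. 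So the underestimating phase already fits within the claimed budget.

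For the sufficient phase, the probability that the run ever reaches the $(J+\ell)$-th invocation ($\ell\geq 0$) is at most $(1/3)^{\ell}$, since each preceding sufficient invocation succeeds independently with probability at least $2/3$, and a success terminates the run. The cost of the $(J+\ell)$-th invocation is $c_{J+\ell} = \bigO{\mathsf{S}\sqrt{\log(2^\ell HT)} + \sqrt{2^\ell HT}\,(\mathsf{U}+\mathsf{C})\sqrt{\log(2^\ell HT)\log\log(2^\ell HT)}}$, so the expected cost of this phase is $\sum_{\ell\geq 0}(1/3)^\ell c_{J+\ell}$. Its $\mathsf{S}$-part is $\mathsf{S}\sum_{\ell}(1/3)^\ell\sqrt{\log HT+\ell} = \bigO{\mathsf{S}\sqrt{\log HT}}$, and its $(\mathsf{U}+\mathsf{C})$-part is $(\mathsf{U}+\mathsf{C})\sqrt{HT}\sum_{\ell}(\sqrt2/3)^\ell\sqrt{(\log HT+\ell)(\log\log HT+\log\ell)} = \bigO{\sqrt{HT}\,(\mathsf{U}+\mathsf{C})\sqrt{\log HT\,\log\log HT}}$, because $\sqrt2/3<1$ so the geometric decay dominates the polylogarithmic growth of the remaining factor. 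Adding the two phases gives the stated expected complexity $\bigO{\mathsf{S}\log^{1.5}(HT) + \sqrt{HT}\,(\mathsf{U}+\mathsf{C})\sqrt{\log(HT)\log\log(HT)}}$.

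There is no genuine obstacle here: this is the standard exponential-search-with-verification paradigm of Boyer, Brassard, H{\o}yer and Tapp~\cite{boyer1998TightBoundsOnQuantumSearching}. The two points that need care are (i) soundness of the stopping rule, which is why the per-iteration $\check(M)$ verification is essential — it guarantees we never output a non-marked vertex even while the current guess is too small; and (ii) the fact that \emph{overshooting} the guess is harmless, i.e.\ $\mathcal{A}(g)$ works for every valid upper bound $g$ on the hitting time (unlike fixed-iteration Grover search), which is exactly the content of Theorem~\ref{thm:main} and is what makes a plain geometric schedule sufficient. Everything else is the geometric-series bookkeeping above, whose one slightly non-obvious feature is that the total cost is dominated, up to constant factors, by the \emph{last} underestimating guess $2^{J-1}g_0 = \Theta(HT)$ together with the expected $\bigO{1}$ sufficient guesses starting at $\Theta(HT)$.
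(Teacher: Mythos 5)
Your proposal is correct and is exactly the approach the paper indicates: run the bounded-error algorithm of Theorem~\ref{thm:main} with geometrically increasing guesses for the hitting-time bound, verify each output with $\check(M)$, and stop on the first verified marked vertex, \`a la Boyer–Brassard–H{\o}yer–Tapp. The paper states this only as a one-sentence remark, so your write-up simply supplies the geometric-series bookkeeping (underestimating versus sufficient phase) that the paper leaves implicit.
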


\subsection{Quantum fast-forwarding}\label{subsec:fast-forwarding}
We will use the quantum fast-forwarding technique of Apers and Sarlette \cite{apers2018QFastForwardMarkovChains}, which allows us to, in some very ``quantum'' sense, apply $t$ steps of a walk in only $\sqrt{t}$ calls to its update operation. We invoke their main result and state it in a slightly adapted form.

\begin{theorem}[\cite{apers2018QFastForwardMarkovChains}]\label{thm:fast-forwarding}
Let $\eps\in(0,1)$, $s\in[0,1]$ and $t\in\mathbb{N}$. Let $\cal P$ be any reversible Markov chain on state space $X$, and let $\mathsf{Q}$ be the cost of implementing the (controlled) quantum walk operator $W(s)$. There is a quantum algorithm with complexity $\bigO{\sqrt{t\log(1/\eps)}\mathsf{Q}}$ that takes input $\ket{\barO}\ket{\psi}\in \mathrm{span}\{\ket{\barO}\ket{x}:x\in X\}$, and outputs a state that is $\eps$-close to a state of the form
$$\ket{0}^{\!\otimes a}\ket{\barO}D^t\ket{\psi}+\ket{\Gamma}$$
where $a=\bigO{\log(t\log(1/\eps))}$ and $\ket{\Gamma}$ is some garbage state that has no support on states containing $\ket{0}^{\!\otimes a}\ket{\barO}$ in the first two registers.%, and $\nrm{\ket{\Gamma}}^2 = 1 - \nrm{D^t\ket{\psi}}^2$ (which may depend on $\ket{\psi}$ and $t$). 
\end{theorem}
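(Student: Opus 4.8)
The plan is to reduce the claim to implementing a \emph{low-degree} polynomial of the discriminant matrix $D(s)$ and then to realize that polynomial as a linear combination of powers of the quantum walk operator. The starting point is Lemma~\ref{lem:L2}: for every $j\in\N$ we have $\Pi_0 W^j(s)\Pi_0=\ketbra{\barO}{\barO}\otimes T_j(D(s))$, where $T_j$ is the degree-$j$ Chebyshev polynomial of the first kind; together with the facts that $D(s)$ is Hermitian with spectrum in $[-1,1]$ and that $W(s)$ is unitary, this holds for every $s\in[0,1]$ (at $s=1$ one takes $D(1)=\diag{D_{UU},I}$ and runs the same algebraic proof of Lemma~\ref{lem:L2}). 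Hence, if $x^t$ is well approximated on $[-1,1]$ by a polynomial $\tilde p(x)=\sum_{j=0}^{m}c_j T_j(x)$ of small degree $m$, then applying $\tilde p(D(s))$ to the $\ket{\barO}$-block costs about $m$ quantum walk steps, and $\tilde p(D(s))$ is close to $D^t(s)$ in operator norm.

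For the polynomial I would use the exact Chebyshev expansion of the monomial obtained from $\cos^t\theta=2^{-t}\sum_{k=0}^{t}\binom{t}{k}e^{\mathrm{i}(t-2k)\theta}$: pairing $k$ with $t-k$ yields $x^t=\sum_k c_{t-2k}\,T_{t-2k}(x)$ with coefficients $c_j=2^{1-t}\binom{t}{(t-j)/2}$ that are all nonnegative (the middle term $j=0$ for even $t$ being half that), and, since $T_j(1)=1$, satisfy $\sum_j c_j=1$. Truncating to $\lvert j\rvert\le m$ discards exactly the probability that $\mathrm{Bin}(t,\tfrac12)$ deviates from its mean by more than $m/2$, which by a standard Chernoff bound is at most $\eps$ once $m=\bigO{\sqrt{t\log(1/\eps)}}$. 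The truncated polynomial $\tilde p=\sum_{\lvert j\rvert\le m}c_j T_j$ then satisfies $\sup_{x\in[-1,1]}\lvert x^t-\tilde p(x)\rvert\le\sum_{\lvert j\rvert>m}c_j\le\eps$ because $\lvert T_j\rvert\le1$ on $[-1,1]$, and its coefficient $1$-norm satisfies $\lVert\tilde c\rVert_1:=\sum_{\lvert j\rvert\le m}c_j\in[1-\eps,1]$.

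To implement $\tilde p(D(s))$ I would run a standard linear-combination-of-unitaries circuit on an $a=\bigO{\log m}$-qubit control register: a \textsc{Prepare} step $\ket{0}^{\otimes a}\mapsto\lVert\tilde c\rVert_1^{-1/2}\sum_{\lvert j\rvert\le m}\sqrt{c_j}\,\ket{j}$ (implementable with $\bigO{m}$ elementary gates once the $c_j$ are computed to enough classical precision); a \textsc{Select} step $\sum_j\ketbra{j}{j}\otimes W^j(s)$, realized by applying the controlled walk operator $W(s)$ a total of $m$ times, once conditioned on ``the value encoded in the control register is $\ge i$'' for each $i=1,\dots,m$ (cost $\bigO{m\mathsf{Q}}$); and the inverse \textsc{Prepare}. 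Applied to $\ket{0}^{\otimes a}\ket{\barO}\ket{\psi}$, the component of the output whose control register is $\ket{0}^{\otimes a}$ is $\lVert\tilde c\rVert_1^{-1}\ket{0}^{\otimes a}\sum_{\lvert j\rvert\le m}c_j\,W^j(s)\ket{\barO}\ket{\psi}$; projecting its first walk register onto $\ket{\barO}$ and using Lemma~\ref{lem:L2} gives $\lVert\tilde c\rVert_1^{-1}\ket{0}^{\otimes a}\ket{\barO}\,\tilde p(D(s))\ket{\psi}$, while everything else is orthogonal to $\ket{0}^{\otimes a}\ket{\barO}$ in the first two registers and is absorbed into $\ket{\Gamma}$. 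Since $\lVert\tilde c\rVert_1=1-\bigO{\eps}$ the prefactor is $1-\bigO{\eps}$, so no amplitude amplification is needed: the total cost is $\bigO{m\mathsf{Q}}=\bigO{\sqrt{t\log(1/\eps)}\,\mathsf{Q}}$, and combining $\lVert\tilde c\rVert_1^{-1}=1+\bigO{\eps}$ with $\lVert\tilde p(D(s))-D^t(s)\rVert\le\eps$ shows the output is $\bigO{\eps}$-close to $\ket{0}^{\otimes a}\ket{\barO}D^t(s)\ket{\psi}+\ket{\Gamma}$; rescaling $\eps$ by a constant and noting $a=\bigO{\log m}=\bigO{\log(t\log(1/\eps))}$ completes the argument.

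The step I expect to be the main obstacle is the polynomial approximation: one must verify that the Chebyshev coefficients of the monomial are nonnegative and sum to exactly $1$ — this is precisely what makes the resulting LCU block-encoding have subnormalization $1$, so that the $\sqrt{t}$ savings come with no loss in success amplitude — and that the binomial tail bound really yields degree $\bigO{\sqrt{t\log(1/\eps)}}$ rather than $\bigO{t}$. The remaining ingredients (the LCU plumbing, the cost bookkeeping for \textsc{Select} and \textsc{Prepare}, and the appeal to Lemma~\ref{lem:L2}) are routine.
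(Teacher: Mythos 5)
This statement is not proved in the paper: it is Apers and Sarlette's quantum fast-forwarding theorem, imported as a black box (the paper only remarks that it can be rederived via qubitization or quantum singular value transformation). So there is no in-paper proof to compare against; what you have written is a self-contained derivation, and it is correct. Moreover it is essentially \emph{the} standard derivation, matching both the original Apers--Sarlette argument and the LCU/qubitization route the paper alludes to: the Chebyshev expansion $x^t=2^{-t}\sum_k\binom{t}{k}T_{t-2k}(x)$ with nonnegative coefficients summing to $1$, the Hoeffding truncation at degree $m=\bigO{\sqrt{t\log(1/\eps)}}$ (valid uniformly on $[-1,1]$ because $|T_j|\le 1$ there), the identification $\Pi_0 W^j(s)\Pi_0=\ketbra{\barO}{\barO}\otimes T_j(D(s))$ from Lemma~\ref{lem:L2}, and a select/prepare LCU circuit whose subnormalization is $\lVert\tilde c\rVert_1\ge 1-\eps$ so that no amplification is needed. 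The only points worth tightening are cosmetic: the index set should be written as $j\in\{t-2k: 0\le k\le t\}$ with $T_{-j}=T_j$ folded in, and the final closeness bound should explicitly absorb both error sources ($\lVert\tilde p(D(s))-D^t(s)\rVert\le\eps$ and the $\lVert\tilde c\rVert_1^{-1}=1+\bigO{\eps}$ prefactor) into a constant rescaling of $\eps$, which you do note. Your flagged ``main obstacle'' (nonnegativity and unit sum of the Chebyshev coefficients of the monomial) is indeed the crux and you have verified it correctly via $T_j(1)=1$.
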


To gain some intuition it is useful to think about the $W$ walk operator as a block-encoding of the discriminant matrix $D$, i.e., a unitary matrix containing $D$ in the top-left corner. In this terminology, fast-forwarding reads as implementing a block-encoding of $D^t$ by using the block-encoding of $D$ only $\propto\sqrt{t}$ times. By this insight one can rederive Theorem~\ref{thm:fast-forwarding} via recent qubitization~\cite{low2017HamSimUnifAmp} or quantum singular value transformation~\cite{gilyen2018QSingValTransf}~result as well.

Consider the case when we start with the subnormalized vector $\ket{\sqrt{\uppi_U}}=\sum_{x\in U}\sqrt{\uppi_x}\ket{x}$ and apply the ``fast-forwarded'' Markov chain from Theorem~\ref{thm:fast-forwarding}, before measuring. We show how to re-express the probability of measuring a marked element in terms of the interpolated walk $\PM(s)$. The probability of measuring a marked state is given by the square of:\footnote{For a parametrized matrix $M(s)$ we denote $(M(s))^t$ simply by $M^t(s)$, so for example $\PM^t(s)\equiv (\PM(s))^t$.}
\begin{align}
\nrm{\Pi_M D^t(s)\ket{\sqrt{\uppi_U}}} &\geq \nrm{\Pi_M D^t(s)\ket{\sqrt{\uppi_U}}}\nrm{\Pi_M D^{\hat{t}}(s)\ket{\sqrt{\uppi_U}}}\tag*{$\,\forall \hat{t}$, since $\nrm{D(s)}=1$}\\
&\geq \bra{\sqrt{\uppi_U}}D^t(s)\Pi_M D^{\hat{t}}(s)\ket{\sqrt{\uppi_U}} \tag*{by Cauchy-Schwarz}\\
&=\bra{\sqrt{\uppi_U}}\diag{\uppi(s)}^{\frac12}\PM^t(s)\Pi_M  \PM^{\hat{t}}(s)\diag{\uppi(s)}^{-\frac12}\ket{\sqrt{\uppi_U}} \tag*{by Eq.~\eqref{eq:discriminant}}\\
&\overset{}{=}\sum_{x,z\in U}\uppi_x\bra{x}\PM^t(s)\Pi_M\PM^{\hat{t}}(s)\ket{z}.\label{eq:probMUPrime}
\end{align}
In the last equality, we have used the fact that, from Eq.~\eqref{eq:interpolDistDef}, $\uppi(s)$ restricted to $U$ is proportional to $\uppi$, so for some $\alpha$, $\bra{\sqrt{\uppi_U}}\diag{\uppi(s)}^{\frac12}=\bra{\sqrt{\uppi_U}}\sqrt{\alpha}\diag{\uppi_U}^{\frac12}$, and $\diag{\uppi(s)}^{-\frac12}\ket{\sqrt{\uppi_U}}=\frac{1}{\sqrt{\alpha}}\diag{\uppi_U}^{-\frac12}\ket{\sqrt{\uppi_U}}$. 

The expression in \eqref{eq:probMUPrime}, equivalently expressed as $\nrm{\bra{\uppi_U}\PM^t(s)\Pi_M\PM^{\hat{t}}(s)\Pi_U}_1$, is the probability that upon starting from the stationary distribution of $\PM$ and evolving according to $\PM(s)$, the first vertex is unmarked, after $t$ steps we are at a marked vertex, and after another $\hat{t}$ steps we are at an unmarked vertex again. We summarize this in the following lemma:
\begin{lemma}\label{lem:probMUPrime}
Let $s\in [0,1)$, and $\PM$ be any reversible Markov process. Let $Y(s)=(Y_i(s))_{i=0}^{\infty}$ be the Markov chain evolving according to $\PM(s)$ starting from $Y_0(s)\sim \uppi$. Then for any $t,\hat{t}\in\mathbb{N}$, letting $t'=t+\hat{t}$:
\begin{equation}
\nrm{\Pi_M D^t(s)\ket{\sqrt{\uppi_U}}}\geq \Pr\left(Y_0(s)\in U,Y_t(s)\in M,Y_{t'}(s)\in U\right).\label{eq:probMUY}
\end{equation}
\end{lemma}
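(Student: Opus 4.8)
The plan is to package the chain of (in)equalities assembled in the paragraphs preceding the lemma, and then supply the elementary probabilistic reading of the resulting scalar. So the first step is to verify the three relations used in the displayed computation. For the first one, note that $D(s)$ is real symmetric (it is the discriminant of the reversible chain $\PM(s)$) with $\nrm{D(s)}=1$, since the eigenvalues of the stochastic matrix $\PM(s)$ lie in $[-1,1]$; moreover $\nrm{\ket{\sqrt{\uppi_U}}}^2=\sum_{x\in U}\uppi_x=1-p_M\le 1$. Hence $\nrm{\Pi_M D^{\hat t}(s)\ket{\sqrt{\uppi_U}}}\le 1$ for every $\hat t$, so multiplying $\nrm{\Pi_M D^{t}(s)\ket{\sqrt{\uppi_U}}}$ by this factor does not increase it. The second relation is Cauchy--Schwarz applied to the vectors $\Pi_M D^{t}(s)\ket{\sqrt{\uppi_U}}$ and $\Pi_M D^{\hat t}(s)\ket{\sqrt{\uppi_U}}$, combined with $\Pi_M^2=\Pi_M$ and the symmetry of $D(s)$, which together turn the product of norms into the (real) bilinear form $\bra{\sqrt{\uppi_U}}D^{t}(s)\Pi_M D^{\hat t}(s)\ket{\sqrt{\uppi_U}}$.

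Next I would convert $D(s)$ into $\PM(s)$. Since $\uppi(s)$ is the unique stationary distribution of $\PM(s)$, Eq.~\eqref{eq:discriminant} applied to $\PM(s)$ gives $D^{t}(s)=\diag{\uppi(s)}^{1/2}\PM^{t}(s)\diag{\uppi(s)}^{-1/2}$, and likewise for $\hat t$. Substituting, the two inner copies of $\diag{\uppi(s)}^{\pm1/2}$ that flank $\Pi_M$ cancel because $\Pi_M$ is diagonal and commutes with $\diag{\uppi(s)}$, leaving $\bra{\sqrt{\uppi_U}}\diag{\uppi(s)}^{1/2}\PM^{t}(s)\Pi_M\PM^{\hat t}(s)\diag{\uppi(s)}^{-1/2}\ket{\sqrt{\uppi_U}}$. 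From Eq.~\eqref{eq:interpolDistDef}, the restriction of $\uppi(s)$ to $U$ equals $\alpha\,\uppi_U$ with $\alpha=(1-s)/(1-s(1-p_M))$; since $\ket{\sqrt{\uppi_U}}$ is supported on $U$, the left factor becomes $\sqrt{\alpha}\sum_{x\in U}\uppi_x\bra{x}$ and the right factor becomes $\frac{1}{\sqrt{\alpha}}\sum_{z\in U}\ket{z}$, so the $\alpha$'s cancel and the scalar equals $\sum_{x,z\in U}\uppi_x\bra{x}\PM^{t}(s)\Pi_M\PM^{\hat t}(s)\ket{z}$, which is exactly the quantity in Eq.~\eqref{eq:probMUPrime}.

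Finally I would read off the probabilistic meaning, respecting the paper's convention that distributions are row vectors acting on the left. Viewing $\bra{x}$ as the indicator row vector $e_x^{\top}$, the row vector $e_x^{\top}\PM^{t}(s)$ is the law of $Y_t(s)$ given $Y_0(s)=x$; right-multiplying by $\Pi_M$ restricts it to $M$, another $\hat t=t'-t$ steps of $\PM(s)$ propagate it, and pairing with $\ket{z}=e_z$ extracts the $z$-entry. By the Markov property and summing over the intermediate marked vertex, $\bra{x}\PM^{t}(s)\Pi_M\PM^{\hat t}(s)\ket{z}=\Pr\!\left(Y_t(s)\in M,\ Y_{t'}(s)=z\mid Y_0(s)=x\right)$; summing over $z\in U$ gives $\Pr\!\left(Y_t(s)\in M,\ Y_{t'}(s)\in U\mid Y_0(s)=x\right)$, and since $Y_0(s)\sim\uppi$ we have $\uppi_x=\Pr(Y_0(s)=x)$, so weighting by $\uppi_x$ and summing over $x\in U$ yields $\Pr\!\left(Y_0(s)\in U,\ Y_t(s)\in M,\ Y_{t'}(s)\in U\right)$, as claimed.

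I do not expect a real obstacle: the argument is mechanical once the preceding discussion is in place. The only points that require care are (i) checking that the normalization constant $\alpha$ genuinely cancels between the bra and ket factors, so that the final inequality carries the bare weights $\uppi_x$ rather than rescaled ones, and (ii) keeping the row-versus-column vector bookkeeping consistent when translating the bilinear form $\sum_{x,z\in U}\uppi_x\bra{x}\PM^{t}(s)\Pi_M\PM^{\hat t}(s)\ket{z}$ into a statement about the trajectory of the chain $Y(s)$.
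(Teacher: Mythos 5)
Your proposal is correct and follows the paper's own argument step by step: the paper does not give a separate proof of the lemma, but rather establishes the chain of inequalities culminating in Eq.~\eqref{eq:probMUPrime} in the preceding paragraph (using $\nrm{D(s)}=1$, Cauchy--Schwarz, Eq.~\eqref{eq:discriminant} applied to $\PM(s)$, and the cancellation of the normalization between $\uppi(s)|_U$ and $\uppi_U$), and then states the lemma as the probabilistic reading of that scalar. Your verification of each step, including the identification $\bra{x}\PM^t(s)\Pi_M\PM^{\hat t}(s)\ket{z}=\Pr(Y_t(s)\in M, Y_{t'}(s)=z\mid Y_0(s)=x)$ via the Markov property and the row-vector convention, matches the paper's intent exactly.
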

Thus, it suffices to lower bound the probability in \eqref{eq:probMUY} by $\widetilde{\Omega}(1)$ for some choice of $s$ and $t=\bigO{HT}$. Note that $t'>t$ can be arbitrarily large. In the next section, we lower bound \eqref{eq:probMUY}.

\subsection{Combinatorial Lemma}
To lower bound \eqref{eq:probMUY} by $\widetilde{\Omega}(1)$, we want to prove that (for some $s$), if we start in the stationary distribution and run the chain, there is some random choice of $t,t'=\bigO{HT}$ with $t'>t$ (in fact, $t'$ could also be much larger than $HT$) such that with constant probability, the $t$-th vertex is marked, and the $t'$-th vertex is unmarked. In this section, we reduce this problem to a simple combinatorial statement, which we prove in Lemma \ref{lem:combinatorial}.

Let $Y=(Y_i)_{i=0}^{\infty}$ be a Markov chain evolving according to $\PM$ starting from $Y_0\sim \uppi$.\footnote{Since we start in the stationary distribution actually this distribution is also translationally invariant and is the same if we look forward or backward -- due to reversibility. However our Corollary~\ref{cor:UMU} does not use these properties -- using these one might be able to prove a stronger $\Omega(1)$ lower bound for a well-chosen value of $s$.} Let $Y(s)=(Y_i(s))_{i=0}^{\infty}$ be defined to be the same chain as $Y$, except that for every marked vertex in $Y$, $Y(s)$ stays in that vertex for a length of time that is geometrically distributed with parameter $1-s$ (mean $\frac{1}{1-s}$). More precisely, let $k_1<k_2<\dots$ be the indices such that $Y_{k_j}$ is marked, and let $L_1,L_2,\dots$ be geometric random variables with mean $\frac{1}{1-s}$. Then if $\bar{L}_j = \sum_{j'=1}^j(L_{j'}-1)$,
$$Y_i(s) = \left\{\begin{array}{ll}
Y_i  & \mbox{if }i\in\{0,\dots,k_1\}\\
Y_{k_j} & \mbox{if }i\in\{k_j+\bar{L}_{j-1},\dots, k_j+\bar{L}_{j}\}\\
Y_{i-\bar{L}_j} & \mbox{if }i\in \{k_j+\bar{L}_j+1 ,\dots, k_{j+1}+\bar{L}_j\}.
\end{array}\right.$$

\begin{figure}[H]
\small
\begin{tabular}{cccccccccccccccccccccc}
$i$         	& 0 & 1 & 2 & 3 & 4 & 5 & 6 & 7 & 8 & 9 &10 &11 &12 &13 &14 &15 &16 &17 &18 &19 &20 \\
$Y_i$     	& $\sf 0$ & $\sf 1$ & $\sf 2$ & $\sf 3$ & $\color{red}\sf 4$ & $\sf 3$ & $\sf 2$ & $\sf 3$ & $\color{red}\sf 4$ & $\sf 3$ & $\color{red}\sf 4$ & $\sf 3$ & $\sf 2$  & $\sf 1$ &  \dots  & & & &     &     &     \\
$Y_i(s)$ 	& $\sf 0$ & $\sf 1$ & $\sf 2$ & $\sf 3$ & $\color{red}\sf 4$ & $\color{red}\sf 4$ & $\color{red}\sf 4$ & $\color{red}\sf 4$ & $\sf 3$ & $\sf 2$ & $\sf 3$ & $\color{red}\sf 4$ & $\color{red}\sf 4$ & $\color{red}\sf 4$   & $\sf 3$ & $\color{red}\sf 4$  & $\color{red}\sf 4$  & $\color{red}\sf 4$  & $\sf 3$ & $\sf 2$  & $\sf 1$   \\ 
\end{tabular}
\caption{Example of $Y(s)$ and $Y$ when $\PM$ is a walk on a line, $s=\frac{3}{4}$, and $\sf \color{red}4$ is marked.}
\end{figure}

It is easy to see that the marginal distribution on $Y(s)$ is a Markov chain evolving according to $\PM(s)$ starting from $\uppi$.\footnote{What we have actually described is a \emph{coupling} of the random variables $Y$ and $Y(s)$ obtained from starting in the distribution $\uppi$ and running $\PM$ and $\PM(s)$ respectively. However, it is not the same kind of coupling that is commonly used between Markov chains, as the chains start together, but do not remain together.} 
We are only interested in whether each state in the chain is marked, so we consider random variables $\bar{Y}_i,\bar{Y}_i(s)$ supported on $\{\mathrm{marked},\mathrm{unmarked}\}$. Then we are interested in lower bounding
\begin{equation}
\Pr\left(\bar{Y}_0(s)=\mathrm{unmarked}, \bar{Y}_t(s)=\mathrm{marked}, \bar{Y}_{t'}(s)=\mathrm{unmarked}\right).\label{eq:probMUbarY}
\end{equation}

A sequence of the random variables $\bar{Y}$ can be represented visually by a sequence of boxes, each of which is either unmarked (white) or marked (black). Then $\bar{Y}(s)$ is the same sequence, except that every black box is replaced with a string of black boxes, whose length is geometrically distributed with mean $r=\frac{1}{1-s}$. Thus, a good approximation of the sequence $\bar{Y}(s)$ is obtained by starting with $\bar{Y}$ and replacing each black box by a black box of length $r$, which we call an \emph{$r$-rescaling} of $\bar{Y}$, and denote $\bar{Y}^{(r)}$. Note that $r$ need not be integral, but it is convenient and sufficient to assume that it is. 

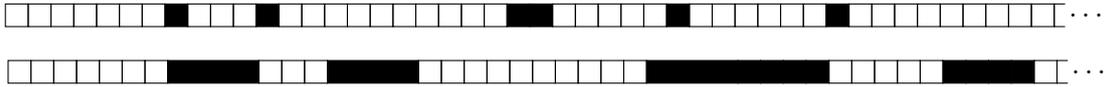
\begin{figure}[H]
\centering
\begin{tikzpicture}[scale=.3]
\draw (0,0) rectangle (1,1);
\draw (1,0) rectangle (2,1);
\draw (2,0) rectangle (3,1);
\draw (3,0) rectangle (4,1);
\draw (4,0) rectangle (5,1);
\draw (5,0) rectangle (6,1);
\draw (6,0) rectangle (7,1);
\filldraw (7,0) rectangle (8,1);
\draw (8,0) rectangle (9,1);
\draw (9,0) rectangle (10,1);
\draw (10,0) rectangle (11,1);
\filldraw (11,0) rectangle (12,1);
\draw (12,0) rectangle (13,1);
\draw (13,0) rectangle (14,1);
\draw (14,0) rectangle (15,1);
\draw (15,0) rectangle (16,1);
\draw (16,0) rectangle (17,1);
\draw (17,0) rectangle (18,1);
\draw (18,0) rectangle (19,1);
\draw (19,0) rectangle (20,1);
\draw (20,0) rectangle (21,1);
\draw (21,0) rectangle (22,1);
\filldraw (22,0) rectangle (23,1);
\filldraw (23,0) rectangle (24,1);
\draw (24,0) rectangle (25,1);
\draw (25,0) rectangle (26,1);
\draw (26,0) rectangle (27,1);
\draw (27,0) rectangle (28,1);
\draw (28,0) rectangle (29,1);
\filldraw (29,0) rectangle (30,1);
\draw (30,0) rectangle (31,1);
\draw (31,0) rectangle (32,1);
\draw (32,0) rectangle (33,1);
\draw (33,0) rectangle (34,1);
\draw (34,0) rectangle (35,1);
\draw (35,0) rectangle (36,1);
\filldraw (36,0) rectangle (37,1);
\draw (37,0) rectangle (38,1);
\draw (38,0) rectangle (39,1);
\draw (39,0) rectangle (40,1);
\draw (40,0) rectangle (41,1);
\draw (41,0) rectangle (42,1);
\draw (42,0) rectangle (43,1);
\draw (43,0) rectangle (44,1);
\draw (44,0) rectangle (45,1);
\draw (45,0) rectangle (46,1);
\draw (46,0) rectangle (47,1);
\filldraw[white] (46.5,-.1) rectangle (47.1,1.1);
\node at (47.5,.5) {$\dots$};

\node at (24.5,-2){\begin{tikzpicture}[scale=.3]
\draw (0,0) rectangle (1,1);
\draw (1,0) rectangle (2,1);
\draw (2,0) rectangle (3,1);
\draw (3,0) rectangle (4,1);
\draw (4,0) rectangle (5,1);
\draw (5,0) rectangle (6,1);
\draw (6,0) rectangle (7,1);
\filldraw (7,0) rectangle (11,1);
\draw (11,0) rectangle (12,1);
\draw (12,0) rectangle (13,1);
\draw (13,0) rectangle (14,1);
\filldraw (14,0) rectangle (18,1);
\draw (18,0) rectangle (19,1);
\draw (19,0) rectangle (20,1);
\draw (20,0) rectangle (21,1);
\draw (21,0) rectangle (22,1);
\draw (22,0) rectangle (23,1);
\draw (23,0) rectangle (24,1);
\draw (24,0) rectangle (25,1);
\draw (25,0) rectangle (26,1);
\draw (26,0) rectangle (27,1);
\draw (27,0) rectangle (28,1);
\filldraw (28,0) rectangle (32,1);
\filldraw (32,0) rectangle (33,1);
\filldraw (33,0) rectangle (34,1);
\filldraw (34,0) rectangle (35,1);
\filldraw (35,0) rectangle (36,1);
\draw (36,0) rectangle (37,1);
\draw (37,0) rectangle (38,1);
\draw (38,0) rectangle (39,1);
\draw (39,0) rectangle (40,1);
\draw (40,0) rectangle (41,1);
\filldraw (41,0) rectangle (42,1);
\filldraw (42,0) rectangle (43,1);
\filldraw (43,0) rectangle (44,1);
\filldraw (44,0) rectangle (45,1);
\draw (45,0) rectangle (46,1);
\draw (46,0) rectangle (47,1);
\filldraw[white] (46.5,-.1) rectangle (47.1,1.1);
\node at (47.5,.5) {$\dots$};
%\draw (47,0) rectangle (48,1);
%\filldraw (48,0) rectangle (52,1);
%\draw (37,0) rectangle (38,1);
%\draw (38,0) rectangle (39,1);
%\draw (39,0) rectangle (40,1);
\end{tikzpicture}};
\end{tikzpicture}
\caption{The first row shows a sequence drawn from $\bar{Y}$, and the second its $r$-rescaling, for $r=4$. The first row represents the sequence of unmarked and marked states visited by $\PM$, and the second is an approximation of the sequence of unmarked and marked states of $\PM(s)$ for $s=1-\frac{1}{r}$. 
}\label{fig:sequence}
\end{figure}

It will be sufficient to show that for some random choices $t,t'=\bigO{HT}$ with $t'>t$, we have both 
\begin{description}
\item[(1)] $\bar{Y}_t^{(r)}=\mathrm{marked}$ and 
\item[(2)] $\bar{Y}_{t'}^{(r)}=\mathrm{unmarked}$,  
\end{description}
with $\widetilde{\Omega}(1)$ probability (in $\bar{Y}$ and the randomness used to choose $t$ and $t'$) 
for some $r=\frac{1}{1-s}$.
Let $M_{\bar{Y}}^{(r)}[a,b]$ (resp.\ $U_{\bar{Y}}^{(r)}[a,b]$) be the set of $i\in \{a+1,a+2,\dots,b\}$ such that $\bar{Y}^{(r)}_i=\mathrm{marked}$ (resp.\ $\bar{Y}^{(r)}_i=\mathrm{unmarked}$). If we choose $t$ uniformly at random from some interval $\{a+1,\dots,b\}$, and $t'$ uniformly at random from some interval $\{a'+1,\dots,b'\}$, with $a'\geq b$, then it is sufficient to show that for a good choice of $r$, with constant probability in $\bar{Y}$, $|M_{\bar{Y}}^{(r)}[a,b]|/(b-a)$ and $|U_{\bar{Y}}^{(r)}[a',b']|/(b'-a')$ are $\widetilde{\Omega}(1)$.

Let $T=\lceil 3HT\rceil$, and suppose for the sake of this discussion that a marked vertex has no marked neighbour in $\PM$. This can be arranged by making two copies of the graph, ensuring that each transition switches from one copy of the graph to the other, and only considering the marked vertices in one copy to be marked (although we will ultimately not need this assumption).  In that case, for any even length interval $\{a+1,\dots,b\}$, the proportion of $t\in \{a+1,\dots,b\}$ such that $\bar{Y}_t=\mathrm{marked}$ is at most $\frac{1}{2}$. 

As a first attempt, suppose we choose $t$ uniformly at random from $\{1,\dots,2T\}$ and $t'$ uniformly at random from $\{2T+1,\dots,4T\}$. First note that, without any rescaling (i.e.\ with $r=1$), condition (2) holds, because $|M_{\bar{Y}}^{(1)}[2T,4T]|/2T\leq \frac{1}{2}$. 
It is also easy to see that upon running the non-interpolated walk, $\PM$, with high probability there will be a marked vertex in the first subsequence of length $T$. Thus, if we choose $s\geq 1-\frac{1}{T}$ so that $r\geq T$, then with high probability $|M_{\bar{Y}}^{(r)}[0,2T]|\geq T$, so condition (1) holds. However, after this rescaling, (2) might no longer hold. If, $\bar{Y}_t$ looks something like the first line of Figure \ref{fig:bad-sequence}, then before scaling (2) holds, but not (1), and after scaling by $r=T$, (1) holds, but not (2). 

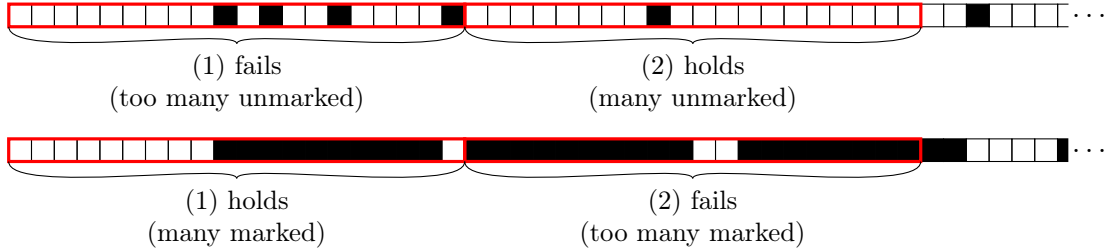
\begin{figure}[H]
\centering
\begin{tikzpicture}[scale=.3]
\draw (0,0) rectangle (1,1);
\draw (1,0) rectangle (2,1);
\draw (2,0) rectangle (3,1);
\draw (3,0) rectangle (4,1);
\draw (4,0) rectangle (5,1);
\draw (5,0) rectangle (6,1);
\draw (6,0) rectangle (7,1);
\draw (7,0) rectangle (8,1);
\draw (8,0) rectangle (9,1);
\filldraw (9,0) rectangle (10,1);
\draw (10,0) rectangle (11,1);
\filldraw (11,0) rectangle (12,1);
\draw (12,0) rectangle (13,1);
\draw (13,0) rectangle (14,1);
\filldraw (14,0) rectangle (15,1);
\draw (15,0) rectangle (16,1);
\draw (16,0) rectangle (17,1);
\draw (17,0) rectangle (18,1);
\draw (18,0) rectangle (19,1);
\filldraw (19,0) rectangle (20,1);
\draw (20,0) rectangle (21,1);
\draw (21,0) rectangle (22,1);
\draw (22,0) rectangle (23,1);
\draw (23,0) rectangle (24,1);
\draw (24,0) rectangle (25,1);
\draw (25,0) rectangle (26,1);
\draw (26,0) rectangle (27,1);
\draw (27,0) rectangle (28,1);
\filldraw (28,0) rectangle (29,1);
\draw (29,0) rectangle (30,1);
\draw (30,0) rectangle (31,1);
\draw (31,0) rectangle (32,1);
\draw (32,0) rectangle (33,1);
\draw (33,0) rectangle (34,1);
\draw (34,0) rectangle (35,1);
\draw (35,0) rectangle (36,1);
\draw (36,0) rectangle (37,1);
\draw (37,0) rectangle (38,1);
\draw (38,0) rectangle (39,1);
\draw (39,0) rectangle (40,1);
\draw (40,0) rectangle (41,1);
\draw (41,0) rectangle (42,1);
\filldraw (42,0) rectangle (43,1);
\draw (43,0) rectangle (44,1);
\draw (44,0) rectangle (45,1);
\draw (45,0) rectangle (46,1);
\draw (46,0) rectangle (47,1);
\filldraw[white] (46.5,-.1) rectangle (47.1,1.1);
\node at (47.5,.5) {$\dots$};

\draw[very thick,red] (0,0) rectangle (20,1);
\draw[very thick,red] (20,0) rectangle (40,1);

\draw (0,0) .. controls (0,-1.5) and (10,0) .. (10,-.75);
\draw (20,0) .. controls (20,-1.5) and (10,0) .. (10,-.75);
\node at (10,-2.5) {\parbox{2in}{\small \centering (1) fails\\ (too many unmarked)}};

\draw (20,0) .. controls (20,-1.5) and (30,0) .. (30,-.75);
\draw (40,0) .. controls (40,-1.5) and (30,0) .. (30,-.75);
\node at (30,-2.5) {\parbox{2in}{\small\centering (2) holds\\ (many unmarked)}};

\node at (24.35,-7.5){\begin{tikzpicture}[scale=.3]
\draw (0,0) rectangle (1,1);
\draw (1,0) rectangle (2,1);
\draw (2,0) rectangle (3,1);
\draw (3,0) rectangle (4,1);
\draw (4,0) rectangle (5,1);
\draw (5,0) rectangle (6,1);
\draw (6,0) rectangle (7,1);
\draw (7,0) rectangle (8,1);
\draw (8,0) rectangle (9,1);
\filldraw (9,0) rectangle (10,1);
\filldraw (10,0) rectangle (11,1);
\filldraw (11,0) rectangle (12,1);
\filldraw (12,0) rectangle (13,1);
\filldraw (13,0) rectangle (14,1);
\filldraw (14,0) rectangle (15,1);
\filldraw (15,0) rectangle (16,1);
\filldraw (16,0) rectangle (17,1);
\filldraw (17,0) rectangle (18,1);
\filldraw (18,0) rectangle (19,1);
\draw (19,0) rectangle (20,1);
\filldraw (20,0) rectangle (21,1);
\filldraw (21,0) rectangle (22,1);
\filldraw (22,0) rectangle (23,1);
\filldraw (23,0) rectangle (24,1);
\filldraw (24,0) rectangle (25,1);
\filldraw (25,0) rectangle (26,1);
\filldraw (26,0) rectangle (27,1);
\filldraw (27,0) rectangle (28,1);
\filldraw (28,0) rectangle (29,1);
\filldraw (29,0) rectangle (30,1);
\draw (30,0) rectangle (31,1);
\draw (31,0) rectangle (32,1);
\filldraw (32,0) rectangle (33,1);
\filldraw (33,0) rectangle (34,1);
\filldraw (34,0) rectangle (35,1);
\filldraw (35,0) rectangle (36,1);
\filldraw (36,0) rectangle (37,1);
\filldraw (37,0) rectangle (38,1);
\filldraw (38,0) rectangle (39,1);
\filldraw (39,0) rectangle (40,1);
\filldraw (40,0) rectangle (41,1);
\filldraw (41,0) rectangle (42,1);
\draw (42,0) rectangle (43,1);
\draw (43,0) rectangle (44,1);
\draw (44,0) rectangle (45,1);
\draw (45,0) rectangle (46,1);
\filldraw (46,0) rectangle (47,1);
\filldraw[white] (46.5,-.1) rectangle (47.1,1.1);
\node at (47.5,.5) {$\dots$};
%\draw (47,0) rectangle (48,1);
%\filldraw (48,0) rectangle (52,1);
%\draw (37,0) rectangle (38,1);
%\draw (38,0) rectangle (39,1);
%\draw (39,0) rectangle (40,1);

\draw[very thick,red] (0,0) rectangle (20,1);
\draw[very thick,red] (20,0) rectangle (40,1);

\draw (0,0) .. controls (0,-1.5) and (10,0) .. (10,-.75);
\draw (20,0) .. controls (20,-1.5) and (10,0) .. (10,-.75);
\node at (10,-2.5) {\parbox{2in}{\small \centering (1) holds\\ (many marked)}};

\draw (20,0) .. controls (20,-1.5) and (30,0) .. (30,-.75);
\draw (40,0) .. controls (40,-1.5) and (30,0) .. (30,-.75);
\node at (30,-2.5) {\parbox{2in}{\small\centering (2) fails\\ (too many marked)}};

\end{tikzpicture}};
\end{tikzpicture}
\caption{Illustration of the trade-off in the choice of the rescaling.}\label{fig:bad-sequence}
\end{figure}

The difficulty is that by scaling, as we create more marked boxes, we are pushing unmarked boxes out of the intervals of concern. There is a bijection between the $i^{\rm th}$ unmarked box in $\bar{Y}$ and the $i^{\rm th}$ unmarked box in $\bar{Y}^{(r)}$, but its overall position may have increased. To make this precise, let $\sigma_r(i)\in\mathbb{N}$ be the position of the $i^{\rm th}$ unmarked box in $\bar{Y}^{(r)}$. This is clearly either constant (if no marked box occurs before $\bar{Y}_i$) or strictly increasing in $r$ (otherwise). In particular, if $m(i)$ denotes the number of marked boxes before the $i^{\rm th}$ unmarked box in $\bar{Y}$, then $\sigma_r(i)=i+m(i)r$ is linear in $r$. This suggests that for small enough values $i$, as long as $m(i)\geq 1$ --- that is, there exists $j<i$ such that $\bar{Y}_j=\mathrm{marked}$ --- there should be a good choice of $r$ that pushes $\sigma_r(i)$ into the range from which we choose $t'$. 

Our second (and final) strategy will be to choose $t$ uniformly at random from $\{1,\dots,3T\}$, and $t'$ uniformly at random from $\{6T+1,\dots,12T\}$. Begin by scaling up by $r_0$, the largest scaling factor less than $3T$ such that $|M^{(r)}_{\bar{Y}}[T,3T]|/(2T)\leq \frac{3}{4}$ (for the sake of discussion, suppose it's exactly $\frac{3}{4}$). Then condition (1) holds for $r_0$, and this remains true even if we increase $r$. 

It may not be the case that scaling by $r_0$ ensures that condition (2) holds with constant probability. However, since $|U^{(r)}_{\bar{Y}}[T,3T]|/(2T)=\frac{1}{4}$, there are $\Theta(T)$ values $i$ with $\sigma_{r_0}(i)=i+m(i)r_0 \in \{T+1,\dots,3T\}$. Increasing $r$ will only increase the number of marked vertices in $\{1,\dots,3T\}$, increasing the probability of condition (1), but as marked vertices are being added to the window $\{1,\dots,3T\}$, they are pushing unmarked vertices to further positions. For a high enough value of $r$ (but not too high) we will push the $i^{\rm th}$ unmarked vertex into the window $\{6T+1,\dots,12T\}$. We can imagine searching for this good value $r$ by beginning with $r_0$ and repeatedly doubling it, as shown in Figure \ref{fig:moving-sequence}.

\begin{figure}[H]
\centering
\begin{tikzpicture}[scale=.275]

\node at (0,0){\begin{tikzpicture}[scale=.275]
\draw (0,0) rectangle (1,1);
\draw (1,0) rectangle (2,1);
\filldraw (2,0) rectangle (3,1);
\draw (3,0) rectangle (4,1);
\draw (4,0) rectangle (5,1);
\node at (4.5,.5) {\color{blue}$\star$};
\filldraw (5,0) rectangle (6,1);
\filldraw (6,0) rectangle (7,1);
\draw (7,0) rectangle (8,1);
\node at (7.5,.5) {\color{blue}$\star$};
\draw (8,0) rectangle (9,1);
\node at (8.5,.5) {\color{blue}$\star$};
\filldraw (9,0) rectangle (10,1);
\draw (10,0) rectangle (11,1);
\node at (10.5,.5) {\color{blue}$\star$};
\filldraw (11,0) rectangle (12,1);
\draw (12,0) rectangle (13,1);
\draw (13,0) rectangle (14,1);
\filldraw (14,0) rectangle (15,1);
\filldraw (15,0) rectangle (16,1);
\filldraw (16,0) rectangle (17,1);
\draw (17,0) rectangle (18,1);
\filldraw (18,0) rectangle (19,1);
\filldraw (19,0) rectangle (20,1);
\draw (20,0) rectangle (21,1);
\draw (21,0) rectangle (22,1);
\filldraw (22,0) rectangle (23,1);
\draw (23,0) rectangle (24,1);
\draw (24,0) rectangle (25,1);
\filldraw (25,0) rectangle (26,1);
\draw (26,0) rectangle (27,1);
\draw (27,0) rectangle (28,1);
\filldraw (28,0) rectangle (29,1);
\filldraw (29,0) rectangle (30,1);
\draw (30,0) rectangle (31,1);
\filldraw (31,0) rectangle (32,1);
\filldraw (32,0) rectangle (33,1);
\draw (33,0) rectangle (34,1);
\filldraw (34,0) rectangle (35,1);
\filldraw (35,0) rectangle (36,1);
\filldraw (36,0) rectangle (37,1);
\draw (37,0) rectangle (38,1);
\filldraw (38,0) rectangle (39,1);
\filldraw (39,0) rectangle (40,1);
\draw (40,0) rectangle (41,1);
\draw (41,0) rectangle (42,1);
\filldraw (42,0) rectangle (43,1);
\draw (43,0) rectangle (44,1);
\filldraw (44,0) rectangle (45,1);
\filldraw (45,0) rectangle (46,1);
\draw (46,0) rectangle (47,1);
\draw (47,0) rectangle (48,1);
%\draw (48,0) rectangle (49,1);
%\draw (49,0) rectangle (50,1);
%\draw (50,0) rectangle (51,1);
%\draw (51,0) rectangle (52,1);
\node at (53.5,.5) {\color{white}$\dots$};

\draw[very thick,red] (4,0) rectangle (12,1);
\draw[very thick,red] (24,0) rectangle (48,1);
\end{tikzpicture}};

\node at (0,-2.5){\begin{tikzpicture}[scale=.275]
\draw (0,0) rectangle (1,1);
\draw (1,0) rectangle (2,1);
\filldraw (2,0) rectangle (3,1);
\filldraw (3,0) rectangle (4,1);
\draw (4,0) rectangle (5,1);
\draw (5,0) rectangle (6,1);
\node at (5.5,.5) {\color{blue}$\star$};
\filldraw (6,0) rectangle (7,1);
\filldraw (7,0) rectangle (8,1);
\filldraw (8,0) rectangle (9,1);
\filldraw (9,0) rectangle (10,1);
\draw (10,0) rectangle (11,1);
\node at (10.5,.5) {\color{blue}$\star$};
\draw (11,0) rectangle (12,1);
\node at (11.5,.5) {\color{blue}$\star$};
\filldraw (12,0) rectangle (13,1);
\filldraw (13,0) rectangle (14,1);
\draw (14,0) rectangle (15,1);
\node at (14.5,.5) {\color{blue}$\star$};
\filldraw (15,0) rectangle (16,1);
\filldraw (16,0) rectangle (17,1);
\draw (17,0) rectangle (18,1);
\draw (18,0) rectangle (19,1);
\filldraw (19,0) rectangle (20,1);
\filldraw (20,0) rectangle (21,1);
\filldraw (21,0) rectangle (22,1);
\filldraw (22,0) rectangle (23,1);
\filldraw (23,0) rectangle (24,1);
\filldraw (24,0) rectangle (25,1);
\draw (25,0) rectangle (26,1);
\filldraw (26,0) rectangle (27,1);
\filldraw (27,0) rectangle (28,1);
\filldraw (28,0) rectangle (29,1);
\filldraw (29,0) rectangle (30,1);
\draw (30,0) rectangle (31,1);
\draw (31,0) rectangle (32,1);
\filldraw (32,0) rectangle (33,1);
\filldraw (33,0) rectangle (34,1);
\draw (34,0) rectangle (35,1);
\draw (35,0) rectangle (36,1);
\filldraw (36,0) rectangle (37,1);
\filldraw (37,0) rectangle (38,1);
\draw (38,0) rectangle (39,1);
\draw (39,0) rectangle (40,1);
\filldraw (40,0) rectangle (41,1);
\filldraw (41,0) rectangle (42,1);
\filldraw (42,0) rectangle (43,1);
\filldraw (43,0) rectangle (44,1);
\draw (44,0) rectangle (45,1);
\filldraw (45,0) rectangle (46,1);
\filldraw (46,0) rectangle (47,1);
\filldraw (47,0) rectangle (48,1);
\filldraw (48,0) rectangle (49,1);
\draw (49,0) rectangle (50,1);
\filldraw (50,0) rectangle (51,1);
\filldraw (51,0) rectangle (52,1);
\filldraw[white] (51.5,-.1) rectangle (52.1,1.1);
\node at (53,.5) {$\dots$};
%\draw (47,0) rectangle (48,1);
%\filldraw (48,0) rectangle (52,1);
%\draw (37,0) rectangle (38,1);
%\draw (38,0) rectangle (39,1);
%\draw (39,0) rectangle (40,1);

\draw[very thick,red] (4,0) rectangle (12,1);
\draw[very thick,red] (24,0) rectangle (48,1);

\end{tikzpicture}};

\node at (0,-5){\begin{tikzpicture}[scale=.275]
\draw (0,0) rectangle (1,1);
\draw (1,0) rectangle (2,1);
\filldraw (2,0) rectangle (3,1);
\filldraw (3,0) rectangle (4,1);
\filldraw (4,0) rectangle (5,1);
\filldraw (5,0) rectangle (6,1);
\draw (6,0) rectangle (7,1);
\draw (7,0) rectangle (8,1);
\node at (7.5,.5) {\color{blue}$\star$};
\filldraw (8,0) rectangle (9,1);
\filldraw (9,0) rectangle (10,1);
\filldraw (10,0) rectangle (11,1);
\filldraw (11,0) rectangle (12,1);
\filldraw (12,0) rectangle (13,1);
\filldraw (13,0) rectangle (14,1);
\filldraw (14,0) rectangle (15,1);
\filldraw (15,0) rectangle (16,1);
\draw (16,0) rectangle (17,1);
\node at (16.5,.5) {\color{blue}$\star$};
\draw (17,0) rectangle (18,1);
\node at (17.5,.5) {\color{blue}$\star$};
\filldraw (18,0) rectangle (19,1);
\filldraw (19,0) rectangle (20,1);
\filldraw (20,0) rectangle (21,1);
\filldraw (21,0) rectangle (22,1);
\draw (22,0) rectangle (23,1);
\node at (22.5,.5) {\color{blue}$\star$};
\filldraw (23,0) rectangle (24,1);
\filldraw (24,0) rectangle (25,1);
\filldraw (25,0) rectangle (26,1);
\filldraw (26,0) rectangle (27,1);
\draw (27,0) rectangle (28,1);
\draw (28,0) rectangle (29,1);
\filldraw (29,0) rectangle (30,1);
\filldraw (30,0) rectangle (31,1);
\filldraw (31,0) rectangle (32,1);
\filldraw (32,0) rectangle (33,1);
\filldraw (33,0) rectangle (34,1);
\filldraw (34,0) rectangle (35,1);
\filldraw (35,0) rectangle (36,1);
\filldraw (36,0) rectangle (37,1);
\filldraw (37,0) rectangle (38,1);
\filldraw (38,0) rectangle (39,1);
\filldraw (39,0) rectangle (40,1);
\filldraw (40,0) rectangle (41,1);
\draw (41,0) rectangle (42,1);
\filldraw (42,0) rectangle (43,1);
\filldraw (43,0) rectangle (44,1);
\filldraw (44,0) rectangle (45,1);
\filldraw (45,0) rectangle (46,1);
\filldraw (46,0) rectangle (47,1);
\filldraw (47,0) rectangle (48,1);
\filldraw (48,0) rectangle (49,1);
\filldraw (49,0) rectangle (50,1);
\draw (50,0) rectangle (51,1);
\draw (51,0) rectangle (52,1);
\filldraw[white] (51.5,-.1) rectangle (52.1,1.1);
\node at (53,.5) {$\dots$};
%\draw (47,0) rectangle (48,1);
%\filldraw (48,0) rectangle (52,1);
%\draw (37,0) rectangle (38,1);
%\draw (38,0) rectangle (39,1);
%\draw (39,0) rectangle (40,1);

\draw[very thick,red] (4,0) rectangle (12,1);
\draw[very thick,red] (24,0) rectangle (48,1);
\end{tikzpicture}};

\node at (0,-7.5){\begin{tikzpicture}[scale=.275] %4
\draw (0,0) rectangle (1,1);
\draw (1,0) rectangle (2,1);
\filldraw (2,0) rectangle (3,1);
\filldraw (3,0) rectangle (4,1);
\filldraw (4,0) rectangle (5,1);
\filldraw (5,0) rectangle (6,1);
\filldraw (6,0) rectangle (7,1);
\filldraw (7,0) rectangle (8,1);
\filldraw (8,0) rectangle (9,1);
\filldraw (9,0) rectangle (10,1);
\draw (10,0) rectangle (11,1);
\draw (11,0) rectangle (12,1);
\node at (11.5,.5) {\color{blue}$\star$};
\filldraw (12,0) rectangle (13,1);
\filldraw (13,0) rectangle (14,1);
\filldraw (14,0) rectangle (15,1);
\filldraw (15,0) rectangle (16,1);
\filldraw (16,0) rectangle (17,1);
\filldraw (17,0) rectangle (18,1);
\filldraw (18,0) rectangle (19,1);
\filldraw (19,0) rectangle (20,1);
\filldraw (20,0) rectangle (21,1);
\filldraw (21,0) rectangle (22,1);
\filldraw (22,0) rectangle (23,1);
\filldraw (23,0) rectangle (24,1);
\filldraw (24,0) rectangle (25,1);
\filldraw (25,0) rectangle (26,1);
\filldraw (26,0) rectangle (27,1);
\filldraw (27,0) rectangle (28,1);
\draw (28,0) rectangle (29,1);
\node at (28.5,.5) {\color{blue}$\star$};
\draw (29,0) rectangle (30,1);
\node at (29.5,.5) {\color{blue}$\star$};
\filldraw (30,0) rectangle (31,1);
\filldraw (31,0) rectangle (32,1);
\filldraw (32,0) rectangle (33,1);
\filldraw (33,0) rectangle (34,1);
\filldraw (34,0) rectangle (35,1);
\filldraw (35,0) rectangle (36,1);
\filldraw (36,0) rectangle (37,1);
\filldraw (37,0) rectangle (38,1);
\draw (38,0) rectangle (39,1);
\node at (38.5,.5) {\color{blue}$\star$};
\filldraw (39,0) rectangle (40,1);
\filldraw (40,0) rectangle (41,1);
\filldraw (41,0) rectangle (42,1);
\filldraw (42,0) rectangle (43,1);
\filldraw (43,0) rectangle (44,1);
\filldraw (44,0) rectangle (45,1);
\filldraw (45,0) rectangle (46,1);
\filldraw (46,0) rectangle (47,1);
\draw (47,0) rectangle (48,1);
\draw (48,0) rectangle (49,1);
\filldraw (49,0) rectangle (50,1);
\filldraw (50,0) rectangle (51,1);
\filldraw (51,0) rectangle (52,1);
\filldraw[white] (51.5,-.1) rectangle (52.1,1.1);
\node at (53,.5) {$\dots$};
%\draw (47,0) rectangle (48,1);
%\filldraw (48,0) rectangle (52,1);
%\draw (37,0) rectangle (38,1);
%\draw (38,0) rectangle (39,1);
%\draw (39,0) rectangle (40,1);

\draw[very thick,red] (4,0) rectangle (12,1);
\draw[very thick,red] (24,0) rectangle (48,1);
\end{tikzpicture}};

\node at (-28.5,0) {$r_0$};
\node at (-28.5,-2.5) {$2r_0$};
\node at (-28.5,-5) {$4r_0$};
\node at (-28.5,-7.5) {$8r_0$};
\end{tikzpicture}
\caption{As we double the scaling factor, we eventually push each unmarked vertex that began in the region $\{T+1,\dots,3T\}$, denoted by $\color{blue}\star$ symbols, into the region $\{6T+1,\dots,12T\}$, denoted by the right-most red rectangle. The same scaling doesn't work for every $\color{blue}\star$, but for every $\color{blue}\star$, there is some scaling that works.}\label{fig:moving-sequence}
\end{figure}
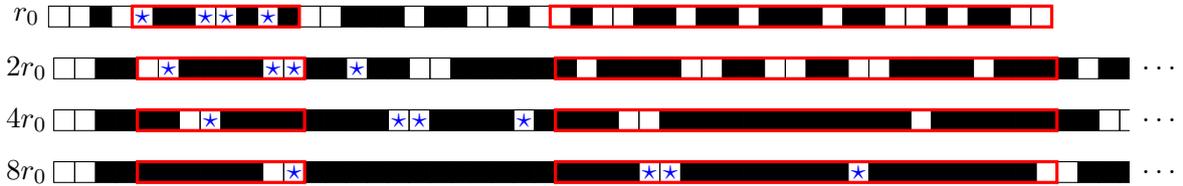

\noindent We formalize this argument with the following combinatorial lemma. 

\begin{lemma}\label{lem:combinatorial}
Let $y=(y_1,y_2,\dots)$ be a sequence of marked and unmarked boxes of length at least $12T$. Suppose that
\begin{itemize}
    \item 
    there is at least one marked among the first $T$ boxes, and
    \item
    at most $T$ of the boxes in the interval $[T, 3T]$ are marked.
\end{itemize}  If $r_0$ denotes the largest integer less than $3T$ such that $|M^{(r_0)}_y[T,3T]|\leq \frac{3}{2}T$, then letting $R=\left\{1,2,\dots,2^{\lceil \log(12T)\rceil}\right\},$
we have
$$\forall r\in R: r\geq 2r_0, |M^{(r)}_y[0,3T]|\geq \frac{3}{2}T\quad\mbox{and}\quad\sum_{r\in R: r\geq 2r_0}|U^{(r)}_y[6T,12T]|\geq \frac{1}{2}T.$$
\end{lemma}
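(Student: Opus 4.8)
The plan is to parametrise everything by how far each unmarked box of $y$ is pushed under rescaling. For the $i$-th unmarked box of $y$, let $m(i)$ denote the number of marked boxes of $y$ preceding it, so (as observed above) its position in $y^{(r)}$ is the affine function $\sigma_r(i)=i+m(i)r$, and note $|U^{(r)}_y[a,b]|=|\{i:a<\sigma_r(i)\le b\}|$ and $|M^{(r)}_y[a,b]|=(b-a)-|U^{(r)}_y[a,b]|$. I will also use that $r_0\ge 1$: indeed $|M^{(1)}_y[T,3T]|$ is just the number of marked boxes of $y$ among positions $T+1,\dots,3T$, which is $\le T\le\frac{3}{2}T$ by the second hypothesis, so $r=1$ is an admissible candidate for $r_0$.

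For the first conclusion, I fix any integer $r\ge 2r_0$ and split on whether $r\ge 3T$. If $r\ge 3T$ (hence $r\ge\frac{3}{2}T$), I take the first marked box of $y$, which by the first hypothesis is at some position $p\le T$; in $y^{(r)}$ it becomes a run of $r$ consecutive marked boxes starting at position $p$ (the $p-1$ boxes preceding it in $y$ are all unmarked), so $M^{(r)}_y[0,3T]$ contains $\{p,\dots,\min(p+r-1,3T)\}$, of size $\min(r,3T-p+1)\ge\min(\frac{3}{2}T,2T+1)=\frac{3}{2}T$. If instead $r<3T$, then since $r_0\ge 1$ we have $r_0<2r_0\le r<3T$, so $r$ is an integer strictly between $r_0$ and $3T$, and maximality in the definition of $r_0$ forces $|M^{(r)}_y[T,3T]|>\frac{3}{2}T$, whence $|M^{(r)}_y[0,3T]|\ge|M^{(r)}_y[T,3T]|>\frac{3}{2}T$. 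Either way $|M^{(r)}_y[0,3T]|\ge\frac{3}{2}T$, which in particular covers every $r\in R$ with $r\ge 2r_0$.

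For the second conclusion, I first note $|U^{(r_0)}_y[T,3T]|=2T-|M^{(r_0)}_y[T,3T]|\ge\frac{1}{2}T$; call the (at least $\frac{1}{2}T$) indices $i$ with $\sigma_{r_0}(i)\in\{T+1,\dots,3T\}$ \emph{special}. Every special $i$ has $m(i)\ge 1$, since otherwise $\sigma_{r_0}(i)=i\ge T+1$ would force $y_1,\dots,y_i$ to be entirely unmarked, contradicting the marked box in the first $T$ positions. Let $\rho_0$ be the smallest power of $2$ with $\rho_0\ge 2r_0$, so $2r_0\le\rho_0<4r_0$, and set $\rho_\ell=2^\ell\rho_0$. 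The engine of the argument is the recursion $\sigma_{\rho_{\ell+1}}(i)=2\sigma_{\rho_\ell}(i)-i$: whenever $\sigma_{\rho_\ell}(i)\le 6T$ it gives $\sigma_{\rho_{\ell+1}}(i)\le 12T-i<12T$, so the strictly increasing sequence $\sigma_{\rho_0}(i)<\sigma_{\rho_1}(i)<\cdots$ cannot jump over the window $\{6T+1,\dots,12T\}$; on the other end $\rho_0<4r_0$ combined with $\sigma_{r_0}(i)\le 3T$ gives $\sigma_{\rho_0}(i)=i+m(i)\rho_0<i+4m(i)r_0=4\sigma_{r_0}(i)-3i<12T$, so the sequence starts below the top of that window. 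Since $m(i)\ge 1$ makes the sequence unbounded, the first index $\ell^*$ with $\sigma_{\rho_{\ell^*}}(i)>6T$ must satisfy $\sigma_{\rho_{\ell^*}}(i)\in\{6T+1,\dots,12T\}$, i.e.\ $i\in U^{(\rho_{\ell^*})}_y[6T,12T]$; moreover $\rho_{\ell^*}\ge\rho_0\ge 2r_0$ and $\rho_{\ell^*}\le\sigma_{\rho_{\ell^*}}(i)-i<12T\le 2^{\lceil\log(12T)\rceil}$, so $\rho_{\ell^*}\in R$. Rewriting $\sum_{r\in R,\,r\ge 2r_0}|U^{(r)}_y[6T,12T]|$ as the number of pairs $(i,r)$ with $r\in R$, $r\ge 2r_0$ and $\sigma_r(i)\in\{6T+1,\dots,12T\}$, each special $i$ contributes its own distinct pair $(i,\rho_{\ell^*(i)})$, so the sum is at least the number of special indices, i.e.\ at least $\frac{1}{2}T$.

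The hard part will be the endpoint bookkeeping in the second conclusion. Moving from $r_0$ to the nearest usable power of two is a jump by a factor in $[2,4)$, and one must verify this does not already push $\sigma_{\rho_0}(i)$ past $12T$, while simultaneously no later doubling leaps across $\{6T+1,\dots,12T\}$. This is exactly where the factor-$4$ spacing between the three windows $[0,3T]$, $[T,3T]$, $[6T+1,12T]$ and the thresholds $\frac{3}{2}T$ and $\frac{1}{2}T$ must be tuned to one another; checking that all of these constants are mutually consistent — rather than finding any single clever estimate — is the real content of the lemma.
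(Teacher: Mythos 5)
Your proof is correct and follows the same blueprint as the paper's: the rescaling formula $\sigma_r(i) = i + m(i)r$, the count of at least $\tfrac12 T$ indices $i$ with $\sigma_{r_0}(i) \in \{T+1,\dots,3T\}$, and for each such $i$ a power of two $r\in R$ with $r\ge 2r_0$ and $\sigma_r(i)\in\{6T+1,\dots,12T\}$, counted via distinct pairs $(i,r)$. Where the paper writes down the witness directly as $r = 2^{\lfloor \log\frac{12T-i}{m(i)} \rfloor}$ and verifies the two bracketing inequalities, you iterate doublings from $\rho_0$ (the least power of two $\ge 2r_0$) and use the recursion $\sigma_{2\rho}(i) = 2\sigma_\rho(i) - i$ to show the strictly increasing sequence cannot leapfrog the target window; since $\rho_0$ is itself a power of two, you search the same set of candidate scalings and land on the same witness, by a slightly more narrative route. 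One place you are actually more careful than the paper: for the first conclusion the paper deduces $|M^{(r)}_y[T,3T]| > \tfrac32 T$ for every integer $r\ge 2r_0$ from the maximality of $r_0$, but that maximality only constrains integers $r < 3T$, while $R$ contains values up to $2^{\lceil\log(12T)\rceil}$. Your explicit case split on $r\ge 3T$, which lower-bounds $|M^{(r)}_y[0,3T]|$ directly via the run generated by the first marked box, closes this small gap in the paper's phrasing.
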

\begin{proof}
First note that by assumption, $|M^{(1)}_y[T,3T]|\leq T$, so $r_0$ is well defined. Second, note that for $r\geq 2r_0>r_0$, since $r_0$ is maximal, $|M_y^{(r)}[0,3T]|\geq |M_y^{(r)}[T,3T]| > \frac{3}{2}T$, so the first condition holds. 

Similarly to the notation introduced before, let $y^{(r)}$ denotes the $r$-rescaling of $y$ and 
let $\sigma_r(i)$ denotes the index of the $i$-th unmarked box in $y^{(r)}$.
Then, $\sigma_r(i)=i+m(i)r$, where $m(i)$ denotes the number of marked boxes before the $i$-th unmarked box in $y$. To prove the second part of the lemma, we will show that 
\begin{equation}
\forall i : \sigma_{r_0}(i)\in \{T+1,\dots, 3T\}, \quad \exists r\in R : r\geq 2r_0\mbox{ and }\sigma_r(i)\in \{6T+1,\dots,12T\}.\label{eq:main-claim}
\end{equation}
In other words, if the $i$-th marked box in $y^{(r_0)}$ is in the interval $\{T+1,\dots,3T\}$, it gets shifted into the interval $\{6T+1,\dots,12T\}$ in $y^{(r)}$, for some $r$. Note that when $\sigma_{r_0}(i)\geq T$, we must have $m(i)\geq 1$, by the assumption that at least one of the first $T$ boxes is marked. We will show that the desired statement hold for $r=2^k$, letting $k=\lfloor \log\frac{12T-i}{m(i)}\rfloor$, so clearly $k\leq \lceil\log (12T)\rceil$. We indeed have $\sigma_r(i)=i+m(i)2^k\in \{6T+1,\dots,12T\}$:
$$i+m(i)2^k\leq i+m(i)\frac{12T-i}{m(i)} =12T
\quad\mbox{and}\quad 
i+m(i)2^k\geq i+m(i)\frac{12T-i}{2m(i)}> 6T.$$
Finally, note that since $3T\geq \sigma_{r_0}(i)=i+m(i)r_0$, we have $r_0\leq (3T-i)/m(i)$. Then
$r=2^k\geq \frac{6T-i}{m(i)}\geq 2r_0$, concluding the proof of \eqref{eq:main-claim}.

The second claim in the lemma follows from \eqref{eq:main-claim}, because 
$$|\{i:\sigma_{r_0}(i)\in \{T+1,\dots,3T\}\}|=|U^{(r_0)}_y[T,3T]|\geq \frac{1}{2}T$$ 
by definition of $r_0$. By \eqref{eq:main-claim}, each of these $\geq \frac{1}{2}T$ unmarked vertices contributes to at least one term of $\sum_{r\in R:r\geq 2r_0}|U^{(r)}_y[6T,12T]|\geq \frac{1}{2}T$. 
\end{proof}

Note that even if we replace the fixed rescalings of each marked interval with independent geometric random variables, any fixed set of marked intervals gets a total rescaling that is within a factor 2 of its expected length with probability $\frac{7}{16}$, as per the following lemma, proven in Appendix~\ref{app:geomSum}:

\begin{restatable}{lemma}{geomSum}\label{lemma:geomSum}
	Let $p\in(0,1]$, $t\in \N$ and $Z=\sum_{i=1}^{t}G_i$, where $G_i$ is a geometric random variable having parameter $p$. Then
	$$ \Pr\left(\frac{t}{2p}\leq Z \leq \frac{2t}{p} \right)\geq \frac{7}{16}.$$ 
\end{restatable}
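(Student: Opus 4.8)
The plan is to treat $p\ge 1/2$ and $p<1/2$ separately. The case $p\ge 1/2$ is immediate: every $G_i\ge 1$, so $Z\ge t\ge t/(2p)$ and the lower constraint is automatic; since $\mathbb{E}[Z]=t/p$, Markov's inequality gives $\Pr(Z\ge 2t/p)\le 1/2$, hence $\Pr\bigl(t/(2p)\le Z\le 2t/p\bigr)=\Pr(Z\le 2t/p)\ge 1/2\ge 7/16$. So assume $p<1/2$ from now on.

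For $p<1/2$ I would write the quantity of interest as $1-\Pr(Z>2t/p)-\Pr(Z<t/(2p))$ and bound the two tails via the moment generating function $\mathbb{E}[e^{\lambda Z}]=\bigl(pe^{\lambda}/(1-(1-p)e^{\lambda})\bigr)^{t}$. A Chernoff optimization (the optimal $e^{\lambda}$ works out to $\frac{2-p}{2(1-p)}$ for the upper tail and $\frac{1-2p}{1-p}$ for the lower tail) yields closed-form bounds $\Pr(Z>2t/p)\le c_+(p)^t$ and $\Pr(Z<t/(2p))\le c_-(p)^t$ with $0<c_\pm(p)<1$, and in fact $c_+(p)\le 2/e$ and $c_-(p)\le\sqrt{e}/2$ uniformly in $p$. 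Hence for every $t$ exceeding some absolute constant $t_0$ the sum of the two tail bounds is below $9/16$, which settles those~$t$.

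The remaining, and genuinely delicate, part is the finitely many small values $t\in\{1,\dots,t_0\}$ with $p<1/2$ --- above all $t=1$, which is exactly where $7/16$ is attained: as $p\to\tfrac12^-$ one has $\Pr\bigl(\tfrac1{2p}\le G\le\tfrac2p\bigr)\to\Pr(2\le G\le4)=p(1-p)\bigl(1+(1-p)+(1-p)^2\bigr)\to 7/16$. Because the bound is this tight, no crude estimate (Markov, Chernoff, or a union bound) can close it, and one must use the exact law of $Z$, a negative binomial (the number of Bernoulli$(p)$ trials needed for $t$ successes), via $\Pr(Z\le n)=\Pr(\mathrm{Bin}(n,p)\ge t)$. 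For $t=1$ the target equals $(1-p)^{\lceil 1/(2p)\rceil-1}-(1-p)^{\lfloor 2/p\rfloor}$; I would partition $(0,\tfrac12)$ into the countably many intervals on which $\lceil 1/(2p)\rceil$ and $\lfloor 2/p\rfloor$ are both constant. On such a piece the value is $q^{a}-q^{b}$ with $q=1-p$, $a=\lceil 1/(2p)\rceil-1\ge 1$ and $b=\lfloor 2/p\rfloor$, where $b\ge 4a$ since $4a\le 2/p$ and $4a$ is an integer; this function of $q$ is unimodal, so its minimum over the piece occurs at an endpoint, i.e.\ at a value of $p$ for which $1/(2p)$ or $2/p$ is an integer. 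At such an endpoint, using $a\le b/4$ one rewrites the (relevant lower bound for the) value in the form $s-s^{4}$ with $s\in[\tfrac12,e^{-1/2})$; since $4s^{3}\le 4e^{-3/2}<1$, the map $s\mapsto s-s^{4}$ is increasing there, so the value is $\ge\tfrac12-\tfrac1{16}=\tfrac7{16}$, with equality only in the limit $p\to\tfrac12^-$. The same bookkeeping, now with $\mathrm{Bin}(n,p)$ in place of a single geometric, disposes of $t=2,\dots,t_0$, where the infimum over $p$ is strictly larger than $7/16$ and there is room to spare. The main obstacle is precisely this last step: the constant $7/16$ is sharp, so the small-$t$, $p\to\tfrac12^-$ regime has to be handled by the exact distribution rather than by concentration inequalities.
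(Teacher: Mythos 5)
Your overall structure mirrors the paper's: a concentration bound for $t$ past some threshold, and a direct analysis via the exact negative binomial law for the remaining small $t$. The paper uses Chebyshev (variance $\le t/p^2$ gives $\Pr(|Z-t/p|\ge t/(2p))\le 4/t$, settling $t\ge 8$) rather than Chernoff, and handles $t\le 7$ by explicit verification, referring to a plot. Your Chernoff tails would give a smaller threshold, but at the cost of the MGF bookkeeping, and the trade-off is roughly a wash.

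The genuine gap is in your $t=1$ analysis, and it is not a cosmetic one. You assert that at a piece endpoint one has $s=q^{a}\in[\tfrac12,e^{-1/2})$, and then conclude $q^{a}-q^{b}\ge q^{a}-q^{4a}=s-s^{4}\ge\tfrac{7}{16}$ because $s\mapsto s-s^{4}$ is increasing on $[\tfrac12,e^{-1/2}]$. But the upper bound $s<e^{-1/2}\approx 0.6065$ is false. Take $p=\tfrac14$: then $a=\lceil 1/(2p)\rceil-1=1$, $q=\tfrac34$, so $s=\tfrac34>e^{-1/2}$, and $s-s^{4}=\tfrac34-\tfrac{81}{256}=\tfrac{111}{256}<\tfrac{7}{16}$. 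The lemma is still true there — the actual exponent is $b=\lfloor 2/p\rfloor=8$, not $4a=4$, so $q^{a}-q^{b}=\tfrac34-(\tfrac34)^{8}\approx 0.65$ — but your chain of inequalities has thrown away exactly the slack ($b\gg 4a$) that saves the day. More generally, $s=q^a$ ranges over $(\tfrac12,\tfrac34]$ (the supremum $\tfrac34$ is attained at $p=\tfrac14$, $a=1$), and $s-s^{4}$ dips below $\tfrac{7}{16}$ on the top of that range, so the reduction to $s-s^{4}$ cannot work uniformly. Patching it would require, at minimum, treating separately the endpoints where $s>e^{-1/2}$ and there using the true $b$ (which is then strictly larger than $4a$); as written the argument does not close. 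Finally, the cases $t=2,\dots,t_0$ are only asserted to follow by "the same bookkeeping" and are not actually carried out — for comparison, the paper treats those cases by a numerical check and does not claim an analytic proof for them either.
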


We can now conclude with a statement about the random walk $\PM(s)$ that we will use to analyze our quantum algorithm. The final statement we need is proven in Corollary~\ref{cor:AMRMS}. We first prove the following.

\begin{corollary}\label{cor:UMU}
Let $\cal P$ be any Markov chain (not necessarily reversible).
Let $\rho$ be any distribution (not necessarily stationary). Let $E$ be the event that: the first vertex sampled according to $\rho$ is unmarked; a marked vertex is encountered within the first $T$ steps of $\cal P$ (equivalently $\PM(s)$); and at most $T$ of the next $2T$ steps of $\cal P$ (equivalently, the next $2T$ steps of $\PM(s)$ that do not consist of staying at a marked vertex) go to a marked vertex. 

Let $r\in R$, $t\in \{1,\dots,3T\}$ and $t'\in \{3T+1,\dots,24T\}$ be chosen uniformly at random, and let $s=1-\frac{1}{r}$. Then 
$$\mathbb{E}_{t,t',s}\left(\Pr_{Y_0(s)\sim \rho}(Y_0(s)\in U, Y_t(s)\in M, Y_{t'}(s)\in U|E)\right) = \Omega\left(\frac{1}{\log(T)}\right).$$
\end{corollary}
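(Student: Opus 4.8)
The plan is to condition on the trajectory of the skeleton chain $Y$, reduce the statement to the purely combinatorial Lemma~\ref{lem:combinatorial}, and then transfer its conclusions about \emph{deterministic} $r$-rescalings to the \emph{genuine} interpolated walk $Y(s)$ using Lemma~\ref{lemma:geomSum} together with an elementary tail bound for sums of geometric variables. First I would observe that the event $E$ depends only on $\bar Y$, so it suffices to fix any box sequence $y=\bar Y$ consistent with $E$ and prove $\mathbb{E}_{r,t,t'}\Pr_L\lr{Y_t(s)\in M,\ Y_{t'}(s)\in U}\ge\Omega(1/\log T)$ uniformly in such $y$; here $L=(L_\ell)_{\ell\ge1}$ are the i.i.d.\ geometric holding times of mean $r=1/(1-s)$ that turn $y$ into $Y(s)$, $\tilde\sigma_r(i)=i+\sum_{\ell=1}^{m(i)}L_\ell$ is the position of the $i$-th unmarked box in $Y(s)$ (with $m(i)$ the number of marked boxes preceding the $i$-th unmarked box of $y$), $\sigma_r(i)=i+m(i)r$ is that position in the deterministic rescaling $y^{(r)}$, and $Y_0(s)=Y_0\in U$ holds automatically. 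On $E$ the sequence $y$ meets the hypotheses of Lemma~\ref{lem:combinatorial}, so that lemma yields $r_0=r_0(y)<3T$ and, via its proof, the set $I=\lrb{i:\sigma_{r_0}(i)\in\lrb{T+1,\dots,3T}}$ with $|I|\ge\tfrac12 T$ together with, for each $i\in I$, a witness $r(i)\in R$, $r(i)\ge 2r_0$, with $\sigma_{r(i)}(i)\in\lrb{6T+1,\dots,12T}$. I partition $I=\bigsqcup_{r\in R}I_r$ by witness, so $\sum_r|I_r|=|I|\ge\tfrac12 T$ and $I_r=\emptyset$ for $r<2r_0$.

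For the endpoint $t'$: given $i\in I_r$ one has $i\le\sigma_{r_0}(i)\le 3T$, hence $m(i)r=\sigma_r(i)-i\in(3T,12T]$ and $m(i)\ge1$. By Lemma~\ref{lemma:geomSum} the event $B_i:=\lrb{\tfrac12 m(i)r\le\sum_{\ell\le m(i)}L_\ell\le 2m(i)r}$ has probability $\ge\tfrac7{16}$, and on $B_i$ one computes $\tilde\sigma_r(i)\in[\tfrac12(\sigma_r(i)+i),\,2\sigma_r(i)-i]\subseteq\lrb{3T+1,\dots,24T}$ — which is exactly why Corollary~\ref{cor:UMU} uses that window for $t'$. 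Since distinct $i$ give distinct, unmarked positions $\tilde\sigma_r(i)$, this yields $\Pr_{t'}(Y_{t'}(s)\in U)\ge\tfrac1{21T}\sum_{i\in I_r}\mathbbm{1}[B_i]$. For the marked point $t$: for $r\ge 2r_0$, Lemma~\ref{lem:combinatorial} gives $|M^{(r)}_y[0,3T]|\ge\tfrac32 T$, so $j^\ast:=|U^{(r)}_y[0,3T]|\le\tfrac32 T$ and $m(j^\ast+1)r\ge|M^{(r)}_y[0,3T]|\ge\tfrac32 T$ with $m(j^\ast+1)\ge1$. On the event $A:=\lrb{\sum_{\ell\le m(j^\ast+1)}L_\ell\ge\tfrac12 m(j^\ast+1)r}$ (probability $\ge\tfrac7{16}$ by Lemma~\ref{lemma:geomSum}), either $\tilde\sigma_r(j^\ast+1)>3T$, so at most $j^\ast\le\tfrac32 T$ of the first $3T$ boxes of $Y(s)$ are unmarked, or $\tilde\sigma_r(j^\ast+1)\le3T$, so the $\ge\tfrac34 T$ marked boxes preceding it all lie among the first $3T$; in either case $\Pr_t(Y_t(s)\in M)\ge\tfrac14$ on $A$.

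To combine, fix $r\ge2r_0$ and $i\in I_r$ (so $i\ge j^\ast+1$). Both $A$ and the lower half of $B_i$ are increasing events in $L$, so the FKG inequality gives $\Pr\bigl(A\cap\lrb{\sum_{\ell\le m(i)}L_\ell\ge\tfrac12 m(i)r}\bigr)\ge(\tfrac7{16})^2$; subtracting the upper-tail probability $\Pr\bigl(\sum_{\ell\le m(i)}L_\ell>2m(i)r\bigr)$, which is $\le e^{-2}$ (indeed exponentially small in $m(i)$) uniformly in $r$, leaves $\Pr(A\cap B_i)\ge c_0$ for an absolute constant $c_0>0$. Conditioning on $L$ and using that $t,t'$ are independent and range over disjoint sets, $\mathbb{E}_{t,t'}\Pr_L(Y_t(s)\in M,\ Y_{t'}(s)\in U)\ge\tfrac14\cdot\tfrac1{21T}\,\mathbb{E}_L\bigl[\mathbbm{1}[A]\sum_{i\in I_r}\mathbbm{1}[B_i]\bigr]\ge\tfrac{c_0}{84\,T}\,|I_r|$; averaging over $r$ uniform in $R$ and summing $\sum_r|I_r|\ge\tfrac12 T$ gives $\mathbb{E}_{r,t,t'}\Pr_L(\cdots)\ge\tfrac{c_0}{168\,|R|}=\Omega(1/\log T)$ since $|R|=O(\log T)$, and averaging this back over $\bar Y\mid E$ finishes the proof.

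\textbf{Main obstacle.} The hard part is transferring the deterministic guarantee $|M^{(r)}_y[0,3T]|\ge\tfrac32 T$ to the real walk $Y(s)$ \emph{while simultaneously} controlling the endpoint event $B_i$: both the ``marked at $t$'' and the ``unmarked at $t'$'' events are driven by the same early holding times $L_1,\dots,L_{m(i)}$, so they are correlated, and keeping the joint probability positive is what forces the widened windows ($\lrb{3T+1,\dots,24T}$ for $t'$, threshold $\tfrac14$ in place of $\tfrac12$) and the somewhat delicate FKG-plus-crude-tail argument, whose numerical constants only barely leave room. The $\Theta(\log T)$ loss relative to an $\Omega(1)$ bound comes solely from the uniform averaging over the $O(\log T)$ scales of $R$, consistent with the remark following the statement.
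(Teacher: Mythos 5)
Your proof is correct, and it takes a genuinely different route from the paper's. The paper's proof introduces a single coarse event~$F$, requiring that the \emph{averages} of the geometric holding times over the marked blocks relevant to the windows $[0,3T]$, $[0,6T]$, $[0,12T]$ lie within a factor of~$2$ of their means, asserts $\Pr(F)\geq(7/16)^3$ via Lemma~\ref{lemma:geomSum}, and then argues informally (by a ``change a block length by $\pm1$'' exchange argument) that on $F$ one has $|\{t\le 3T:Y_t(s)\in M\}|\ge\tfrac12|M_y^{(r)}[0,3T]|$ and $|\{t'\in\{3T+1,\dots,24T\}:Y_{t'}(s)\in U\}|\ge|U_y^{(r)}[6T,12T]|$, feeding both bounds at once into Lemma~\ref{lem:combinatorial} summed over $s\in S$. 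Your argument instead starts from Lemma~\ref{lem:combinatorial}, partitions the $\ge T/2$ special unmarked indices by their witness scale $r(i)$, and then, for each index $i\in I_r$, introduces bespoke events $A$ and $B_i$ about explicit partial sums $\sum_{\ell\le m}L_\ell$, combining them with Harris--FKG for the two increasing lower tails plus a crude $e^{-2}$ bound on the single upper tail. This sidesteps two points that the paper leaves informal: (i) the $(7/16)^3$ estimate conjuncts three \emph{correlated} events (two increasing, one decreasing) defined over nested index sets, for which Lemma~\ref{lemma:geomSum} alone does not give a product bound, whereas your $(7/16)^2-e^{-2}$ bound is a complete argument; (ii) the paper's ``modify a block by $\pm1$'' transfer from $y^{(r)}$ to $Y(s)$ is only sketched, whereas your tracking of the explicit positions $\tilde\sigma_r(i)=i+\sum_{\ell\le m(i)}L_\ell$ and the comparison of $\tilde\sigma_r(j^\ast+1)$ to $3T$ makes the count bounds airtight. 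The price is a smaller constant ($(7/16)^2-e^{-2}\approx 0.056$ versus the paper's claimed $(7/16)^3\approx 0.084$) and more bookkeeping, but the final $\Omega(1/\log T)$ is the same, with the $\log T$ loss coming from the averaging over $|R|$ scales in both arguments. Your identification of the core difficulty (the marked-at-$t$ and unmarked-at-$t'$ events are driven by the \emph{same} early holding times) is exactly what the widened $t'$-window and the factor-$4$ slack in the marked count are for, and you have handled it cleanly.
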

\begin{proof}
Let $S=\{1-\frac{1}{r}:r\in R\}$. 
When sampling $Y(s)$, we want to make a distinction between: 
\begin{enumerate}
    \item[(1)]
    the randomness used, when at a marked vertex, to decide whether to stay or take a step of the walk  according to $\PM$, and
    \item[(2)]
    the randomness used to decide which neighbouring vertex to transition to (assuming a step is to be taken), according to $\PM$.
\end{enumerate}
 The second type of randomness, (2), is exactly the randomness of $Y$ (recall that $Y$ is a Markov chain that is coupled to $Y(s)$ in the sense that if $Y(s)$ does not stay at the current vertex, then it moves as $Y$). Thus, we can write\footnote{We can keep all sums finite by only considering a chain $Y$ of finite length at least $24T$.}:
\begin{align}
&\mathbb{E}_{t,t',s}\left(\Pr(Y_t(s)\in M, Y_{t'}(s)\in U|E)\right)=\label{eq:exp1}\\
&= \sum_{y}\Pr(Y=y|E)\mathbb{E}_s\sum_{y(s)}\Pr(Y(s)=y(s)|Y=y)\sum_{t=1}^{3T}\frac{1}{3T}\sum_{t'=3T+1}^{24T}\frac{\Pr(y_t(s)\in M,y_{t'}(s)\in U)}{21T},\nonumber
\end{align}
noting that
\begin{align}
&\sum_{t=1}^{3T}\frac{1}{3T}\sum_{t'=3T+1}^{24T}\frac{1}{21T}\Pr(y_t(s)\in M,y_{t'}(s)\in U)=\label{eq:exp2}\\
&=\frac{|\{t\in\{1,\dots,3T\}:y_t(s)\in M\}|}{3T}\frac{|\{t'\in\{3T+1,\dots,24T\}:y_{t'}(s)\in U\}|}{21T}.\nonumber
\end{align}

For a fixed path of Markov chain $Y(s)$, $y(s)$, suppose the average over marked vertices encountered in the first $3T$ steps, number of steps spent at the marked vertex is at least $r/2$, for $r=\frac{1}{1-s}$. Then we have:
$$|\{t\in \{1,\dots,3T\}:Y_t(s)\in M\}|\geq \frac{1}{2}|{M_{y}^{(r)}[0,3T]}|.$$
To see this, note that increasing one of the marked regions that begins in $[0,3T]$ by 1, we cannot decrease the number of marked vertices in $[0,3T]$, and decreasing one of the marked regions by 1, can only decrease the number of marked boxes in $[0,3T]$ by 1.

Moreover, suppose the average over marked vertices encountered by $y(s)$ in $[0,6T]$, number of steps spent at the marked vertex, is at least $r/2$ and the average in $[0,12T]$ is at most $2r$. Then the unmarked vertices of $U_{y}^{(r)}[6T,12T]$ may be moved and spread out, but they will all occur within the range $\{3T+1,\dots,24T\}$. Thus:
$$|\{t'\in \{3T+1,\dots,24T\}:Y_{t'}(s)\in U\}|\geq |U_{y}^{(r)}[6T,12T]|.$$
Let $F$ be the event that all of these conditions hold, that is, the average length of stay at a marked vertex in steps $\{1,\dots,3T\}$ and $\{1,\dots,6T\}$ is at least $r/2$, and the average length of stay at a marked vertex in steps $\{1,\dots,12T\}$ is at most $2r$. Then by Lemma~\ref{lemma:geomSum}, $\Pr(F)\geq (7/16)^3$. Thus, continuing from \eqref{eq:exp1} and \eqref{eq:exp2}, we have:
\begin{align*}
&\mathbb{E}_{t,t',s}\left(\Pr_{Y_0(s)\sim\rho}(Y_t(s)\in M, Y_{t'}(s)\in U|E)\right)\\
&\geq\Omega(1)\sum_{y}\Pr(Y=y|E)\sum_{s\in S}\frac{1}{|S|}\sum_{y(s)}\Pr(Y(s)=y(s)|Y=y,F)\frac{|M_y^{(\frac{1}{1-s})}[0,3T]|}{6T}\frac{|U_y^{(\frac{1}{1-s})}[6T,12T]|}{21T}\\
&= \Omega(1)\frac{1}{|S|}\sum_{y}\Pr(Y=y|E)\sum_{s\in S}\frac{|M_y^{(\frac{1}{1-s})}[0,3T]|}{6T}\frac{|U_y^{(\frac{1}{1-s})}[6T,12T]|}{21T}\\
&\geq\Omega(1)\frac{1}{|S|}\sum_y\Pr(Y=y|E)\frac{1}{4}\frac{1}{42}, \quad\mbox{ by Lemma \ref{lem:combinatorial}}\\
&=\Omega\left(\frac{1}{|S|}\right)=\Omega\left(\frac{1}{\log T}\right). \qedhere
\end{align*}
\end{proof}

We can now conclude with the statement we will need in the analysis of our algorithm in Section~\ref{subsec:final-alg}.

\begin{corollary}\label{cor:AMRMS}
	Let $\cal P$ be a reversible ergodic Markov chain, and let $\uppi$ be its stationary distribution.
	If $p_M\leq 1/9$ and $T\geq 3HT$, then choosing $s\in S=\{1-\frac{1}{r}:r\in R\}$ and $t\in [24T]$ uniformly at random we get, that 
	$$\mathbb{E}\left[\nrm{\Pi_M D^t(s)\ket{\sqrt{\uppi_U}}}^2\right]= \Omega\left(\frac{1}{\log(T)}\right).$$
	\gnote{It could be improved to $\Omega\left(\frac{1}{\log(T\cdot p_M)}\right)$, since it is enough to consider $R\cap [1/p_M, T]$.}
\end{corollary}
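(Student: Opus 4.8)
The plan is to derive Corollary~\ref{cor:AMRMS} from Corollary~\ref{cor:UMU} (applied with $\rho=\uppi$) together with a symmetrized version of the operator-to-probability translation used in the proof of Lemma~\ref{lem:probMUPrime}. Write $g_s(v):=\nrm{\Pi_M D^v(s)\ket{\sqrt{\uppi_U}}}^2$, and let $Y(s)$ denote the chain evolving according to $\PM(s)$ with $Y_0(s)\sim\uppi$. The first step is to note that the Cauchy--Schwarz line in the derivation of \eqref{eq:probMUPrime} in fact proves, for all $a,b\in\N$ and $s\in[0,1)$,
\begin{equation}\label{eq:corAMRMS-cs}
\sqrt{g_s(a)\,g_s(b)}\;=\;\nrm{\Pi_M D^a(s)\ket{\sqrt{\uppi_U}}}\cdot\nrm{\Pi_M D^b(s)\ket{\sqrt{\uppi_U}}}\;\geq\;\bra{\sqrt{\uppi_U}}D^a(s)\,\Pi_M\,D^b(s)\ket{\sqrt{\uppi_U}},
\end{equation}
where the inequality uses only that $D(s)$ is real symmetric and $\Pi_M^2=\Pi_M$, and the right-hand side equals $\Pr\big(Y_0(s)\in U,\,Y_a(s)\in M,\,Y_{a+b}(s)\in U\big)\geq 0$ by exactly the computation following \eqref{eq:probMUPrime}. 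Combining \eqref{eq:corAMRMS-cs} with $\sqrt{g_s(a)g_s(b)}\leq\tfrac12\big(g_s(a)+g_s(b)\big)$ gives the estimate I want,
\begin{equation}\label{eq:corAMRMS-amgm}
g_s(a)+g_s(b)\;\geq\;2\,\Pr\big(Y_0(s)\in U,\,Y_a(s)\in M,\,Y_{a+b}(s)\in U\big)\qquad(a,b\in\N).
\end{equation}
Keeping \emph{both} norm factors here, rather than invoking Lemma~\ref{lem:probMUPrime} directly (which would only give $g_s(t)\geq\Pr(\cdots)^2$), is what preserves the $1/\log T$ of Corollary~\ref{cor:UMU}; the direct route, after taking expectations and applying Jensen, would only yield $\Omega(1/\log^2 T)$.

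Second, I would lower bound $\Pr(E)$ for the event $E$ of Corollary~\ref{cor:UMU} with $\rho=\uppi$. Since $\uppi$ is stationary, the first-hitting time $Z=\min\{i:Y_i\in M\}$ has $\mathbb{E}[Z]=HT(\PM,M)$, so by Markov's inequality $\Pr(Z>T)\leq HT(\PM,M)/T\leq 1/3$ (using $T\geq 3HT\geq 3HT(\PM,M)$); and $\{Y_0\in M\}$, of probability $p_M\leq 1/9$, is disjoint from $\{Z>T\}$, so the probability that $Y_0$ is marked or that $M$ is not hit within the first $T$ steps is at most $4/9$. Again by stationarity, the expected number of marked vertices among any $2T$ consecutive steps is $2Tp_M\leq 2T/9$, so more than $T$ of the next $2T$ steps are marked with probability at most $2/9$. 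A union bound gives $\Pr(E)\geq 1-4/9-2/9=1/3$. Note also that $E$ is measurable with respect to the uninterpolated trajectory alone and is therefore independent of the rescaling parameter $s$, so $\mathbb{E}_{t,t',s}\big[\Pr(\,\cdot\mid E)\,\Pr(E)\big]=\Pr(E)\cdot\mathbb{E}_{t,t',s}\big[\Pr(\,\cdot\mid E)\big]$.

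Third, I would put the pieces together. Pick $t$ uniform on $\{1,\dots,3T\}$, $t'$ uniform on $\{3T+1,\dots,24T\}$ independently, and $s\in S$ uniform, and apply \eqref{eq:corAMRMS-amgm} with $a=t$, $b=t'-t$ (so $b\geq 1$ and $a+b=t'$). A short count shows the law of $t$ puts mass $\tfrac1{3T}=\tfrac8{24T}$ on each point of $\{1,\dots,3T\}\subseteq[24T]$, and the law of $t'-t$ puts mass at most $\tfrac1{21T}\leq\tfrac87\cdot\tfrac1{24T}$ on each point of $\{1,\dots,24T-1\}\subseteq[24T]$; since $g_s\geq 0$, averaging \eqref{eq:corAMRMS-amgm} over $t,t',s$ therefore gives
\begin{equation}\label{eq:corAMRMS-final}
\tfrac{64}{7}\;\mathbb{E}_{v\in[24T],\,s\in S}\big[g_s(v)\big]\;\geq\;2\,\mathbb{E}_{t,t',s}\Big[\Pr\big(Y_0(s)\in U,\,Y_t(s)\in M,\,Y_{t'}(s)\in U\big)\Big].
\end{equation}
Intersecting the event on the right with $E$ (legitimate because $E\subseteq\{Y_0(s)\in U\}$) and using Corollary~\ref{cor:UMU} together with $\Pr(E)\geq 1/3$, the right-hand side of \eqref{eq:corAMRMS-final} is $\Omega(1/\log T)$, whence $\mathbb{E}_{t\in[24T],\,s\in S}\big[\nrm{\Pi_M D^t(s)\ket{\sqrt{\uppi_U}}}^2\big]=\Omega(1/\log T)$, as claimed. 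I expect the only real obstacle to be the first step: one has to notice that discarding either norm factor in the Cauchy--Schwarz estimate is lossy, and that symmetrizing with AM--GM (with $b$ playing the role of a time \emph{difference} $t'-t$, rather than an absolute time) is exactly what keeps the logarithmic factor intact; the $\Pr(E)$ bound and the comparison of the laws of $t$ and $t'-t$ with the uniform law on $[24T]$ are routine.
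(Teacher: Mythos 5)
Your proposal is correct and takes essentially the same route as the paper's proof: both apply Cauchy--Schwarz to $\bra{\sqrt{\uppi_U}}D^t(s)\Pi_M D^{\hat t}(s)\ket{\sqrt{\uppi_U}}$ and then convert the product $\nrm{v^t(s)}\nrm{v^{\hat t}(s)}$ into an average of $\nrm{v^t(s)}^2$ --- the paper via the power-mean inequality $\big(\mathbb{E}_t\nrm{v^t(s)}\big)^2\leq\mathbb{E}_t\nrm{v^t(s)}^2$ after extending the range to $t,\hat t\in[24T]$, you via AM--GM followed by a pointwise comparison of the laws of $t$ and $t'-t$ to uniform on $[24T]$, which are equivalent maneuvers yielding the identical constant $64/7=576/63$. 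Your two extra observations (disjointness of $\{Y_0\in M\}$ and $\{Z>T\}$, and Markov on the $2T$-length window instead of $3T$) sharpen $\Pr(E)$ from the paper's $2/9$ to $1/3$, but this only improves a constant and does not change the argument.
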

\begin{proof}
	First we prove that the event $E$ in Corollary~\ref{cor:UMU} holds with constant probability. The probability that the initial vertex is marked is $p_M\leq 1/9$. The probability that the Markov chain does not hit a marked vertex in $T\geq 3 HT$ steps is at most $1/3$ by Markov's inequality. Finally, the expected number of marked sites in the first $3T$ steps is $p_M 3T\leq T/3$, therefore the probability that there are more than $T$ marked vertices in the first $3T$ steps is at most $1/3$ by Markov's inequality. By the union bound we get the probability of the complement of $E$ is at most $1/9+1/3+1/3=7/9$, therefore $E$ holds with probability at least $2/9$.
	
	Let us define $\ket{v^t(s)}:=\Pi_M D^t(s)\ket{\sqrt{\uppi_U}}$, then by Corollary~\ref{cor:UMU}, recalling that $\hat{t}=t'-t$, we have that
	\begin{align*}
	\Omega(1)&=\sum_{s\in S}\sum_{t, \hat{t} \in [24T]}\sum_{x,z\in U}\frac{\uppi_x\bra{x}\PM^t(s)\Pi_M\PM^{\hat{t}}(s)\ket{z} \tag*{by Corollary~\ref{cor:UMU}}}{(24T)^2}\\	
	&=\sum_{s\in S}\sum_{t, \hat{t} \in [24T]}\frac{\bra{\sqrt{\uppi_U}}D^t(s)\Pi_M D^{\hat{t}}(s)\ket{\sqrt{\uppi_U}}}{(24T)^2} \tag*{by Eq.~\eqref{eq:probMUPrime}}\\	
	&=\sum_{s\in S}\sum_{t, \hat{t} \in [24T]}\frac{\braket{v^t(s)}{v^{\hat{t}}(s)}}{(24T)^2}
	\leq\sum_{s\in S}\sum_{t, \hat{t} \in [24T]}\frac{\nrm{v^t(s)}\nrm{v^{\hat{t}}(s)}}{(24T)^2} \tag*{by Cauchy-Schwartz}\\
	&=\sum_{s\in S}\left(\sum_{t \in [24T]}\frac{\nrm{v^t(s)}}{24T}\right)^{\!\!2}	
	\leq\sum_{s\in S}\sum_{t \in [24T]}\frac{\nrm{v^t(s)}^2}{24T},	
	%\tag*{since arithmetic mean $\leq$ root-mean square}
	\end{align*}
	where the last inequality follows from the fact that the arithmetic mean is always majorated by the root-mean square.
\end{proof}

\subsection{The final algorithm and its analysis}\label{subsec:final-alg}

We can now present our fast-forwarding-based algorithm, proving Theorem~\ref{thm:main}. Recall that $S=\{1-\frac{1}{r}:r\in R\}$, where $R=\{1,2,\dots,2^{\lceil \log(12T)\rceil}\}$.
The full algorithm is as follows. 

\begin{algorithm}[H]
	\textbf{Search}($ \PM $, $ M $, $T$)\\ %: aim
	Use $\bigO{\!\!\sqrt{\log(T)}}\!$ rounds amplitude amplification to amplify the success probability of steps~1-$3$:
	\begin{enumerate}
		\item Use $\setup(\PM)$ to prepare the state $$\sum_{t=1}^{T}\frac{1}{\sqrt{T}}\ket{t}\sum_{s\in S}\frac{1}{\sqrt{|S|}}\ket{s}\ket{\sqrt{\uppi}}.$$
		\item Measure $\{\Pi_M,I-\Pi_M\}$ on the last register. If the outcome is ``marked'', measure in the computational basis, and output the entry in the last register. Otherwise continue with the (subnormalized) post-measurement state %\frac{1}{\sqrt{1-p_M}}
		$$\sum_{t=1}^{T}\frac{1}{\sqrt{T}}\ket{t}\sum_{s\in S}\frac{1}{\sqrt{|S|}}\ket{s}\ket{\sqrt{\uppi_U}}.$$
		\item Use quantum fast-forwarding, controlled on the first two registers, to map $\ket{t}\ket{s}\ket{\sqrt{\uppi_U}}$ to $\ket{1}\ket{t}\ket{s}D^t(s)\ket{\uppi_U}+\ket{0}\ket{\Gamma}$ for some arbitrary $\ket{\Gamma}$, with precision $\bigO{\frac{1}{\log(T)}}$. Finally, measure the last register and output its content if marked, otherwise output \texttt{No marked vertex}.
	\end{enumerate}
	\caption{Fast-forwarding-based search algorithm}\label{alg:alg2}
\end{algorithm}

If $T\geq 72 HT(\PM,M)$, then the success probability of the above steps 1-3  is $\Omega\left(\frac{1}{\log(T)}\right)$, as shown by Corollary~\ref{cor:AMRMS}. Thus, after $\bigO{\sqrt{\log(T)}}$ steps of amplitude amplification, the success probability becomes $\Omega(1)$. 

By Theorem~\ref{thm:fast-forwarding} the complexity of step 3 is $\bigO{\sqrt{T\log\log(T)}({\sf U}+{\sf C})}$, since $W(s)$ can be implemented in cost $\bigO{{\sf U}+{\sf C}}$. Thus, the complexity of steps 1-3 is
$\bigO{\mathsf{S}+\sqrt{T\log\log(T)}(\mathsf{U}+\mathsf{C})}$, where $\mathsf{S}$ is the complexity of generating $\ket{\sqrt{\uppi}}$, using $\mathtt{Setup}(\PM)$.
Amplitude amplification gives a $\sqrt{\log(T)}$ multiplicative overhead, proving Theorem~\ref{thm:main}.

\section*{Acknowledgments}
	The authors thank Simon Apers, Jérémie Roland, Shantanav Chakraborty, and Ashwin Nayak for useful comments and discussions.
	
\bibliographystyle{alphaUrlePrint}
\bibliography{Bibliography}

\appendix

\section{Proof of Lemma \ref{lem:htpLB}}\label{app:A}

\htpLB*
\begin{proof}
While the   vertices  $ (x_1,x_2) $ of the torus graph can be ordered arbitrarily, we use the lexicographic ordering (i.e., $ (x_1,x_2) \prec (x_1',x_2') $ iff $ x_1 < x_1' $ or $ x_1=x_1' $ and $x_2<x_2'  $),
Then $ \PM $  is formed accordingly to this ordering, i.e., the first row (column) of $ \PM  $ corresponds to the vertex $ (0,0) $, the second row (column) corresponds to the vertex $ (0,1) $, and so on.  Now  $ \PM $  is an $ (N^2) \times (N^2) $ BCCB  (block circulant with circulant blocks) matrix  \cite[Definition 5.27]{vogel2002CompMethInvProbs} and   can be diagonalized as \cite[Proposition 5.31]{vogel2002CompMethInvProbs}
\[
\PM  =
\lr{F_N \otimes F_N} \diag \Lambda \lr{F_N \otimes F_N}^{\dagger} ,
\]
where $ \Lambda $ is the vector of the eigenvalues of $ \PM $, $ \otimes $ stands for the Kronecker product and
\[
F_N =
\frac{1}{\sqrt N}
\begin{pmatrix}
1 & 1            & 1               & \ldots & 1                   \\
1 & \omega       & \omega^2        & \ldots & \omega^{N-1}        \\
1 & \omega^2     & \omega^4        & \ldots & \omega^{2(N-1)}     \\
&  & \ddots \\
1 & \omega^{N-1} & \omega^{2(N-1)} & \ldots & \omega^{(N-1)(N-1)}
\end{pmatrix}
, \quad \omega := \exp\lr{\frac{2\pi \, i }{N}  }.
\]
It can be verified by direct calculation (or by applying the two-dimensional discrete Fourier transform as described in  \cite[Proposition 5.31]{vogel2002CompMethInvProbs}) that the eigenvalues of the matrix $ \PM $ are
\[
\lambda_{j,k} =
\frac{1}{5} \lr{1 + 2\cos \frac{2\pi j}{N}+ 2\cos \frac{2\pi k}{N}}, \quad j,k \in \lrb{0,1,\ldots,N-1},
\]	and  the corresponding eigenvectors are
\[
\ket{ v_{j,k} } = w^{(j)} \otimes w^{(k)},
\quad
\text{where}
\quad
w^{(j)} :=
\frac{1}{\sqrt N}
\begin{pmatrix}
1 & \omega^j & \omega^{2j} & \ldots & \omega^{(N-1)j}
\end{pmatrix}^T .
\]
By \cite[Theorem 17]{krovi2010QWalkFindMarkedAnyGraph}, the  extended hitting time is related to the interpolated hitting time $ HT(0) $ via 
$ HT^+    ={p_M^{-2}} \, HT(0) $, where $ HT(0) $ is defined as 
\[ 
HT(0) =
\sum_{\substack{
		j = 0 .. N-1 \\
		k = 0 .. N-1 \\
		(j,k) \neq (0,0)
}}  \frac{\lrv{ \braket{v_{j,k}}{U}  }^2}{1 - \lambda_{j,k}},
\]
and (since the stationary distribution $ \uppi $ is uniform) $ p_M =  m/{N^2} $ and $ \ket U = \sqrt {u^{-1}} \sum_{x \in U} \ket x $; thus 
\begin{equation}\label{eq:L101}
HT^+ =  \frac{5}{4}\, \frac{N^2}{m^2 \, u} \sum_{\substack{
		j = 0 .. N-1 \\
		k = 0 .. N-1 \\
		(j,k) \neq (0,0)
}}
\frac{\bigg|\vcenter{\hbox{$\sum^{\rotatebox[origin=c]{90}{\kern1.6mm}}\limits_{(x_1,x_2) \in U}  \omega^{jx_1 + kx_2}$}}\bigg|^2
}{
	\sin^2  \frac{ \pi j}{N} + \sin^2  \frac{\pi k}{N}
}.
\end{equation}
Here we have also applied 
\[
1 - \lambda_{j,k} = \frac{1}{5} \lr{2 - 2\cos \frac{2\pi j}{N}+ 2 - 2\cos \frac{2\pi k}{N}}
=
\frac{4}{5} \lr{\sin^2  \frac{ \pi j}{N} + \sin^2  \frac{\pi k}{N}  }
\]
and
\[
\sum\limits_{x \in U}  \braket x {v_{j,k}} = \frac{1}{N} \sum_{(x_1,x_2) \in U} \omega^{jx_1 + kx_2}.
\]
For all pairs $ (j,k)  $ with $ (j,k) \neq (0,0) $, $ 0 \leq j,k \leq N-1 $, we have
\[
\sum_{x_1=0}^{N-1} \sum_{x_2=0}^{N-1}\omega^{j x_1 + k x_2}=
\sum\limits_{(x_1,x_2) \in M} \omega^{j x_1 + k x_2}  + \sum\limits_{(x_1,x_2) \in U} \omega^{j x_1 + k x_2}  = 0,
\]
hence we can rewrite \eqref{eq:L101} as
\begin{equation}\label{eq:L102}
HT^+ =  \frac{5}{4}\, \frac{N^2}{m^2 \, u} \sum_{\substack{
		j = 0 .. N-1 \\
		k = 0 .. N-1 \\
		(j,k) \neq (0,0)
}}
\frac{
	\bigg|
		\vcenter{\hbox{$\sum^{\rotatebox[origin=c]{90}{\kern1.6mm}}\limits_{(x_1,x_2) \in M}  \omega^{jx_1 + kx_2}$}}
	\bigg|^2
}{
	\sin^2  \frac{ \pi j}{N} + \sin^2  \frac{\pi k}{N}
}.
\end{equation}
Since all summands on the RHS of \eqref{eq:L102} are nonnegative, the desired bound \eqref{eq:L100} follows.
\end{proof}

\section{Concentration of sums of geometric random variables}\label{app:geomSum}

\geomSum*
\begin{proof}
Let $G$ be a geometric random variable of parameter $p$, then it has expectation value $1/p$ and variance $(1-p)/p^2\leq 1/p^2$. Moreover $\Pr(G\leq  k)=1-(1-p)^k$ for all $k\in \N$. In particular the probability that $\Pr(\lfloor 1/(2p)\rfloor< G\leq  \lfloor 2/p\rfloor)=(1-p)^{\lfloor 1/(2p)\rfloor}-(1-p)^{\lfloor 2/p\rfloor}\geq 7/16$.
%Checked by Mathematica:
%f[p_] := (1 - p)^(Floor[1/(2 p)]) - (1 - p)^(Floor[2/p]); f[1/2]
%Plot[f[p], {p, 0, 1}]
More generally $Z$ has negative binomial distribution.
One can check that for every $t\in [7]$ and all $p\in(0,1]$ we have that $\Pr(\lfloor t/(2p)\rfloor< Z\leq  \lfloor 2t/p\rfloor)\geq 7/16$, see Figure~\ref{fig:ProbPlots}.
%Checked by Mathematica:
%g[p_, r_] := CDF[NegativeBinomialDistribution[r, p], Floor[2 r/p] - r] - CDF[NegativeBinomialDistribution[r, p], Floor[r/(2 p)] - r]
%Plot[Table[g[p, r], {r, 1, 7}] // Evaluate, {p, 0, 1}, PlotRange -> {13/32, 1}, ImageSize -> {800, 800/GoldenRatio}, PlotPoints -> 325, Ticks -> {Automatic, {7/16, 10/16,13/16, 1}}, PlotLegends -> Placed[LineLegend[Table["t=" <> IntegerString[k], {k, 1, 7}]], {0.95, 0.25}]]

On the other hand the variance of $Z$ is at most $\frac{t}{p^2}$, so 
$ \Pr\left(\left|Z-t/p\right| \geq \frac{t}{2p} \right)\leq \frac{4}{t}$ by Chebyshev's inequality, which implies the claim for $t\geq 8$.
\end{proof}

\begin{figure}[ht]
	\hskip	0.00\linewidth\includegraphics[draft=false,width=0.93\linewidth]{./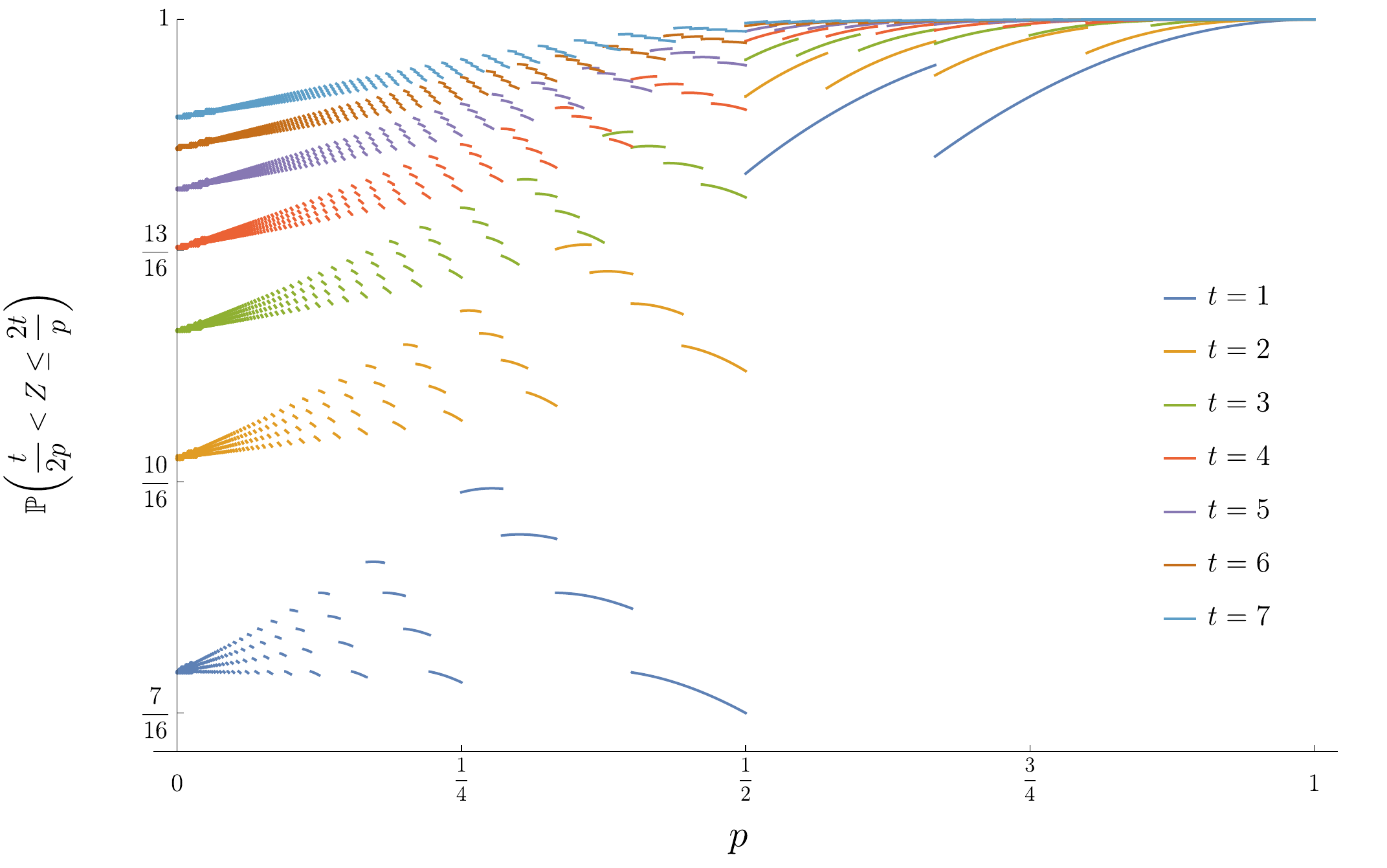}
	\caption{Illustration of Lemma~\ref{lemma:geomSum} for $t=1,2,\ldots, 7$}\label{fig:ProbPlots}
\end{figure}

\end{document}
	
	\providecommand\mywordcount{
		%Need to run with --shell-escape (or --enable-write18 for MiKTeX) flag. (E.g. pdflatex --shell-escape)
		\ifcount
		$ \phantom{\sum}$ \\ \noindent\textbf{\large Wordcount} \\ $\phantom{\sum}$ \\
		\noindent\input|"if [ -f /home/gilyen/texcount.pl ]; then /home/gilyen/texcount.pl -sub=section \jobname.tex | grep -e Words -e Number -e Section -e top -e Part | awk 1 ORS='\string\\\string\\' | sed -e 's/\string\_/ /g'; else texcount -sub=section \jobname.tex | grep -e Words -e Number -e Section -e top -e Part | awk 1 ORS='\string\\\string\\' | sed -e 's/\string\_/ /g'; fi"
		text+headers+captions (\#headers/\#floats/\#inlines/\#displayed)\\
		\else
		%Warning: Counter temporarily disabled to speed up compiling!
		\fi
	}
	
	%\mywordcount	
	
\end{document}